\documentclass[a4paper,fleqn]{cas-dc}

\usepackage{xpatch}

\ExplSyntaxOn

\cs_set:Npn \__first_footerline: {}

\xpatchcmd{\author}
{\l_stm_augroup_color_tl !50}
{\l_stm_augroup_color_tl}
{}{}

\xpatchcmd{\author}
{\l_stm_augroup_color_tl !50}
{\l_stm_augroup_color_tl}
{}{}

\ExplSyntaxOff

\usepackage[numbers]{natbib}

\usepackage{graphicx}
\usepackage{multirow}
\usepackage[normalem]{ulem}
\usepackage{amsmath,amssymb,amsfonts}
\usepackage{amsthm}
\usepackage{mathrsfs}
\usepackage[title]{appendix}
\usepackage{xcolor}
\usepackage{textcomp}
\usepackage{manyfoot}
\usepackage{booktabs}
\usepackage{algorithm}
\usepackage{algorithmicx}
\usepackage{algpseudocode}
\usepackage{listings}
\usepackage{bm}
\usepackage{multirow}
\usepackage{moresize}
\usepackage{algorithm}
\usepackage{algpseudocode}
\usepackage{glossaries-extra}
\usepackage{float}
\usepackage{cancel}
\usepackage{subfig}
\usepackage{framed}
\usepackage{dashbox}
\usepackage{url}
\usepackage[
    n,
    operators,
    advantage,
    sets,
    adversary,
    landau,
    probability,
    notions,    
    logic,
    ff,
    mm,
    primitives,
    events,
    complexity,
    asymptotics,
    keys]{cryptocode}

\usepackage{moresize}
\usepackage{algorithm}
\usepackage{algpseudocode}
\usepackage[dvipsnames]{xcolor}
\usepackage{threeparttable}

\newtheorem{theorem}{Theorem}%  meant for continuous numbers
\newtheorem{proposition}{Proposition}

\newtheorem{definition}{Definition}
\newtheorem{lemma}{Lemma}

\newtheorem{claim}{Claim}
\newcommand{\Adv}{\mathsf{Adv}}
\newcommand{\ngl}{\mathsf{negl}}

\def\tsc#1{\csdef{#1}{\textsc{\lowercase{#1}}\xspace}}
\tsc{WGM}
\tsc{QE}

\begin{document}
\let\WriteBookmarks\relax
\def\floatpagepagefraction{1}
\def\textpagefraction{.001}

% Short title
\shorttitle{}    

% Short author
\shortauthors{}  

% Main title of the paper
\title [mode = title]{Pushing Forward Multi-Secret-Key Homomorphic Encryption for Private Average Aggregation}  

\author[1]{Miguel Morona-Mínguez}[orcid=0009-0000-3759-1420]%[<options>]

% Corresponding author indication
%\cormark[1]

% Footnote of the first author
%\fnmark[1]

% Email id of the first author
\ead{mmorona@gts.uvigo.es}

% URL of the first author
%\ead[url]{}

% Credit authorship
% eg: \credit{Conceptualization of this study, Methodology, Software}
\credit{Formal analysis, Investigation, Software, Validation, Visualization, Writing - Original draft preparation}

% Address/affiliation
\affiliation[1]{organization={atlanTTic, Universidade de Vigo},
            addressline={Escola de Enxeñaría de Telecomunicación}, 
            city={Vigo},
%          citysep={}, % Uncomment if no comma needed between city and postcode
            postcode={36310}, 
            state={Galicia},
            country={Spain}}

\author[1]{Fernando Pérez-González}[orcid=0000-0002-0568-1373]%[]

% Footnote of the second author
%\fnmark[2]

% Email id of the second author
\ead{fperez@gts.uvigo.es}

% URL of the second author
%\ead[url]{}

% Credit authorship
\credit{Formal analysis, Funding acquisition, Methodology, Resources, Supervision, Validation, Writing - review and editing}

\author[1]{Alberto Pedrouzo-Ulloa}[orcid=0000-0002-7283-6275]%[]
\cormark[1]
% Footnote of the third author
%\fnmark[3]

% Email id of the third author
\ead{apedrouzo@gts.uvigo.es}

% URL of the third author
%\ead[url]{}

% Credit authorship
\credit{Conceptualization of this study, Formal analysis, Investigation, Methodology, Supervision, Validation, Software, Writing - Original draft preparation, Writing - review and editing}

% Corresponding author text
\cortext[1]{Corresponding author}

% Footnote text
%\fntext[1]{}

% For a title note without a number/mark
%\nonumnote{}

\begin{abstract}
Federated Learning enables multiple clients to train a shared model while keeping their local datasets isolated. However, the exchanged model updates may still leak sensitive information, making private aggregation a central building block in practical deployments, especially in the cross-silo setting. Homomorphic Encryption naturally fits the client--aggregator communication pattern of Federated Learning, but conventional single-key deployments rely on strong non-collusion assumptions. Multiparty Homomorphic Encryption removes this limitation, although recent attacks under restricted decryption access require large-variance smudging noise during collaborative decryption, which significantly increases ciphertext size and implementation complexity. In this work, we propose lightweight multi-secret-key protocols for private average aggregation based on RLWE-based Homomorphic Encryption. Our construction departs from the usual multiparty blueprint by avoiding the generation of a collective public key. Instead, each client encrypts its update under its own secret key, while the resulting ciphertexts remain compatible with homomorphic aggregation and collaborative decryption. By explicitly tracking and cancelling the ciphertext noise during decryption, the protocol removes the need for large $\lambda$-dependent smudging noise. We instantiate the construction with both exact BFV-based and approximate CKKS-based variants, prove its security in the semi-honest model against an adversary corrupting the aggregator and up to $L-1$ clients, and compare its communication and runtime performance with state-of-the-art MHE-based aggregation. Our results show that the proposed approach substantially reduces ciphertext expansion and online cost, while preserving practical homomorphic aggregation performance.
\end{abstract}

% Use if graphical abstract is present
%\begin{graphicalabstract}
%\includegraphics{}
%\end{graphicalabstract}

% Keywords
% Each keyword is seperated by \sep
\begin{keywords}
Federated Learning (FL) \sep Private Aggregation \sep Homomorphic Encryption (HE) \sep Ring Learning with Errors (RLWE) \sep Multi-Key Encryption
\end{keywords}

\maketitle

\section{Introduction}
\label{sec:intro}
Federated Learning (FL), first introduced in~\cite{MMRHA17}, is a Machine Learning (ML) framework in which multiple clients collaboratively address a learning task under the coordination of a central server, typically referred to as the \emph{aggregator}. A distinctive property of FL is that the training data are distributed across different clients, such that each dataset remains stored locally and never leaves the corresponding client’s premises. While FL was initially proposed in~\cite{MMRHA17} as a decentralized collaborative solution to avoid the prohibitive communication cost of collecting training data from many clients, it also provides a natural mechanism to reinforce data control and privacy protection by keeping local datasets isolated within each client’s premises~\cite{Kairouzetal21}.

\textbf{Federated Learning process:} Starting from an initial common ML model parameterized by $\vec{w}^*$, an FL protocol typically proceeds through successive communication rounds.  In each round, $L$ participating clients, indexed by $i \in \{1,\ldots,L\}$, perform the following steps: (1) clients locally update the common model by partially training it using only their individual datasets, thereby producing local updates $\vec{w}_i$, and (2) the central server aggregates all received local updates to compute a new global model. This aggregation step is usually performed by evaluating an aggregation function of the form $\vec{w}^* = f_{\Sigma}(\vec{w}_1,\ldots,\vec{w}_L)$. The resulting global model is then distributed to the clients for the next communication round. The process is repeated until a predefined convergence criterion is satisfied.

\textbf{Privacy issues:} Unfortunately, even though clients’ raw data are not explicitly shared during training, privacy risks in FL still persist. Several works have shown that the exchanged local model updates may leak information about the data used during training, making them vulnerable to different types of inference attacks, particularly when adversaries have white-box access to both individual and aggregated updates~\cite{NSH19,LF20}. This issue is especially relevant for the aggregator, which, unlike other participants, has access to all individual updates sent by the clients.

In view of this situation, additional protection mechanisms are required to mitigate residual privacy leakage, highlighting the importance of incorporating additional Privacy-Enhancing Technologies (PETs)~\cite{PRPLDCGUHMPLLMJ23} into the FL pipeline. In particular, private aggregation solutions~\cite{Bonawitzetal17} prevent any single participant from accessing individual local updates, thereby reducing the level of trust required from honest participants, especially with respect to the aggregator.

As surveyed in~\cite{MOJC23}, a wide range of PETs can be employed for this purpose. However, no single candidate suffices in all cases, since different PETs exhibit distinct trade-offs in terms of communication, computation, and robustness. Moreover, FL deployments differ in their requirements and are commonly categorized into two high-level settings: the \emph{cross-device} setting, in which the model is trained using data from many intermittently available and resource-constrained devices, and the \emph{cross-silo} setting, in which the model is trained using data from a smaller number of always-available and computationally powerful servers.

\textbf{Scope of our work:} In this paper, we focus on the cross-silo setting, in which a set of stateful clients collaboratively train and improve a shared model using their collective data, while the aggregator coordinates the process but is not necessarily allowed to learn either intermediate or final model parameters. First, we analyze in detail the privacy and functional requirements that private aggregation mechanisms must satisfy in the cross-silo FL context; a review of existing definitions in the literature is provided in Section~\ref{sec:background1}. Second, we concentrate on a specific PET, namely Homomorphic Encryption (HE)~\cite{MTBH21,AHSL22} and, after studying its security properties, focus on optimizing its use for private aggregation in cross-silo FL scenarios.

\subsection{FL Privacy via Homomorphic Encryption}

HE-based private aggregation naturally aligns with FL's client–aggregator communication pattern. First, clients encrypt their local updates. Then, the aggregator homomorphically evaluates the aggregation function $f_{\Sigma}(\vec{w}_1,\ldots,\vec{w}_L)$. Finally, the clients decrypt the aggregated update.

\textbf{Multiple keys and HE-based aggregation:} Several works, such as~\cite{ZLXWYL20,FQ21}, consider a simple deployment scenario in which all clients share the same secret and public keys. This implicitly imposes a strong non-collusion assumption between the aggregator and any party with access to the secret key (see Section~\ref{sec:background2}). To relax this assumption, HE schemes can be extended to support ciphertexts encrypted under multiple keys~\cite{AHSL22}. While several approaches already allow the homomorphic aggregation of ciphertexts encrypted under different keys~\cite{MNSL22,PBCSSZ23}, the most versatile class of solutions corresponds to threshold variants of single-key HE. A representative example is Multiparty Homomorphic Encryption (MHE), as proposed in~\cite{MTBH21}, where a set of secret keys and a collective public key are generated during an offline setup phase prior to the execution of the FL protocol~\cite{MPP24}. These multi-key constructions require decryption to be performed collaboratively by all participating clients (see Section~\ref{sec:mhe}).

\subsection{Contributions and improvements over~\cite{MPP24}}

In our preliminary work~\cite{MPP24}, we surveyed the MHE state of the art and analyzed its privacy vulnerabilities in the FL setting. We showed how MHE can be securely employed for private aggregation. In particular, removing the non-collusion assumption among participants when using MHE, whether based on CKKS or BFV, requires adding a smudging noise with variance proportional to $2^\lambda$ during decryption. Its central result was a detailed theoretical and experimental comparison of two MHE instantiations based on approximate CKKS and exact BFV single-key HE schemes, showing that CKKS-based MHE can match or even exceed the performance of BFV-based alternatives. This contrasts with previous observations in works such as~\cite{CSBBC24}, particularly when a large smudging noise is required.

For a statistical security parameter $\lambda$, the use of a smudging noise with variance $2^\lambda$ times larger introduces an overhead of $\mathcal{O}(\lambda)$ bits in the size of the encrypted updates exchanged during each aggregation round. For practical and cryptographically secure choices, such as $\lambda \geq 128$, this overhead can easily double the transmitted data size. Beyond its direct impact on efficiency, this also introduces additional implementation challenges, since noise generation must handle values exceeding the typical machine word size.

\textbf{Main contributions:} The efficiency degradation caused by the large smudging noise motivates the search for alternative HE-based solutions that preserve security while avoiding this additional cost. In this work, we propose more efficient and lightweight private aggregation protocols based on two families of HE schemes supporting multiple secret keys, corresponding to approximate and exact HE. Following~\cite{PBCSSZ23}, our approach relies on a lightweight setup phase using additive secret sharing. More importantly, it departs from the blueprint of~\cite{MTBH21,MPP24} by eliminating the need to generate a collective public key. The main ideas and contributions of this work can be summarized as follows:
\begin{itemize}
\item We introduce a new protocol tailored to average aggregation that extends single-key HE schemes, either approximate CKKS-based or exact BFV-based, to support multiple secret keys. This design removes the $\mathcal{O}(\lambda)$ overhead by allowing the use of a smudging noise whose variance is independent of $\lambda$. As in~\cite{PBCSSZ23}, the protocol is limited to homomorphic averaging, yet provides important advantages over the state of the art. First, it does not require a public key. Second, each client encrypts updates using its own secret key, which allows more compact ciphertexts and slightly improves performance. Third, the smudging noise has the same variance as the cryptosystem error distribution and is independent of $\lambda$. This property enables a significant reduction in ciphertext size compared to previous approaches~\cite{MPP24,PBCSSZ23}.
\item We formally prove the security and privacy of the proposed multi-secret-key protocol for private average aggregation using a simulation-based argument, assuming a semi-honest aggregator and a majority of semi-honest clients.
\item We provide concrete estimates of the overhead introduced by the proposed HE-based protocol and compare them with state-of-the-art MHE approaches, which incur an asymptotic overhead of $\mathcal{O}(\lambda)$ bits. We extend the performance evaluation of~\cite{MPP24} by presenting more detailed experimental results comparing approximate and exact MHE aggregation. In addition, we compare both approaches with our proposed secret-key HE-based aggregation in terms of bit precision and ciphertext size. We also analyze the parameter regimes in which the approximate or exact variants are preferable and report concrete implementation runtimes across practical parameter ranges.
\end{itemize}

Although our solutions are primarily designed for the cross-silo setting, they naturally extend to cross-device scenarios by re-running a lightweight setup phase per aggregation round. This allows the protocol to accommodate intermittent client participation. The additional setup phase incurs minimal overhead, provided that clients have sufficient computational and memory resources to perform additive share generation, encryption, and partial decryption.

\subsection{Notation and structure}
Polynomials are denoted by lowercase letters, omitting the polynomial variable when there is no ambiguity, for example, writing $a$ instead of $a(x)$. We also represent polynomials as column vectors of coefficients $\vec{a}$.  Scalar vectors are denoted with arrows (e.g., $\vec{v}$), while vectors whose entries are polynomials are denoted in boldface (e.g., $\mathbf{v}$). Let $R_q = \mathbb{Z}_q[x]/(1 + x^n)$, where $q \in \mathbb{N}$ and $n$ is a power of two, denote the polynomial ring modulo $1 + x^n$ with coefficients in $\mathbb{Z}_q$, represented in the interval $(-q/2, q/2]$. Let $\chi$ denote the error distribution over $R_q$, whose coefficients are independently sampled from a discrete Gaussian distribution with standard deviation $\sigma$ and truncated support $[-B, B]$. For a finite set $\mathcal{S}$, the notation $x \leftarrow \mathcal{S}$ denotes uniform sampling from $\mathcal{S}$, while $x \leftarrow \chi$ denotes sampling from the distribution $\chi$. Finally, both $||\vec{a}||$ and $||a||$ denote the infinity norm of $\vec{a}$.

The rest of the document is organized as follows: Section~\ref{sec:background1} surveys existing security definitions for private aggregation. Section~\ref{sec:background2} reviews approximate and exact single-key HE schemes and their application to private aggregation under simpler threat models, including a discussion of privacy vulnerabilities arising from restricted decryption access. Section~\ref{sec:mhe} introduces Multiparty HE as an extension of HE supporting multiple secret keys in the FL setting and analyzes current countermeasures against the attacks described in~\cite{CSBBC24,CCPSS24}, together with analytical comparisons between Multiparty BFV and Multiparty CKKS. Section~\ref{sec:prop-opt} presents our proposed solution, which does not rely on a large-variance smudging error. Section~\ref{sec:perf-eval} evaluates and compares the performance of approximate and exact MHE with our proposed approach. Finally, Section~\ref{sec:conc-fut} concludes the paper and outlines future research directions.

\section{Background on Aggregation Threat Models}
\label{sec:background1}
This section surveys the definitions typically considered in the literature for private aggregation and examines how they align with the actual requirements of FL. Since this primitive was not originally introduced in the FL context, we first recall, in Section~\ref{sec:reqaggfunc}, the commonly used threat-model definitions, and then revisit them in Section~\ref{sec:suitabilitypriaggdef} to identify which assumptions should be refined for the FL setting considered in this work.

As introduced in Section~\ref{sec:intro}, the private computation of the aggregation functionality $f_\Sigma(\vec{w}_1, \ldots, \vec{w}_L)$ has become a key tool for mitigating privacy leakage from the exchange of ML model updates in FL protocols~\cite{Kairouzetal21, NSH19}. While numerous private aggregation algorithms have been proposed in the literature~\cite{MOJC23}, covering various properties and formal security definitions~\cite{SCRCS11, MOJC23}, the appropriate definition of a private aggregation primitive depends on the \emph{specific threat model} and the \emph{particular requirements} of the target FL training pipeline. In light of this, we aim to clarify next not only which functions $f_\Sigma$ are typically considered, but also what is commonly understood by computing them privately.

This discussion also motivates the threat models and security limitations analyzed in Section~\ref{sec:background2} for a first, straightforward HE-based implementation of private aggregation in FL. More advanced threat models, involving stronger semi-honest adversaries and requiring the use of multiple secret keys, are considered in Sections~\ref{sec:mhe} and~\ref{sec:prop-opt}.

\subsection{Threat Models for Private Aggregation}
\label{sec:reqaggfunc}

In the FL literature, the aggregation of the local updates sent by the clients to the aggregator is commonly based on linear functions~\cite{Kairouzetal21}. We denote by $f_\Sigma(\vec{w}_1,\ldots,\vec{w}_L) = \sum_{i=1}^{L}\lambda_i\vec{w}_i$ a general linear aggregation function, where the $\lambda_i$, $1=1, \ldots, L$, are arbitrary public coefficients. This formulation includes, as common special cases, the sum, obtained for $\lambda_i=1$ for all $i$, and the average, obtained for $\lambda_i=1/L$ for all $i$. Most private aggregation proposals aim to securely compute a linear function of this form~\cite{MOJC23}.

The term \textit{secure aggregation} first appeared outside FL, in the context of Wireless Sensor Networks (WSNs), for the private aggregation of data generated by multiple nodes~\cite{HE03,MOJC23,SCRCS11}. It was later applied to smart grids and smart meters~\cite{OM07, CMT05, KDK11}, leading to a growing body of related solutions, including privacy-preserving aggregation~\cite{SCRCS11}, privacy-friendly aggregation~\cite{KDK11}, and private stream aggregation~\cite{CSS12}. The first work discussing secure aggregation in the FL setting appeared in~\cite{Bonawitzetal17}. For more on the history of secure aggregation and its applications beyond FL, we refer the reader to~\cite{MOJC23}. As that work highlights, \textbf{current secure aggregation protocols for FL still follow the blueprint established by early WSN solutions}~\cite{MOJC23}, which continues to shape the requirements considered today.

In particular,~\cite{MOJC23} outlines two main security definitions that an ideal secure aggregation protocol should satisfy under a strict threat model involving dishonest adversaries. While our focus here is on the semi-honest model,\footnote{A participant is semi-honest if they follow the protocol but try to infer additional information from the messages received.} we include for completeness the corresponding definition for malicious behaviors. These notions serve as a baseline for the FL-specific discussion in Section~\ref{sec:suitabilitypriaggdef}. We begin with their requirement for semi-honest adversaries:
\begin{definition}[Aggregator Obliviousness~\cite{SCRCS11}. Adapted with major changes from~\cite{MOJC23} and~\cite{SCRCS11}]\label{def:aggobliviousness} This definition captures the following security notions:
\begin{itemize}
\item An honest-but-curious aggregator cannot learn more than what can be inferred from the sum of the clients’ inputs (``aggregator capability''). 
\item Clients alone (even when colluding among them) cannot learn anything about the aggregation result or the updates of honest clients (as they lack the ``aggregator capability'').
\item If clients collude with the aggregator by sharing their private inputs, the adversary learns nothing beyond what can be inferred from the aggregated value and the inputs of the colluding clients.
\end{itemize}
This definition is formalized in~\cite{SCRCS11} via a security game (Aggregator Obliviousness or AO security game), which can be extended to cover more general statistics. We refer the reader to~\cite{SCRCS11} for further details. 
\end{definition}

Beyond the semi-honest case,~\cite{MOJC23} introduces the following notion for entities that may deviate from the protocol:
\begin{definition}[Aggregate Integrity~\cite{MOJC23}. Adapted with minor changes from~\cite{MOJC23}]\label{def:aggintegrity}
This security notion ensures that:
\begin{itemize}
\item A malicious aggregator cannot forge an incorrect aggregation result without being detected by the clients. A forgery is defined as an aggregation outcome that differs from the true sum of the clients’ inputs.
\item A group of malicious clients cannot compromise the aggregation result as long as a sufficient number of honest clients (above a certain threshold) participate. A compromise is defined as a significant deviation of the aggregation outcome from the sum of the honest clients’ inputs.
\end{itemize}
\end{definition}

Implementing secure aggregation primitives satisfying Definition~\ref{def:aggintegrity} typically requires mechanisms for clients to verify the aggregator's behavior~\cite{GLLGHDB21} and ensure that dishonest clients' inputs meet validity conditions~\cite{LBVKH23}. While such verification helps limit malicious-client drift, linear aggregation functions are not well-suited to Byzantine clients, since even one Byzantine participant can manipulate the result and prevent convergence of the trained FL model~\cite{BEGS17}.

A common alternative is to replace the average with a more robust aggregation function, such as the median, trimmed mean, or multi-Krum~\cite{BEGS17}, among others~\cite{FGGPS22}. However, it remains unclear whether these robust functions are compatible with existing secure aggregation primitives~\cite{MOJC23}, which are primarily designed for linear aggregation and, as discussed, remain vulnerable to Byzantine clients.

Due to these limitations, \emph{we consider a passive threat model where clients behave semi-honestly. Moreover, since average functions are widely adopted in practical FL deployments, this work focuses on the private implementation of conventional linear aggregation functions}.

\subsection{FL-tailored Private Aggregation Definitions}
\label{sec:suitabilitypriaggdef}

We now revisit Definitions~\ref{def:aggobliviousness} and~\ref{def:aggintegrity} from the perspective of FL. Although these definitions provide clear guidelines regarding the properties a secure aggregation functionality should satisfy, they are strongly influenced by the context of wireless sensor networks, where a third party computes the aggregation of data from multiple nodes with no further interaction between the nodes and the aggregator. While this setting does not exactly align with the specific requirements of FL, relevant surveys such as~\cite{Kairouzetal21} already assume secure aggregation as a foundational component in FL, typically requiring that the server \emph{learn no more than an aggregated function} of the clients’ inputs. We therefore examine whether all the properties of this building block are truly suitable for FL, by revisiting the general definition provided in~\cite{Kairouzetal21}:
\begin{definition}[FL definition taken literally from~\cite{Kairouzetal21}]\label{def:FL}
``Federated learning is a machine learning setting where multiple entities (clients) collaborate in solving a machine learning problem, under the coordination of a central server or service provider. Each client's raw data is stored locally and not exchanged or transferred; instead, focused updates intended for immediate aggregation are used to achieve the learning objective.''
\end{definition}

We note that Definition~\ref{def:FL} does not explicitly require the aggregator to learn the intermediate aggregation result during training. Rather, its role is to coordinate the aggregation process. This distinction is important for determining which properties a private aggregation functionality should satisfy in FL. First, allowing the aggregator to inspect aggregated updates at each round increases the privacy leakage already associated with the exchange of model updates~\cite{Kairouzetal21, NSH19}. From a privacy perspective, this exposure should therefore be minimized. Second, in FL, the aggregator coordinates the aggregation rounds and forwards the result to the clients, who are the actual recipients of the intermediate aggregation outputs. In contrast, Definition~\ref{def:aggobliviousness} assumes a setting where clients do not receive the aggregation result. Early secure aggregation protocols thus appear to follow a one-way communication model from clients to the aggregator, always revealing the aggregated output to the server.

As a result of this modeling choice, several potentially valid approaches are often excluded, even though their guarantees may be appropriate in practical FL applications. A clear example is the cross-silo FL setting~\cite{ZLXWYL20}, where clients typically have more computational resources and influence over training than in cross-device settings. Such deployments often involve institutions, such as hospitals or financial organizations, collaborating over distributed private datasets~\cite{CVC+23,FTR+21,IGM22}. In this setting, clients are often considered less adversarial, while the aggregator remains a high-risk entity due to its access to intermediate updates during training. Accordingly, recent works, including~\cite{CVC+23,UNGuideOnPETs23}, frequently adopt more relaxed threat models for clients. In hospital collaborations, for instance, institutions may be reasonably trusted because they are known to each other and can establish agreements before training; legal safeguards and shared interests further support semi-honest behavior.

In addition to private aggregation, complementary countermeasures such as Differential Privacy (DP), or metrics to evaluate privacy risks in the aggregated result (output privacy)~\cite{UNHandbookPETs23}, can be applied beforehand by the clients to decide whether to participate in training~\cite{PRPLDCGUHMPLLMJ23}. We treat this issue as independent from privately computing the aggregation itself (input privacy)~\cite{UNHandbookPETs23}. Therefore, we focus on the private computation of $\vec{w}^* = f_\Sigma(\vec{w}_1, \ldots, \vec{w}_L)$, regardless of how much information the aggregated result may reveal about the training data (output privacy)~\cite{MOJC23,UNGuideOnPETs23}. In particular, in our preliminary work~\cite[Definition~2]{MPP24}, we considered the following general multiparty aggregation problem~\cite{MTBH21}, where multiple clients, each owning a locally stored and isolated dataset, collaborate using their private local updates:
\begin{definition}[Adapted from Def.~$1$ in~\cite{MTH19}]\label{def:mpcagg}
Let $\mathcal{C} = \{C_1, C_2, \ldots, C_L\}$ be a set of $L$ clients, where each client $C_i$ holds an input $\vec{w}_i$ and receives an output $\vec{w}^*$. Let $f_\Sigma(\vec{w}_1, \vec{w}_2, \ldots, \vec{w}_L) = \vec{w}^*$ be the average aggregation function ($f_\Sigma$ is referred to as the ideal functionality) over the input parties. Let $\mathcal{A}$ be a static semi-honest adversary that can corrupt up to $L - 1$ clients in $\mathcal{C}$, and let $\mathcal{C_{A}}$ be the subset of clients corrupted by $\mathcal{A}$. Then, the \textbf{secure multiparty aggregation problem} consists in providing $\mathcal{C}$ with $\vec{w}^*$, ensuring that $\mathcal{A}$ learns nothing more about $\{ \vec{w}_i \}_{C_i \notin \mathcal{C_{A}}}$ than what can be inferred from the inputs $\{ \vec{w}_i \}_{C_i \in \mathcal{C_{A}}}$ and the output $\vec{w}^*$ it controls. This condition ensures input privacy.
\end{definition}

The formulation in Definition~\ref{def:mpcagg} deliberately abstracts away from the aggregator, even though the aggregator is part of the FL protocol and may participate in the computation for computational or communication efficiency. This omission in~\cite{MPP24} emphasized that, in the cross-silo setting~\cite{ZLXWYL20}, clients are the sole owners of the data whose privacy must be protected and the recipients of the intermediate aggregation results during FL training. In this setting, the aggregator acts neither as an input party nor, in general, as a receiver of the intermediate outputs. However, in certain aggregation executions, such as at the end of training, \textbf{the aggregator may also receive the aggregated result.} For this reason, we refine the formulation here to reflect that possibility:
\begin{definition}[Adapted from Def.~\ref{def:mpcagg} to our FL setting]\label{def:mpcaggfl}
Let $Agg$ be the aggregator, which may act as a receiver party when applicable, and let $\mathcal{C} = \{C_1, C_2, \ldots, C_L\}$ be a set of $L$ clients, where each client $C_i$ holds an input $\vec{w_i}$ and acts as an input and receiver party. Let $f_\Sigma(\vec{w}_1, \vec{w}_2, \ldots, \vec{w}_L) = \vec{w}^*$ be the average aggregation function, referred to as the ideal functionality, over the input parties. Let $\mathcal{A}$ be a static semi-honest adversary that may corrupt up to $L - 1$ clients in $\mathcal{C}$, and potentially $Agg$ as well. Let $\mathcal{C_{A}}$ denote the subset of clients corrupted by $\mathcal{A}$. Then, the \textbf{secure multiparty aggregation problem in FL} consists of providing $\vec{w}^*$ to the clients in $\mathcal{C}$, and to $Agg$ if required in that execution, while ensuring that $\mathcal{A}$ learns nothing more about $\{ \vec{w}_i \}_{C_i \notin \mathcal{C_{A}}}$ than what can be inferred from the corrupted inputs $\{ \vec{w}_i \}_{C_i \in \mathcal{C_{A}}}$ and the output $\vec{w}^*$ it observes. In this problem, $Agg$ coordinates and assists the clients in the computation of $\vec{w}^*$. This condition ensures input privacy.
\end{definition}

A solution to the \textbf{secure multiparty aggregation problem in FL} consists of a protocol $\pi_{f_\Sigma}$ that realizes the ideal functionality $f_\Sigma$ while preserving input privacy~\cite{Lindell17}. This definition makes clear that, while the aggregator may be involved in the execution of $\pi_{f_\Sigma}$ throughout the training process, it does not necessarily need access to the clients’ inputs. In some cases, output access can be allowed, e.g., access to the final trained model. Consequently, Definition~\ref{def:mpcaggfl} does not categorically prevent the aggregator from learning the output. Rather, it separates that aspect explicitly from input privacy. As we will show in the following sections, this distinction can significantly impact the implementation of $\pi_{f_\Sigma}$ when relying on modern HE schemes. Note also that some HE-based instantiations compute an approximation of $f_\Sigma$. For such instantiations, we keep $f_\Sigma$ as the exact mathematical aggregation functionality and distinguish input privacy from approximate correctness: the protocol output may be a bounded, randomized approximation of $f_\Sigma$, while the privacy guarantee is still defined with respect to the exact functionality $f_\Sigma$.

\section{Background on HE for FL}
\label{sec:background2}
The role of the aggregator discussed in Definitions~\ref{def:mpcagg} and~\ref{def:mpcaggfl} resembles that of a third party in outsourced computation, to whom clients delegate computational tasks. Moreover, as discussed above for Definition~\ref{def:aggobliviousness}, previous research~\cite{ZLXWYL20} has pointed out how certain FL settings, such as cross-silo, impose more stringent privacy requirements on the aggregator. A natural way to mitigate privacy leakage in such outsourcing scenarios is HE, which has already been considered for average aggregation in several works~\cite{SCRCS11,BJL16,ZLXWYL20,MSMSGS21}. Notably, HE-based aggregation is particularly well suited to these settings, as it satisfies these privacy constraints while imposing fewer constraints on the underlying FL system architecture than alternative approaches.

\textbf{Lattice-based HE:} Although HE has previously been considered in the context of FL, the adoption of modern lattice-based HE schemes, such as BFV, BGV, and CKKS~\cite{ACCDGGHH19,BCCCCDGHKKLLMPPLSYY24}, remains relatively recent~\cite{SPTFBSH21,SCSSG23}. Beyond their efficiency in implementing linear functions over encrypted long vectors via SIMD (Single Instruction Multiple Data), lattice-based HE schemes offer additional advantages. First, they are widely believed to be resilient against quantum adversaries; indeed, most proposals selected in the NIST Post-Quantum Cryptography Standardization Process are lattice-based~\cite{NIST-PQC}. Second, they provide sufficient expressiveness to homomorphically evaluate more complex functions than earlier schemes such as Paillier~\cite{Paillier99} or ElGamal~\cite{CGS97,Elgamal85}. As a result, while most prior work on private aggregation focuses on average computations~\cite{MOJC23}, recent studies have begun leveraging lattice-based HE to support more complex, Byzantine-resilient aggregation functions~\cite{CGPSSZ24}.

However, lattice-based HE schemes introduce challenges under multiple keys. We discuss them for public-key schemes in Section~\ref{sec:mhe}, and then introduce our proposed symmetric-key protocol in Section~\ref{sec:prop-opt}. Before that, as argued in Section~\ref{sec:suitabilitypriaggdef}, relaxed threat models can be addressed using single-key HE. In particular, when private channels exist between the aggregator and the clients and no collusions are assumed, single-key HE remains the simplest, most efficient. It is in this setting that our distinction in Definition~\ref{def:mpcaggfl}, on whether the aggregator learns the output, becomes crucial for protocol efficiency. We first recall HE building blocks in Section~\ref{sec:heblocks}. We then describe their use for private aggregation under a relaxed threat model with private channels in Section~\ref{sec:singlekeyhepriagg}. Finally, Section~\ref{sec:ind-cpa-d} analyzes the security and efficiency implications of aggregator access to the output.

\subsection{HE building blocks}
\label{sec:heblocks}

Since we focus on linear aggregation functions $\vec{w}^{*}=f_{\Sigma}(\vec{w}_1,\ldots,\vec{w}_L)=\sum_{i=1}^{L}\lambda_i\vec{w}_i$, with averaging being the most common in FL, we restrict ourselves to additive HE schemes.
\begin{definition}[Additive Homomorphic Encryption]\label{def:addhe}
    Let $\mathsf{E} = \{\mathsf{Setup}, \mathsf{SecKeyGen}, \mathsf{PubKeyGen}, \mathsf{Enc}, \mathsf{Dec}\}$ be an asymmetric encryption scheme, whose security is parameterized by $\lambda$. An additive homomorphic encryption scheme extends $\mathsf{E}$ with the $\mathsf{Add}$ procedure, having $\mathsf{E_{ahe}} = \{\mathsf{E}.\mathsf{Setup},$ $ \mathsf{E}.\mathsf{SecKeyGen}, \mathsf{E}.\mathsf{PubKeyGen}, \mathsf{E}.\mathsf{Enc}, \mathsf{E}.\mathsf{Dec}, \mathsf{Add}\}$, where
    \begin{itemize}
        \item Let $c_{\vec{v}}$ and $c_{\vec{w}}$ be two ciphertexts encrypting, respectively, $\vec{v}$ and $\vec{w}$. The procedure $\mathsf{Add}$ satisfies:
        \begin{equation} \label{eq:correctness-add}
            \mathsf{E_{ahe}}.\mathsf{Dec}(\mathsf{E_{ahe}}.\mathsf{Add}(c_{\vec{v}}, c_{\vec{w}})) = \vec{v} + \vec{w}.
        \end{equation}
    \end{itemize}
\end{definition}

We choose two \textit{state-of-the-art} RLWE (Ring Learning with Errors)-based cryptosystems to instantiate the $\mathsf{E}_\mathsf{ahe}$ scheme of Definition~\ref{def:addhe}. Specifically, they correspond to BFV~\cite{Brakerski12,FV12} and CKKS~\cite{CKKS17}, representing exact and approximate HE schemes, respectively. We refer the reader to Table~\ref{tab:bfv-ckks} for details on their primitives and parameters.

\begin{table*}[!ht]  
    \centering
    \renewcommand{\arraystretch}{2}
    \caption{Summary of $\mathsf{BFV}$ \& $\mathsf{CKKS}$ HE Schemes.}
    \label{tab:bfv-ckks}
    \scriptsize 
    \begin{tabular}{|m{1.85in}|m{0.65in}|m{3.70in}|}
        \hline
        \multicolumn{3}{|c|}{Parameters remarks}\\
        \hline
        \multicolumn{3}{|p{6.30in}|}{
        $\mathsf{BFV}$: $R_q$ is the ciphertext ring. $\Delta = \lfloor \frac{q}{t} \rfloor$. $R_t$ is the plaintext ring, with $q > t$.
        \newline $\mathsf{CKKS}$: $R_q$ is the ciphertext ring. Given an arbitrary target margin error $\epsilon$ and $B$, we can set the scale $\Delta$.
        }\\
        \hline
        
        \multicolumn{3}{|c|}{Cryptographic Primitives}\\
        \hline
        \multirow{2}{*}{$\mathsf{\mathsf{E}_\mathsf{ahe}.Setup()} = pp$}
        & \textsf{BFV} & Select $pp \leftarrow (t, n, q, \sigma, B, \Delta)$ \\
        \cline{2-3}
        & \textsf{CKKS} & Select $pp \leftarrow (\epsilon, n, q, \sigma, B, \Delta)$ \\
        \hline
        $\mathsf{\mathsf{E}_\mathsf{ahe}.SecKeyGen}(1^\lambda) = \mathsf{sk}$
        & \multicolumn{2}{p{4.35in}|}{Given the security parameter $\lambda$, sample $s \leftarrow R_3$. Output $\mathsf{sk} = s$} \\
        \hline
        $\mathsf{\mathsf{E}_\mathsf{ahe}.PubKeyGen(sk)} = \mathsf{pk}$
        & \multicolumn{2}{p{4.35in}|}{Sample $p_1 \leftarrow R_q$, and $e_{\pk} \leftarrow \chi$. $\mathsf{pk} = (p_0, p_1) = (-s\cdot p_1 + e_{\pk}, p_1)$} \\		
        \hline
        $\mathsf{\mathsf{E}_\mathsf{ahe}.Enc}(\mathsf{pk}, m) = \mathsf{ct}$
        & \multicolumn{2}{p{4.35in}|}{Sample $u \leftarrow R_3$ and $e_0, e_1 \leftarrow \chi$. $\mathsf{ct} = (c_0,c_1) = (\Delta m + u \cdot p_0 + e_0, u \cdot p_1 + e_1)$} \\
        \hline
        $\mathsf{\mathsf{E}_\mathsf{ahe}.Add}(\mathsf{ct}, \mathsf{ct}') = \mathsf{ct}_{\mathsf{add}}$
        & \multicolumn{2}{p{4.35in}|}{Given $\mathsf{ct} = (c_0, c_1)$ and $\mathsf{ct}' = (c_0', c_1')$, $\mathsf{ct}_{\mathsf{add}} = (c_0 + c_0', c_1 + c_1')$} \\	
        \hline
        \multirow{2}{*}{$\mathsf{\mathsf{E}_\mathsf{ahe}.Dec}(\mathsf{sk}, \mathsf{ct}) = m$}
        & \textsf{BFV} & $m = {[\lfloor \frac{t}{q} {[c_0 + c_1 \cdot s]}_q \rceil]}_t$ \\
        \cline{2-3}
        & \textsf{CKKS} & $m \approx \frac{{[c_0 + c_1\cdot s]}_q}{\Delta}$ \\
        \hline
        \multicolumn{3}{|c|}{Decryption remarks}\\
        \hline
        \multicolumn{3}{|p{6.30in}|}{
        $\mathsf{BFV}$: Decryption correctness holds if $(2n +1)B < \frac{q}{2t} - \frac{t}{2}$ (see~Lemma~\ref{lemma:correct-decryption-single-key}), being $n > \delta_R$ an upper bound for the expansion factor $\delta_R$ of the ring $R$~\cite{deCJV21}.
        \newline $\mathsf{CKKS}$: After decryption, we obtain $\tilde{m} = m + e_{\mathsf{ct}} \approx m$, where $e_{\mathsf{ct}}$ is the internal error of the ciphertext, with $||e_{\mathsf{ct}}|| < \frac{(2n + 1)\cdot B}{\Delta}$.
        }\\
        \hline
    \end{tabular}
    \vspace{-0.3cm}
\end{table*}
Both schemes allow us to perform SIMD-style additions by packing a total of $n$ values.\footnote{For the more general case of SIMD-style additions and multiplications, we can pack in each separate ciphertext a total of, respectively, $n$ integer values and $n/2$ approximate complex values for BFV and CKKS.} However, they still present significant differences regarding correctness~\cite{ABMP24}. While for BFV we can choose adequate parameters which satisfy Eq.~\eqref{eq:correctness-add} for all the homomorphic additions needed to compute $f_\Sigma(\cdot)$, capturing the behavior of approximate HE requires relaxing the exact correctness of $\mathsf{E}$. This can be done by considering $\mathsf{E_{ahe}}.\mathsf{Dec}(\mathsf{E_{ahe}}.\mathsf{Add}(c_{\vec{v}}, c_{\vec{w}})) \approx \vec{v} + \vec{w}$ instead. Now, let $c_{\vec{w}_i}=\mathsf{E_{ahe}}.\mathsf{Enc}(\mathsf{pk},\vec{w}_i)$ denote an encryption of $\vec{w}_i$. Given an error tolerance $\epsilon > 0$, we say that the CKKS scheme is approximately correct, if the following holds:
\begin{equation*}
\left\lVert f_\Sigma(\vec{w}_1,\ldots,\vec{w}_L) - \mathsf{E_{ahe}}.\mathsf{Dec}(f_\Sigma(c_{\vec{w}_1},\ldots,c_{\vec{w}_L})) \right \rVert < \epsilon.
\end{equation*}

\textbf{Further details on the lattice-based HE:} To ensure a fair comparison between BFV and CKKS, we made minor modifications to the original CKKS scheme described in~\cite{CKKS17}. These changes primarily involve sampling the randomness in both BFV and CKKS from the same distributions. Additionally, CKKS and BFV use different slot packing methods~\cite{SV14,CKKS17}, leading to some variations in operations before and after encryption and decryption. However, since our focus in this work is on homomorphic additions, for which SIMD-style operations hold without packing, we remove this step in both schemes. Consequently, some of the procedures described in Table~\ref{tab:bfv-ckks} coincide for both schemes.

\subsection{Single-Key HE under a Relaxed Threat Model}
\label{sec:singlekeyhepriagg}

The incorporation of HE for securely computing the aggregation function in FL follows two main principles: (1) unprotected local data is isolated in silos and remains on the clients' premises, and (2) all local model updates leaving a silo are encrypted using HE. Figure~\ref{fig:priflprotocol} sketches the process followed to train a model. The protocol proceeds as follows:
\begin{enumerate}
	\item \textit{Setup step}: All involved clients generate the common protocol cryptographic keys. An initial common ML model is agreed upon by clients and the aggregator.
	\item \textit{Local training step}: Each client trains locally with its data to obtain a local model update.
	\item \textit{Input step}: Each client encrypts its local model update with the public key $\pk$ and sends it to the aggregator over a private channel.
	\item \textit{Evaluation step}: The aggregator homomorphically computes the functionality $f_\Sigma$.
	\item \textit{Output step}: Finally, all involved parties receive the aggregated model update and locally decrypt it using their shared secret key.
	\item \textit{Output (extra) step}: If the aggregator receives the output, the clients send the decryption to the aggregator.
\end{enumerate}

Steps $2$ to $5$ are repeated several times until convergence.

\begin{figure}
    \centering
    \includegraphics[width = \columnwidth]{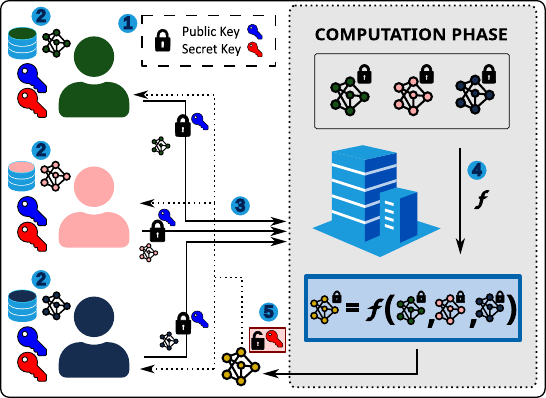}
    \caption{Protocol for private aggregation in FL using single-key HE. Private channels exist between aggregator and clients.}
    \label{fig:priflprotocol}
    \vspace{-0.4cm}
\end{figure}

We remind the reader that in this setting, collusions between the aggregator and any client break security, since the aggregator has access to the ciphertexts of the honest clients while a dishonest client holds the common secret key.\footnote{We refer the reader to Section~\ref{sec:mhe} for a detailed description of multiparty HE schemes that address collusions between aggregator and clients.} Consequently, assuming private channels between each client and the aggregator and no such collusions, both HE schemes described in Section~\ref{sec:heblocks} are a convenient fit for private aggregation. Under these assumptions, and \emph{when the aggregator does not learn the output}, the protocol above preserves input privacy and solves the multiparty aggregation problem of Definition~\ref{def:mpcaggfl}. First, as the HE scheme satisfies $\mathsf{IND}$-$\mathsf{CPA}$ (indistinguishability under chosen-plaintext attack; see Appendix~\ref{app:security-definitions}), the aggregator is unable to infer any information regarding the local updates except their length. Second, since the clients only receive the encrypted aggregation, several colluding semi-honest clients (up to $L - 1$ out of a total of $L$) cannot learn more about the honest clients' inputs than what can be deduced from their shared inputs.

However, RLWE-based schemes also present important limitations if applied straightforwardly. In particular, if the \textit{Output (extra) step} is executed and the aggregator learns the output, the protocol may exhibit unexpected vulnerabilities. If the HE scheme is not adequately instantiated, the decryptions could leak information regarding the private key. We elaborate next on $\mathsf{IND}$-$\mathsf{CPA}^\mathsf{D}$, which is the security definition that best captures this output-access setting.

\subsection{More advanced Threat Models: $\mathsf{IND}$-$\mathsf{CPA}^\mathsf{D}$}
\label{sec:ind-cpa-d}

The notion $\mathsf{IND}$-$\mathsf{CPA}^\mathsf{D}$ extends $\mathsf{IND}$-$\mathsf{CPA}$ by allowing the adversary access to a restricted decryption oracle that can only be used for genuine ciphertexts, or ciphertexts obtained from genuine ciphertexts through valid homomorphic operations.\footnote{Formal security games are detailed in the Appendix~\ref{app:security-definitions}.} Specialized to our aggregation function $f_\Sigma$, the idea is that an adversary knowing ${\vec{w}_1, \ldots, \vec{w}_L }$ and $f_\Sigma$ can also obtain $f_{\Sigma}(\vec{w}_1,\ldots,\vec{w}_L)$. Then, if this adversary has access to ${\mathsf{Enc}(\vec{w}_1), \ldots, \mathsf{Enc}(\vec{w}_L)}$ and homomorphically evaluates the aggregation function $f_\Sigma$ as $f_\Sigma(\mathsf{Enc}(\vec{w}_1), \ldots, \mathsf{Enc}(\vec{w}_L))$, she should not gain more information from the decryption $\mathsf{Dec}\left(f_\Sigma \left( \mathsf{Enc}(\vec{w}_1), \ldots, \mathsf{Enc}(\vec{w}_L) \right)\right)$ than what she can already obtain from $f_{\Sigma}(\vec{w}_1,\ldots,\vec{w}_L)$. Unfortunately, in~\cite{LM21}, the authors showed that some HE schemes, which are $\mathsf{IND}$-$\mathsf{CPA}$ secure, leak information through the difference between $\mathsf{Dec}(\mathsf{Eval}(f, \mathsf{Enc}(m)))$ and $f(m)$ and hence are not $\mathsf{IND}$-$\mathsf{CPA}^\mathsf{D}$ secure. An example is the case of approximate HE schemes like CKKS~\cite{CKKS17}, for which even the difference $\mathsf{Dec}(\mathsf{Enc}(\vec{w}_i)) - \vec{w}_i$ directly leaks the internal noise of the underlying RLWE sample, allowing the adversary to break the RLWE indistinguishability assumption.

Until very recently, the cryptographic community believed that only approximate HE presented this vulnerability and that exact HE schemes like BFV~\cite{Brakerski12,FV12}, BGV~\cite{BGV14} or CGGI~\cite{CGGI16} were invulnerable to such attacks, thus being naturally safe under $\mathsf{IND}$-$\mathsf{CPA}^\mathsf{D}$. However, recent works~\cite{CSBBC24,CCPSS24} have shown that this vulnerability can also be present in practice for exact schemes, whenever they have a non-negligible probability of incorrect decryption. Both works exemplify how this vulnerability can be exploited to implement effective key-recovery attacks against several mainstream HE libraries. These attacks are especially relevant when dealing with threshold HE schemes~\cite{CSBBC24,CCPSS24,BSBK25}. Fortunately, the authors also propose several countermeasures to achieve $\mathsf{IND}$-$\mathsf{CPA}^\mathsf{D}$ security. In particular, the problem can be mitigated by adding an independent smudging noise, whose variance grows exponentially in $\lambda$, during decryption. Notably,~\cite{CSBBC24} demonstrates how this countermeasure significantly impacts the cryptosystem's parameters, subsequently reducing its efficiency. According to the authors, this effect is even more pronounced for CKKS, where using a large-variance smudging noise is likely to severely reduce the precision of the decrypted result.

\textbf{Impact of the smudging noise:} Given a ciphertext $\mathsf{ct}$, the key-recovery attacks mentioned above aim to extract its noise component $e_\mathsf{ct}$. If this extraction is feasible, the secret key can be recovered using linear algebra techniques. For approximate HE schemes, this process is relatively straightforward once decryption is accessible. However, for exact HE schemes, these attacks attempt to induce decryption failures as a way to estimate the corresponding $e_\mathsf{ct}$ component. Therefore, even in extreme cases where the post-processed decryption becomes unusable, adding sufficiently large smudging noise after decryption is necessary to hide any information leakage about the original error $e_\mathsf{ct}$.

Regarding this countermeasure, recent work by~\cite{ABMP24} discusses how the \textit{application-agnostic} nature of $\mathsf{IND}$-$\mathsf{CPA}^\mathsf{D}$ often leads to impractically large parameters when adding large smudging noise. This occurs because $e_\mathsf{smg}$ must statistically hide $e_\mathsf{ct}$ for all possible homomorphic circuits that satisfy correctness for the initially chosen cryptosystem parameters with the $\mathsf{E_{ahe}}.\mathsf{Setup}()$ procedure. The authors propose a relaxation of this definition, termed \textit{application-aware} $\mathsf{IND}$-$\mathsf{CPA}^\mathsf{D}$, which is better suited for the FL setting. In our work, we follow their guidelines and assume that both the clients and the aggregator behave semi-honestly, meaning that they are \textit{expected to follow exactly the prescribed instructions for the aggregator to compute} $c_{\vec{w}^*} = f_\Sigma(c_{\vec{w}_1}, \ldots, c_{\vec{w}_L})$ \textit{and for the clients to encrypt their local updates as} $c_{\vec{w}_1}, \ldots, c_{\vec{w}_L}$.

\section{Private Aggregation with Multiparty HE}
\label{sec:mhe}
As described in Section~\ref{sec:background2}, a key advantage of HE-based aggregation is that it preserves the natural communication flow of FL while enabling privacy against an untrusted aggregator. However, achieving Aggregator Obliviousness (Definition~\ref{def:aggobliviousness}) requires clients to avoid encrypting updates under a common key. \cite{MOJC23} shows that secure linear aggregation calls for multi-user HE variants~\cite{SCRCS11,AHSL22}, which allow ciphertexts encrypted under distinct keys to be homomorphically aggregated. In the context of our \textbf{multiparty aggregation problem in FL} (Definition~\ref{def:mpcaggfl}), these variants enable the construction of private aggregation protocols that remain secure against strong semi-honest adversaries controlling a majority of participants. This guarantee holds even when FL training runs over a public channel where all exchanged messages are observable by the remaining parties.

\textbf{Multiparty HE:} We briefly review in Subsection~\ref{sec:protocol-mhe} the MHE (Multiparty HE) construction introduced in~\cite{MTBH21}. This construction is based on a threshold variant of single-key HE, where the secret key associated with a collective public key is split into multiple additive shares distributed among clients. These shares are generated during an offline setup phase before the FL protocol starts. MHE enables encryption of local updates under this collective public key, while the corresponding secret key remains unknown to any individual party and decryption requires client collaboration. Since this construction can be instantiated from different underlying HE schemes, such as BFV and CKKS, we later discuss the resulting parameter constraints and ciphertext expansion in Section~\ref{sec:perf-eval}, with additional details provided in Appendix~\ref{app:bfv-ckks-details}.

Despite this setup phase, MHE schemes scale well with the number of clients and allow the straightforward reuse of techniques from single-key HE. Unfortunately, recent works~\cite{CSBBC24,CCPSS24,BSBK25} show that both single-key HE and MHE become particularly vulnerable when there is a non-negligible probability of decryption errors. These works illustrate how this vulnerability can be exploited to carry out effective key-recovery attacks against several widely used HE libraries.

\textbf{Security models for HE with multiple keys:} Previous works~\cite{MTBH21} have shown that MHE schemes can be proven secure under the $\mathsf{IND}$-$\mathsf{CPA}$ (Indistinguishability under Chosen-Plaintext Attack) security model, based on the hardness of the RLWE problem. Under this passive security model, the adversary only has access to encryptions via an encryption oracle. While this model may be sufficient from the aggregator's perspective when only RLWE ciphertexts and the public key are visible, it no longer captures the more general case considered in Definition~\ref{def:mpcaggfl} when the FL protocol is executed over a public channel.

As discussed in Section~\ref{sec:background2}, $\mathsf{IND}$-$\mathsf{CPA}^\mathsf{D}$ extends $\mathsf{IND}$-$\mathsf{CPA}$ by introducing a decryption oracle, which better captures scenarios in which the aggregator receives the output. With multiple keys, this distinction becomes more relevant, as participants may gain access not only to full decryptions without knowing the secret key, but also to partial decryptions from individual clients. Notably, the same countermeasure used in the single-key $\mathsf{IND}$-$\mathsf{CPA}^\mathsf{D}$ setting applies to MHE, namely adding an appropriate amount of smudging noise to the partial decryptions~\cite{CSBBC24,CCPSS24}. We discuss the role and concrete impact of this smudging noise in Subsection~\ref{sec:impact-smudging-noise}.

\textbf{General overview:} The FL training process with single-key HE, described in Section~\ref{sec:background2}, must be adapted when using MHE. In particular, the key generation and decryption steps (steps $1$ and $5$) differ, as they are performed collaboratively by the participants.\footnote{The public key is generated during setup and is associated with a global secret key that remains unknown to any single party, while this global secret key is ``reconstructed'' during collaborative decryption (\textit{Output step}).} We now present an adapted FL protocol that securely implements the aggregation $f_\Sigma(\vec{w}_1, \ldots, \vec{w}_L)$ using an $L$-party secret sharing scheme (see Definition~\ref{def:mhe} in Subsection~\ref{sec:protocol-mhe}). This protocol ensures input privacy against a colluding semi-honest aggregator and up to $L - 1$ semi-honest clients. In this setting, local updates are encrypted using MHE before leaving each silo, following the general approach proposed in~\cite{MTBH21}. Steps $2$–$5$ are repeated until convergence. Figure~\ref{fig:priflprotocolmhe} illustrates the full training process:
\begin{enumerate}
\item \textit{Setup step}: All clients generate individual secret keys and a collective public key. An initial common ML model is agreed upon by the clients and the aggregator.
\item \textit{Local training step}: Each client locally trains a model update using its private data.
\item \textit{Input step}: Each client encrypts its local model update using the collective public key before sending it out.
\item \textit{Evaluation step}: The aggregator homomorphically evaluates $f_\Sigma$ over the encrypted updates.
\item \textit{Output step}: All clients \uline{collaboratively decrypt} the aggregated model update. The result can then be explicitly shared with the aggregator, if needed.
\end{enumerate}

\begin{figure}
    \centering
    \includegraphics[width = \columnwidth]{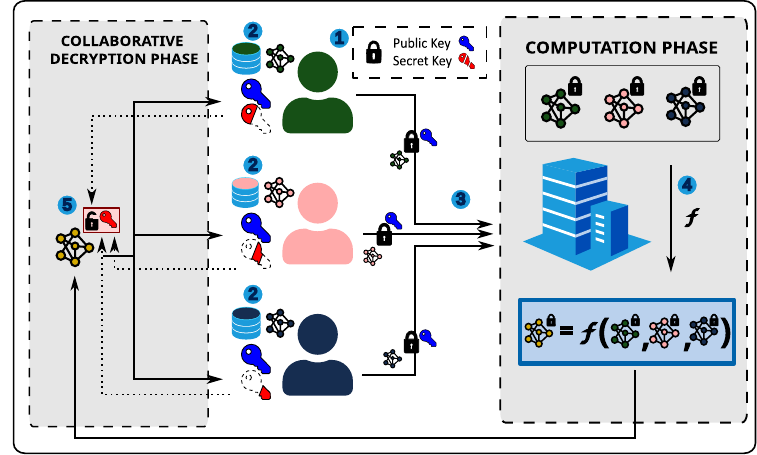}
    \caption{Protocol for private aggregation in FL using MHE.}
    \label{fig:priflprotocolmhe}
    \vspace{-0.4cm}
\end{figure}

\subsection{MHE-based protocol for private aggregation}
\label{sec:protocol-mhe}

This subsection details the baseline MHE-based protocol for securely evaluating $f_\Sigma(\cdot)$ within an FL protocol executed over a public channel. We build on the MHE scheme proposed in~\cite{MTBH21}, which provides a threshold construction on top of the two single-key HE schemes introduced in Section~\ref{sec:background2}. The resulting protocol provides a solution to the \emph{multiparty aggregation problem in FL} of Definition~\ref{def:mpcaggfl}, with the aggregator also acting as a receiver party. The scheme in~\cite{MTBH21} is proven secure in the Common Reference String (CRS) model,\footnote{This model assumes that all parties share a common random string.} and assumes that parties communicate over authenticated channels. Our tailored protocol for private aggregation is summarized in Table~\ref{tab:mhempc} and relies on the following augmented multiparty encryption scheme:
\begin{definition}[Multiparty Additive HE]\label{def:mhe} Let $\mathsf{E}_\mathsf{ahe} = (\mathsf{E.Setup}$, $\mathsf{E.SecKeyGen}$, $\mathsf{E.PubKeyGen}$, $\mathsf{E.Enc}$, $\mathsf{E.Dec}$, $\mathsf{E.Add})$ be the asymmetric and additive homomorphic encryption scheme from Definition~\ref{def:addhe}, whose security is parameterized by $\lambda$ (see Table~\ref{tab:bfv-ckks} for two concrete examples with $\mathsf{BFV}$ and $\mathsf{CKKS}$), and let $\mathsf{S} = (\mathsf{S.Share}, \mathsf{S.Combine})$ be an $L$-party secret sharing scheme. The associated \emph{multiparty homomorphic encryption scheme} ($\mathsf{E}_{\mathsf{mhe}}$) is obtained by applying the secret-sharing scheme $\mathsf{S}$ to $\mathsf{E_{ahe}}$'s secret key $\mathsf{sk}$ (\emph{ideal secret key}) and is defined as the tuple $\mathsf{E}_{\mathsf{mhe}} = (\mathsf{E_{ahe}.Enc}, \mathsf{E_{ahe}.Dec},$ $\mathsf{E_{ahe}.Add}, \mathsf{E_{ahe}^S})$, where $\mathsf{E_{ahe}^S} = (\pi_{\mathsf{SecKeyGen}}, \pi_{\mathsf{PubKeyGen}}, \pi_{\mathsf{Dec}})$ is a set of multiparty protocols executed among the clients in the set $\mathcal{C}$, and having the following \emph{private ideal functionalities} for each client $C_i$:
\begin{itemize}
    \item \emph{Ideal secret-key generation}:
    
    $f_{i,\pi_{\mathsf{SecKeyGen}}}(\lambda) = \mathsf{S}.\mathsf{Share}_i(\mathsf{E_{ahe}}.\mathsf{SecKeyGen}(\lambda)) = \mathsf{sk}_i$.
    \item \emph{Collective public-key generation}:    
    
    $f_{\pi_{\mathsf{PubKeyGen}}}(\mathsf{sk}_1, \mathsf{sk}_2, \ldots, \mathsf{sk}_L)$ \\ $= \mathsf{E_{ahe}}.\mathsf{PubKeyGen}(\mathsf{S}.\mathsf{Combine}(\mathsf{sk}_1, \mathsf{sk}_2, \ldots, \mathsf{sk}_L))$.
    \item \emph{Collective decryption}:
    $f_{\pi_{\mathsf{Dec}}}(\mathsf{sk}_1, \mathsf{sk}_2, \ldots, \mathsf{sk}_L, \mathsf{ct})$ \\ $= \mathsf{E_{ahe}}.\mathsf{Dec}(\mathsf{S}.\mathsf{Combine}(\mathsf{sk}_1, \mathsf{sk}_2, \ldots, \mathsf{sk}_L),\mathsf{ct})$.
\end{itemize}
\end{definition}

Definition~\ref{def:mhe} particularizes~\cite[Def. $2$]{MTH19} to the case where the encryption scheme $\mathsf{E_{ahe}}$ is additively homomorphic.

\begin{table}[!t] 
	\renewcommand{\arraystretch}{1.3}
	\caption{Multiparty Private Aggregation Protocol.}
	\label{tab:mhempc}
	\centering \small
	\begin{threeparttable}[b]
	\begin{tabular}{|m{0.5in}|m{2.6in}|}
		\hline
		\multicolumn{2}{|p{3.2in}|}{
		
    \textbf{Public input}: ideal aggregation functionality $f_\Sigma$
    \newline
    \textbf{Private input}: $\vec{w}_i$ for each $C_i \in \mathcal{C}$\newline
    \textbf{Output} for all clients in $\mathcal{C}$ and for the \textit{Aggregator}: the exact aggregate $\vec{w}^* = f_\Sigma(\vec{w}_1, \ldots, \vec{w}_L)$ for BFV, and an approximation for CKKS
		}\\
		\hline
		Setup & All clients $C_i$ instantiate the MHE scheme $\mathsf{E_{mhe}}$ \newline
		$\mathsf{sk}_i = \mathsf{E_{mhe}}.\pi_{i, \mathsf{SecKeyGen}}(\lambda, \kappa)$\tnote{*} \newline
		$\mathsf{cpk} = \mathsf{E_{mhe}}.\pi_{\mathsf{PubKeyGen}}(\kappa, \mathsf{sk}_1, \ldots, \mathsf{sk}_L)$ \\		
		\hline
		Input & Each $C_i$ encrypts its input $\vec{w}_i$ and provides it to the \textit{Aggregator}\tnote{+} \newline
		$c_{\vec{w}_i} = \mathsf{E_{mhe}.Enc}(\mathsf{cpk}, \vec{w}_i)$\\		
		\hline
		Evaluation & The \textit{Aggregator} computes the encrypted output for the ideal functionality $f_\Sigma$ relying on $\mathsf{E_{mhe}.Add}$\newline
		$c_{\vec{w}^*}= f_\Sigma(c_{\vec{w}_1}, \ldots, c_{\vec{w}_L})$\\
		\hline
		Output & The parties in $\mathcal{C}$ execute the decryption protocol \newline
		$\vec{w}^* = \mathsf{E_{mhe}}.\pi_{\mathsf{Dec}}(\mathsf{sk}_1, \ldots, \mathsf{sk}_L, c_{\vec{w}^*})$\\
		\hline
	\end{tabular}
	   \begin{tablenotes}
     \item[*] $\kappa$ parameterizes the homomorphic capacity of $\mathsf{E_{mhe}}$.
     \item[+] The \textit{Aggregator} is in charge of performing the aggregation.
   \end{tablenotes}
  \end{threeparttable}
    \vspace{-0.6cm}
\end{table}

\subsubsection{Concrete instantiations for Multiparty HE}
The private ideal functionalities introduced in Definition~\ref{def:mhe} can be implemented by clients via the following concrete protocols. In the decryption functionality $f_{\pi_{\mathsf{Dec}}}$, smudging noise is explicitly incorporated, as it constitutes a standard countermeasure required to achieve security in MHE schemes under the previously discussed threat model~\cite{CSBBC24}:
\begin{itemize}
    \item $f_{i,\pi_{\mathsf{SecKeyGen}}}(\lambda)$: Each client $C_i$ independently runs the procedure $\mathsf{\mathsf{E}_\mathsf{ahe}.SecKeyGen}(1^\lambda) = \mathsf{sk}_i$.
    \item $f_{\pi_{\mathsf{PubKeyGen}}}(\mathsf{sk}_1, \mathsf{sk}_2, \ldots, \mathsf{sk}_L)$: Given a common random polynomial $p_1$, each client $C_i \in \mathcal{C}$ samples $e_{\pk,i} \leftarrow \chi$ and discloses to the other clients $p_{0,i} = -p_1 \cdot \mathsf{sk_i} + e_{\pk,i}$. The collective public key is computed as: 
    \begin{equation*}\mathsf{cpk} = (\displaystyle \sum_{C_i \in \mathcal{C}}p_{0,i},p_1) = (-p_1 \cdot \underbrace{\sum_i \mathsf{sk}_i}_{\mathsf{sk}} + \underbrace{\sum_i e_{\pk,i}}_{e_{\pk}},p_1).
    \end{equation*}
    \item $f_{\pi_{\mathsf{Dec}}}(\mathsf{sk}_1, \mathsf{sk}_2, \ldots, \mathsf{sk}_L, \mathsf{ct})$: The collaborative decryption protocol can be divided into two phases.
    \begin{itemize}
        \item Given a ciphertext $\mathsf{ct} = (c_0, c_1)$ encrypted under the ideal secret key $\mathsf{sk}$, each client $C_i$ samples $e_{\mathsf{smg},i} \leftarrow \chi$ and discloses its partial decryption $h_i = \mathsf{sk}_i \cdot c_1 + e_{\mathsf{smg},i}$.
        \item Given all the decryption shares $h_i$ from the clients, the protocol outputs:
        \begin{align*}
            d & = {\left[ c_0 + \sum_{C_i \in \mathcal{C}} h_i \right]}_q \\
            &= \underbrace{\Delta m + u \cdot p_0 + e_0}_{c_0} + \sum_{C_i \in \mathcal{C}} \mathsf{sk}_i \cdot c_1 \\
            &\quad + \underbrace{\sum_{C_i \in \mathcal{C}} e_{\mathsf{smg},i}}_{e_{\mathsf{smg}}} \\
            & = \Delta m + \underbrace{e_{\pk}\cdot u + e_0 + \sk\cdot e_1}_{e_{\mathsf{ct}}} + e_{\mathsf{smg}}.
        \end{align*}
    \end{itemize}
\end{itemize}

The final step for decryption depends on whether BFV or CKKS is used as the underlying single-key HE scheme:
\begin{itemize}
    \item For BFV, the obtained decryption is $m = {[ \lfloor \frac{t}{q}d \rceil ]}_t$.
    \item For CKKS, the obtained decryption is $\tilde{m} = \frac{d}{\Delta} = m + \frac{e_\mathsf{ct} + e_{\mathsf{smg}}}{\Delta} \approx m$ for large enough $\Delta$. Therefore, if $\| e_\mathsf{ct} + e_\mathsf{smg} \|\le B_{\mathsf{ct}}^{\mathsf{MP}}$, where $B_{\mathsf{ct}}^{\mathsf{MP}}$ is defined in Section~\ref{sec:perf-q-bounds}, then $\|\tilde{m} - m\| \le B_{\mathsf{ct}}^{\mathsf{MP}}/\Delta$.
\end{itemize}

We can obtain lower bounds for $q$, similar to the single-key counterparts (see Table~\ref{tab:bfv-ckks}), by taking into account the larger noise terms present in the threshold versions (see Section~\ref{sec:perf-eval} for further details).

\subsection{Impact of the smudging noise in MHE}
\label{sec:impact-smudging-noise}
The smudging technique was introduced in~\cite{AJLTVW12} to address security vulnerabilities inherent to threshold HE schemes. In that work, the authors consider a slightly stronger class of semi-malicious adversaries who follow the protocol but may deviate from the prescribed randomness distributions. These deviations can be adversarially adapted based on the inputs provided by the honest parties. They show that sufficiently large smudging noise is required to securely execute several protocol phases, such as setup and decryption.

Under our semi-honest threat model, smudging noise is required only during collaborative decryption. Given a ciphertext $\mathsf{ct}$, once all partial decryptions $h_i = \mathsf{sk}_i \cdot c_1 + e_{\mathsf{smg},i}$ are collected, the adversary gains direct access to
$d = \Delta \cdot m + e_{\mathsf{ct}} + e_{\mathsf{smg}}$. Thus, $e_{\mathsf{smg}}$ must be chosen so that the sum $e_{\mathsf{ct}} + e_{\mathsf{smg}}$ is statistically indistinguishable from fresh noise $\tilde{e}_{\mathsf{smg}}$ (see the Smudging Lemma in~\cite{AJLTVW12}). We follow the practical guidelines provided in~\cite{CSBBC24}, which recommend Gaussian smudging noise with variance $\sigma_{\mathsf{smg}}^2 = 2^\lambda \sigma_{\mathsf{ct}}^2$. Applying an upper bound $B = \mathcal{O}(\sigma)$,\footnote{$B = 6\sigma$ is commonly used by HE libraries.} this results in
$B_{\mathsf{smg}} = 2^{\frac{\lambda}{2}} B_{\mathsf{ct}}$ (for example, $\lambda = 128$ is suggested in~\cite{MTBH21}).

\textbf{Can we do better? An ideal alternative to large smudging noise:} Assume each client has access to an oracle $\mathcal{O}_i(\cdot)$ which, given the ciphertext to be decrypted $\mathsf{ct}$ as input, outputs $\mathsf{share}_i$ such that $\sum_{i=1}^{L} \mathsf{share}_i = e_{\mathsf{ct}}$; that is, $\mathcal{O}_i(\mathsf{ct}) = \mathsf{share}_i$, where $\mathsf{share}_i$ is ideally uniformly random in $R_q$. By calling this oracle, clients could remove the ciphertext error from the collaborative decryption result and replace it with fresh independent noise. For example, this can be done by computing $h_i\!=\!\mathsf{sk}_i\mkern-1mu \cdot\mkern-1mu c_1 - \mathsf{share}_i + e_{\mathsf{smg},i}$. Once~all partial decryptions are gathered, the adversary obtains $d\!=\!\Delta\mkern-1mu\cdot\mkern-1mu m + e_{\mathsf{smg}}$. Although this expression is similar to the previous one, note that now $\mkern-1mu e_{\mathsf{smg}}\mkern-1mu$ does not need to grow exponentially with $\lambda$ because it no longer has to hide $e_{\mathsf{ct}}$ to make $e_{\mathsf{ct}}\mkern-1mu +\mkern-1mu e_{\mathsf{smg}}$ statistically indistinguishable from fresh noise $\tilde{e}_{\mathsf{smg}}$.

Despite the apparent strength of this idea, implementing such an oracle in practice is not straightforward in a public-key encryption scheme. First, the simplest approach would require each client to track the error component introduced in its input ciphertext. However, since clients do not know the underlying error of the public key, they cannot determine the share that aggregates into the final error of the output ciphertext. Second, even if clients could know this error, it would be correlated with the collective-public-key error, potentially making the scheme vulnerable to key-recovery attacks similar to those described in~\cite{CSBBC24,CCPSS24} and Section~\ref{sec:background2}.

We explore this idea in more detail in Section~\ref{sec:prop-opt} and show that it is feasible when using symmetric-key encryption schemes. The main drawback is that clients must store and track the errors of their input ciphertexts. While this may not be practical for homomorphic evaluation of general functions, it is reasonable in our FL scenario with homomorphic average aggregation, where clients only need to store the errors used during a single communication round. 

\section{Proposed protocol and optimizations}
\label{sec:prop-opt}
The MHE-based approach analyzed in the previous section provides a practical baseline for private aggregation in FL, but it also presents several limitations. In particular, ciphertexts must remain compatible with $\mathsf{cpk}$, collaborative decryption requires large smudging noise, and the noise embedded in $\mathsf{cpk}$ grows with the number of clients $L$. To overcome these issues, we introduce a construction based on \emph{multi-secret-key} HE. Instead of encrypting client updates under $\mathsf{cpk}$, each client encrypts using its own secret key, while ciphertexts remain compatible with homomorphic aggregation and collaborative decryption. This removes the dependency on $\mathsf{cpk}$ and enables several practical optimizations, including reduced ciphertext size.

This section is organized as follows: Subsection~\ref{sec:ask-he} introduces the secret-key HE building blocks used in our construction. Subsection~\ref{sec:mask-he} presents the proposed multi-secret-key aggregation protocol, including concrete instantiations based on BFV and CKKS together with implementation-oriented optimizations. Finally, Subsection~\ref{sec:proofoverview} states the main security result and provides a proof overview.

\textbf{General overview of our protocol:} The proposed protocol follows the same high-level FL workflow as the MHE-based baseline, but replaces the $\mathsf{cpk}$ encryption with a correlated symmetric-key construction across clients. The overall workflow is summarized below and illustrated in Figure~\ref{fig:multi-sk-he}.

\begin{enumerate}
\item \textit{Setup step}: All clients generate \uline{individual secret keys} and run a \uline{lightweight protocol establishing auxiliary shares} for the protocol execution. An initial ML model is agreed upon by the clients and the aggregator.
\item \textit{Local training step}: Each client locally trains a model update using its private data.
\item \textit{Input step}: Each client \uline{encrypts its local model update using its individual secret key} and sends the resulting ciphertext to the aggregator.
\item \textit{Evaluation step}: The aggregator homomorphically evaluates $f_\Sigma$ over the encrypted updates.
\item \textit{Output step}: All clients collaboratively decrypt the aggregated model update, which can then be shared with the aggregator if required.
\end{enumerate}

\begin{figure}
    \centering
    \includegraphics[width=\columnwidth]{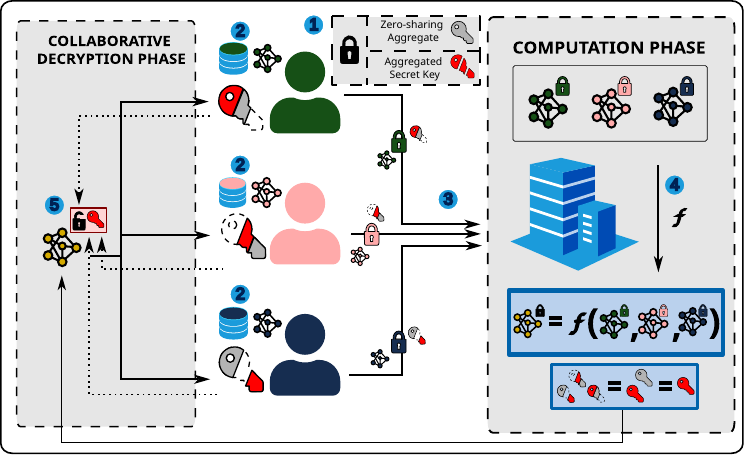}
    \caption{Protocol for private aggregation in FL using the proposed multi-secret-key HE-based construction.}
    \label{fig:multi-sk-he}
    \vspace{-0.4cm}
\end{figure}

\subsection{Secret-Key HE building blocks}
\label{sec:ask-he}

We revisit Definition~\ref{def:addhe} for additive HE schemes by adapting it to the secret-key setting and removing the public-key generation procedure $\mathsf{PubKeyGen}$. The scheme retains the same packing capacity ($n$ values) and additive homomorphism whenever ciphertexts share the same secret key.

\begin{definition}[Additive Symmetric-Key Homomorphic Encryption]\label{def:ashe}
    Let $\mathsf{E} = \{\mathsf{Setup}, \mathsf{SecKeyGen}, \mathsf{Enc}, \mathsf{Dec}\}$ be a symmetric encryption scheme, whose security is parameterized by $\lambda$. An additive symmetric-key homomorphic encryption scheme extends $\mathsf{E}$ with the $\mathsf{Add}$ procedure, having $\mathsf{E_{ashe}} = \{\mathsf{E}.\mathsf{Setup},$ $ \mathsf{E}.\mathsf{SecKeyGen}, \mathsf{E}.\mathsf{Enc}, \mathsf{E}.\mathsf{Dec}, \mathsf{Add}\}$. The correctness condition for $\mathsf{Add}$ is stated in terms of $\epsilon$ as follows:
    \begin{itemize}
        \item Let $\epsilon \geq 0$ be the target margin error: $0$ for exact HE schemes and positive for approximate HE schemes.
        \item Let $c_{\vec{v}}$ and $c_{\vec{w}}$ be two ciphertexts \uline{encrypting under the same secret key $\mathsf{sk}$}, respectively, $\vec{v}$ and $\vec{w}$. The procedure $\mathsf{Add}$ satisfies:
        \begin{equation} \label{eq:correctness-add-general}
            \left \lVert\vec{v} + \vec{w} - \mathsf{E_{ashe}}.\mathsf{Dec}(\mathsf{E_{ashe}}.\mathsf{Add}(c_{\vec{v}}, c_{\vec{w}}))\right \rVert \leq \epsilon.
        \end{equation}
    \end{itemize}
\end{definition}
As in Section~\ref{sec:background2} for the public-key HE setting, we instantiate $\mathsf{E_{ashe}}$ using both approximate and exact variants, namely secret-key versions of CKKS and BFV, respectively. Their security relies on the RLWE problem, and Table~\ref{tab:bfv-ckks-ashe} details their primitives and cryptographic parameters. Among the properties of these secret-key constructions, we next highlight two aspects that are particularly relevant for our private average aggregation protocol (see Section~\ref{sec:mask-he}).

\begin{table*}[h!] 
        \centering
	\renewcommand{\arraystretch}{2.2}
	\caption{Summary of $\mathsf{BFV}$ \& $\mathsf{CKKS}$ Additive Secret-key HE schemes.}
	\label{tab:bfv-ckks-ashe}
    \scriptsize 
	\begin{tabular}{|m{1.6in}|m{0.4in}|m{2.7in}|}
		\hline
		\multicolumn{3}{|c|}{Parameters remarks}\\
		\hline
		\multicolumn{3}{|p{4.7in}|}{
        $\mathsf{BFV}$: $R_q$ is the ciphertext ring. $\Delta = \lfloor \frac{q}{t} \rfloor$. $R_t$ is the plaintext ring, with $q > t$. 
                  \newline $\mathsf{CKKS}$: $R_q$ is the ciphertext ring. Given an arbitrary target error margin $\epsilon$ and $B$, we can set the scale $\Delta$.
		}\\
		\hline
		
		\multicolumn{3}{|c|}{Cryptographic Primitives}\\
        \hline
		\multirow{2}{*}{$\mathsf{\mathsf{E}_\mathsf{ashe}.Setup()} = pp$}
		& \textsf{BFV} & Select $pp \leftarrow (t, n, q, \sigma, B, \Delta)$ \\
		\cline{2-3}
		& \textsf{CKKS} & Select $pp \leftarrow (\epsilon, n, q, \sigma, B, \Delta)$ \\
		\hline
		$\mathsf{\mathsf{E}_\mathsf{ashe}.SecKeyGen}(1^\lambda) = \mathsf{sk}$
		& \multicolumn{2}{p{3.1in}|}{Given the security parameter $\lambda$, sample $s \leftarrow R_3$. Output $\mathsf{sk} = s$} \\
		\hline
		$\mathsf{\mathsf{E}_\mathsf{ashe}.Enc}(\mathsf{sk}, m) = \mathsf{ct}$
		& \multicolumn{2}{p{3.1in}|}{Sample $a \leftarrow R_q$ and $e \leftarrow \chi$. $\mathsf{ct} = (c_0,c_1) = (-a\cdot \mathsf{sk} + e + \Delta m, a)$} \\
		\hline
  	$\mathsf{\mathsf{E}_\mathsf{ashe}.Add}(\mathsf{ct},\mathsf{ct}') = \mathsf{ct}_{\mathsf{add}}$
		&  \multicolumn{2}{p{3.1in}|}{Given $\mathsf{ct} = (c_0, c_1)$ and $\mathsf{ct}' = (c_0', c_1')$, $\mathsf{ct}_{\mathsf{add}} = (c_0 + c_0', c_1 + c_1')$} \\	
            \hline
		\multirow{2}{*}{$\mathsf{\mathsf{E}_\mathsf{ashe}.Dec}(\mathsf{sk}, \mathsf{ct}) = m$}
		& \textsf{BFV} & $m = {[\lfloor \frac{t}{q} {[c_0 + c_1 \cdot \mathsf{sk} ]}_q \rceil]}_t$ \\
		\cline{2-3}
		& \textsf{CKKS} & $m \approx \frac{{[c_0 + c_1\cdot \mathsf{sk} ]}_q}{\Delta}$ \\
		\hline
  \multicolumn{3}{|c|}{Decryption remarks}\\
		\hline
		\multicolumn{3}{|p{4.8in}|}{
        $\mathsf{BFV}$: Decryption correctness now holds if $B < \frac{q}{2t} - \frac{t}{2}$.
        \newline $\mathsf{CKKS}$: After decryption, we obtain $\tilde{m} = m + e_{\mathsf{ct}} \approx m$, where $e_{\mathsf{ct}}$ is the internal error of the ciphertext, with $||e_{\mathsf{ct}}|| < \frac{B}{\Delta}$.
		}\\
		\hline
	\end{tabular}
    \vspace{-0.3cm}
\end{table*}

\textbf{Compressed ciphertext size:} While in public-key HE schemes (see Table~\ref{tab:bfv-ckks}) the two components of a ciphertext $\mathsf{ct}=(c_0,c_1)$ depend on $\mathsf{pk}$, in the symmetric-key setting the public component (denoted $a$ in Table~\ref{tab:bfv-ckks-ashe}, from which $c_0$ is derived) is a uniformly random polynomial in $R_q$ and therefore independent of $\mathsf{sk}$. As discussed in~\cite{PBCSSZ23}, this allows clients in an FL protocol to compress their encrypted inputs by approximately one half. Under the CRS model, this public component can be generated deterministically using a pseudorandom function $\mathsf{PRF}$ (see Definition~\ref{def:prf} below). 

\begin{definition}[Pseudorandom Function (PRF)]\label{def:prf} Let $\mathcal{K}$, $\mathcal{X}$, and $\mathcal{Y}$ be finite sets. A function $F:\mathcal{K}\times\mathcal{X}\rightarrow\mathcal{Y}$ is efficiently computable if there exists a probabilistic polynomial-time (PPT) algorithm that, on input $(\kappa,T)\in\mathcal{K}\times\mathcal{X}$, outputs $F_\kappa(T)$. We say that $F$ is a PRF (see~\cite{KL14}) if, for every PPT distinguisher $\mathcal{A}$,
\begin{equation*}
\left| \Pr\left[\mathcal{A}^{F_\kappa(\cdot)}(1^\lambda)=1\right] - \Pr\left[\mathcal{A}^{R(\cdot)}(1^\lambda)=1\right] \right| \leq \mathsf{negl}(\lambda),
\end{equation*}
with $\kappa \leftarrow \mathcal{K}$ and $R:\mathcal{X}\rightarrow\mathcal{Y}$ a uniformly random function.
\end{definition}

We derive the public RLWE component as $a \leftarrow \mathsf{PRF}_\kappa(T)$. Any party that knows $(\kappa,T)$ can deterministically reconstruct $a$, enabling ciphertext compression. Since $a$ is public in RLWE, its predictability given $(\kappa,T)$ does not affect security provided that $\kappa$ is sampled uniformly at random. For parties that do not know $\kappa$ (e.g., the aggregator), the distribution of $\mathsf{PRF}_\kappa(T)$ remains computationally indistinguishable from uniform over $R_q$.

\textbf{Ciphertext noise is known by the encryptor:} Even under ciphertext compression, and as a consequence of the RLWE assumption, the secret-key versions of both approximate and exact HE schemes satisfy $\mathsf{IND}$-$\mathsf{CPA}$ security. To achieve $\mathsf{IND}$-$\mathsf{CPA}^{\mathsf{D}}$ security, one may again rely on adding large smudging noise after decryption. However, in these schemes, encryptors not only hold $\mathsf{sk}$ but also know the internal ciphertext noise $e_{\mathsf{ct}}$, thereby effectively realizing the oracle $\mathcal{O}_1(\mathsf{ct})$ introduced in Section~\ref{sec:impact-smudging-noise} for the special case $L = 1$ (i.e., $\mathcal{O}_1(\mathsf{ct}) = e_\mathsf{ct} = \mathsf{share}_0$).
This contrasts with the public-key setting, where a party knowing $\mathsf{pk}$ can encrypt but does not know $e_{\mathsf{ct}}$. Focusing on the homomorphic evaluation of $f_\Sigma\big(\mathsf{Enc}(\mathsf{sk}, \vec{w}^1), \ldots, \mathsf{Enc}(\mathsf{sk}, \vec{w}^L)\big)$, and assuming a party holding $\mathsf{sk}$ acts as both encryptor and decryptor, this property allows the encryptor to record the noise used during encryption and track the linear combination applied to the ciphertexts. Consequently, the accumulated noise in the encrypted result can be explicitly computed and canceled during decryption, without relying on high-variance smudging noise. While this assumption is generally unrealistic in generic HE settings, it aligns well with FL deployments, where the number and lifetime of ciphertexts are naturally bounded between aggregation rounds. As discussed in Section~\ref{sec:mask-he}, this observation is instrumental for our protocol.
        
\subsection{Multi-Secret-Key HE Private Aggregation}
\label{sec:mask-he}

The $\mathsf{E_{ashe}}.\mathsf{Add}$ primitive in Definition~\ref{def:ashe} assumes that input ciphertexts are encrypted under the same unknown secret key. In our protocol, however, ciphertexts under different secret keys must be combined while preserving privacy and correctness during decryption. We address this challenge using two additional building blocks: (i) a practical implementation of the oracle $\mathcal{O}_i(\mathsf{ct})$ discussed in Section~\ref{sec:mhe}, which is feasible with secret-key HE schemes; and (ii) an extension of the $\mathsf{E_{ashe}}.\mathsf{Add}$ procedure (see Table~\ref{tab:bfv-ckks-ashe}) that, together with an $L$-party linear secret sharing scheme (Definitions~\ref{def:extadd} and~\ref{def:zero-sharing-lsss}), supports the homomorphic evaluation of linear aggregations $f_\Sigma(\mathsf{Enc}(\mathsf{sk}_1, w_1), \ldots, \mathsf{Enc}(\mathsf{sk}_L, w_L))$ over ciphertexts encrypted under distinct secret keys~\cite{PBCSSZ23}.

We first discuss the required property of secret-key HE schemes to extend $\mathsf{E_{ashe}}.\mathsf{Add}$ under multiple secret keys. Next, we detail the general construction of our proposed multi-secret-key HE-based solution to the multiparty aggregation problem in FL (see Definition~\ref{def:mpcaggfl}).

\textbf{Key homomorphisms with secret-key encryptions:} We make explicit the uniformly random polynomial term $a$ in the encryption procedure of Table~\ref{tab:bfv-ckks-ashe}. A fresh encryption of $m$ under secret key $\mathsf{sk}$ is written as $\mathsf{ct} = \mathsf{Enc}(a,\mathsf{sk}, m) = (-a \cdot \mathsf{sk} + e + \Delta m, a)$.  This formulation reconnects with the collective public-key generation functionality $f_{\pi_{\mathsf{PubKeyGen}}}(\mathsf{sk}_1,\ldots,\mathsf{sk}_L)$ in the $\mathsf{E_{mhe}}$ scheme. Under the CRS model, clients rely on a preshared polynomial $p_1 \in R_q$ to generate RLWE samples $\mathsf{ct}_i = (p_{0,i}, p_1)$ with the same second component. In particular, by renaming $p_1$ as $a$, these RLWE samples can be viewed as fresh secret-key encryptions of zero $\mathsf{Enc}(a, \mathsf{sk}_i, 0)$ for each client $C_i \in \mathcal{C}$. To formalize aggregation across ciphertexts encrypted under distinct secret keys but sharing the same $a$, we introduce the following extended addition procedure:
\begin{definition}[Extended Addition Procedure]\label{def:extadd} Let $\mathsf{ct} = (b_1,a)$ and $\mathsf{ct}' = (b_2,a)$ be two ciphertexts produced by $\mathsf{Enc}(a,\mathsf{sk}_i,m_i)$. The procedure $\mathsf{ExtAdd}$ is defined as follows:
\begin{itemize}
    \item If both ciphertexts are encrypted under the same secret key, then $\mathsf{ExtAdd}(\mathsf{ct},\mathsf{ct}') = \mathsf{Add}(\mathsf{ct},\mathsf{ct}')$.
    \item If the ciphertexts share the same polynomial component $a$ but are encrypted under distinct secret keys, then $\mathsf{ExtAdd}(\mathsf{ct},\mathsf{ct}') = \mathsf{ct}_{\mathsf{add}} = (b_1 + b_2,a)$, where $b_i = -a \cdot \mathsf{sk}_i + e_i + \Delta m_i$ (see Table~\ref{tab:bfv-ckks-ashe}).
\end{itemize}
\end{definition}
By linearity, the result is $(-a(\mathsf{sk}_1+\mathsf{sk}_2) + (e_1+e_2) + \Delta(m_1+m_2), a)$,
i.e., $\mathsf{ct}_{\mathsf{add}} = \mathsf{Enc}(a, \mathsf{sk}_1+\mathsf{sk}_2, m_1+m_2)$. More generally, aggregating $L$ ciphertexts with the same $a$ yields
$\mathsf{Enc}\big(a, f_\Sigma(\mathsf{sk}_1,\ldots,\mathsf{sk}_L), f_\Sigma(m_1,\ldots,$ $m_L)\big)$ for any linear aggregation function $f_\Sigma$ defined by public coefficients $\lambda_i$. Thus, secret-key encryptions with a shared random component satisfy a \emph{key homomorphism}: ciphertext aggregation induces aggregation of both messages and secret keys.

One could then attempt to reuse the collective decryption procedure of $\mathsf{E_{mhe}}$. However, this would introduce a security flaw: the partial decryptions would reuse the input ciphertexts' random component $a$, potentially enabling attacks that leak information about the underlying secret keys. The complete protocol is presented below, where this issue is addressed by preventing the unsafe reuse of shared polynomial components, in line with the approach introduced in~\cite{PBCSSZ23}.

\textbf{Multiparty secret-key private aggregation:} While the baseline MHE-based construction $\mathsf{E_{mhe}}$ from Section~\ref{sec:protocol-mhe} applies a generic $L$-party secret sharing scheme to an unknown ideal secret key, our approach uses a linear secret-sharing layer whose reconstruction coefficients $\lambda_i$ are aligned with the aggregation function $f_\Sigma$. In particular, we consider the following $L$-party secret sharing scheme:
\begin{definition}[Zero-Sharing Linear Secret Sharing Scheme]\label{def:zero-sharing-lsss} Let $\mathsf{S} = (\mathsf{S}.\mathsf{Share}, \mathsf{S}.\mathsf{Combine})$ be a linear secret sharing scheme (LSSS) over the ring $R_q$. The algorithm $\mathsf{S}.\mathsf{Share}$ outputs $L$ shares $r_1,\dots,r_L \in R_q$, where $r_1,\dots,r_{L-1}$ are sampled uniformly at random and $r_L$ is computed such that $\sum_{i=1}^{L} \lambda_i r_i = 0$, with $\lambda_1,\dots,\lambda_L \in \mathbb{Z}_q$ being public reconstruction coefficients. The reconstruction algorithm satisfies $\mathsf{S}.\mathsf{Combine}(r_1,\dots,r_L) = \sum_{i=1}^{L} \lambda_i r_i = 0$. In particular, the reconstruction coefficients correspond to the linear aggregation function $f_\Sigma(r_1, \ldots, r_L)$ used in the FL protocol. We write $\mathsf{S.Share}_i$ to denote the $i$-th share output by $\mathsf{S.Share}$.
\end{definition}

We assume a secure realization of $\mathsf{S}$ against static semi-honest adversaries; the concrete realization in Appendix~\ref{app:aux-zero-sharing} uses pairwise private authenticated channels between clients. Our protocol is summarized in Table~\ref{tab:multi-sk-he} and relies on the following multi-secret-key HE construction:

\begin{definition}[Multiparty Additive Symmetric-Key HE]\label{def:multi-sk-he} Let $\mathsf{E}_{\mathsf{ashe}} = (\mathsf{E.Setup}, \mathsf{E.SecKeyGen}, \mathsf{E.Enc}, \mathsf{E.Dec}, \mathsf{E.Add})$ be the symmetric additive homomorphic encryption scheme from Definition~\ref{def:ashe}, whose security is parameterized by $\lambda$ (see Table~\ref{tab:bfv-ckks-ashe} for concrete $\mathsf{BFV}$ and $\mathsf{CKKS}$ instantiations), and let $\mathsf{S} = (\mathsf{S.Share}, \mathsf{S.Combine})$ be the Zero-Sharing LSSS from Definition~\ref{def:zero-sharing-lsss}. The associated \emph{multiparty symmetric homomorphic encryption scheme} ($\mathsf{E}_{\mathsf{mshe}}$) is defined as follows:\\

(i) During setup, each client $C_i$ samples $(r_i,\mathsf{sk}_i)$ where $r_i \leftarrow \mathsf{S.Share}_i$ and $\mathsf{sk}_i \leftarrow \mathsf{E}_{\mathsf{ashe}}.\mathsf{SecKeyGen}(\lambda)$. The effective encryption key of $C_i$ is $\mathsf{sk}_i + r_i$. By Def.~\ref{def:zero-sharing-lsss}, the shares satisfy $\mathsf{S}.\mathsf{Combine}(r_1,\ldots,r_L)=f_\Sigma(r_1,\ldots,r_L)=0$. Hence, by linearity, $\mathsf{S.Combine}(\mathsf{sk}_1+r_1,\ldots,\mathsf{sk}_L+r_L) = f_\Sigma(\mathsf{sk}_1,\ldots,\mathsf{sk}_L)$, which corresponds to the aggregated (ideal) secret key.\\

(ii) The homomorphic addition procedure $\mathsf{Add}$ is replaced by $\mathsf{ExtAdd}$ from Definition~\ref{def:extadd}.\\

(iii) The scheme is finally defined as $\mathsf{E}_{\mathsf{mshe}} =
(\mathsf{E}_{\mathsf{ashe}}.\mathsf{Enc},$ $\mathsf{E}_{\mathsf{ashe}}.\mathsf{Dec}, \mathsf{ExtAdd}, \mathsf{E}_{\mathsf{ashe}}^{\mathsf{S}})$, where $\mathsf{E}_{\mathsf{ashe}}^{\mathsf{S}} = (\pi_{\mathsf{SecKeyGen}}, \pi_{\mathsf{Enc}},$ $\pi_{\mathsf{Dec}})$ denotes a set of multiparty protocols run by the clients $C_i\in\mathcal{C}$ with the following \emph{private ideal functionalities}: 
\begin{itemize}
    \item \emph{Ideal secret-key generation}: $f_{i,\pi_{\mathsf{SecKeyGen}}}(\lambda) $ \\$= \left(\mathsf{S}.\mathsf{Share}_i, \mathsf{E}_{\mathsf{ashe}}.\mathsf{SecKeyGen}(\lambda)\right) = (r_i, \mathsf{sk}_i)$.
    \item \emph{Correlated encryption}:  
    $f_{i,\pi_{\mathsf{Enc}}}((r_i,\mathsf{sk}_i),m_i) $ \\$= \mathsf{E}_{\mathsf{ashe}}.\mathsf{Enc}(a,\mathsf{sk}_i + r_i,m_i)$, where $a \in R_q$ is sampled internally once per aggregation round (or once per ciphertext block when multiple blocks are needed to encode each client update) and used consistently across all clients $C_i \in \mathcal{C}$.
    \item \emph{Collective decryption}: $f_{\pi_{\mathsf{Dec}}}((\mathsf{sk}_1, r_1), (\mathsf{sk}_2, r_2), \ldots,(\mathsf{sk}_L,$ $r_L), \mathsf{ct}) = \mathsf{E}_{\mathsf{ashe}}.\mathsf{Dec}(\mathsf{S.Combine}(\mathsf{sk}_1 + r_1, \mathsf{sk}_2 + r_2, \ldots,$ $\mathsf{sk}_L + r_L),\mathsf{ct})$.
\end{itemize}
\end{definition}

\begin{table}[!t] 
	\renewcommand{\arraystretch}{1.3}
	\caption{Multiparty Secret-Key Private Aggregation Protocol.}
	\label{tab:multi-sk-he}
	\centering \small
	\begin{threeparttable}[b]
	\begin{tabular}{|m{0.5in}|m{2.6in}|}
		\hline
		\multicolumn{2}{|p{3.2in}|}{
		
    \textbf{Public input}: ideal aggregation functionality $f_\Sigma$
    \newline
    \textbf{Private input}: $\vec{w}^i$ for each $C_i \in \mathcal{C}$\newline
    \textbf{Output} for all clients in $\mathcal{C}$, and for the \textit{Aggregator} if required: the exact aggregate $\vec{w}^* = f_\Sigma(\vec{w}_1, \ldots, \vec{w}_L)$ for BFV, and an approximation for CKKS
		}\\
		\hline
		Setup & All clients $C_i$ instantiate the Multiparty secret-key scheme $\mathsf{E_{mshe}}$ \newline
		$(r_i, \mathsf{sk}_i) = \mathsf{E_{mshe}}.\pi_{i, \mathsf{SecKeyGen}}(\lambda, \kappa)$\tnote{*} \\		
		\hline
		Input & Each $C_i$ encrypts\tnote{$\dagger$}\, its input $\vec{w}_i$ and provides it to the \textit{Aggregator}\tnote{+} \newline
		$a \leftarrow R_q$ shared across all $C_i$ \newline
		$c_{\vec{w}^i} = \mathsf{E_{mshe}.Enc}(a, \mathsf{sk}_i + r_i, \vec{w}_i)$\\		
		\hline
		Evaluation & The \textit{Aggregator} computes\tnote{$\dagger$}\, the encrypted output for the ideal functionality $f_\Sigma$ relying on $\mathsf{E_{mshe}.ExtAdd}$\newline
		$c_{\vec{w}^*}= f_\Sigma(c_{\vec{w}_1}, c_{\vec{w}_2}, \ldots, c_{\vec{w}_L})$\\
		\hline
		Output & The parties in $\mathcal{C}$ execute\tnote{$\dagger$}\, the decryption protocol \newline
		$\vec{w}^* = \mathsf{E_{mshe}}.\pi_{\mathsf{Dec}}((\mathsf{sk}_1, r_1), \ldots, (\mathsf{sk}_L, r_L), c_{\vec{w}^*})$\\
		\hline
	\end{tabular}
	   \begin{tablenotes}
     \item[*] $\kappa$ parameterizes the additive homomorphic capacity of $\mathsf{E_{ashe}}$.
     \item[$\dagger$] Remember that, for readability, $c_{\vec{w}^i}$ and $c_{\vec{w}^*}$ denote the single-ciphertext case. In the general case, each input $\vec{w}^{\,i}$ is encoded into $N_{\mathsf{ctx}}$ ciphertexts, each with a different $a_j$, and all operations are performed blockwise.
     \item[+] The \textit{Aggregator} is in charge of performing the aggregation.
   \end{tablenotes}
  \end{threeparttable}
    \vspace{-0.6cm}
\end{table}

\textbf{Remark on ciphertext encoding:} For simplicity of exposition, we describe the protocol in Table~\ref{tab:multi-sk-he} as if each client update $\vec{w}_i$ were encrypted into a single ciphertext. When needed, $\vec{w}_i$ is partitioned into $N_{\mathsf{ctx}}$ blocks and encrypted blockwise by each client as $\{\mathsf{ct}_{i,j}=(b_{i,j},a_j)\}_{j=1}^{N_{\mathsf{ctx}}}$. Aggregation with $\mathsf{ExtAdd}$ and collective decryption are applied blockwise. Throughout this section, this blockwise interpretation is assumed where appropriate. Observe that when the same secret key is reused across rounds, the public components $a_j$ \textbf{must be pairwise distinct across all ciphertexts encrypted under that key, including future aggregation rounds under the same setup.} See Section~\ref{sec:mask-he-concrete-instantiations} for CKKS- and BFV-based instantiations, and Section~\ref{sec:proofoverview} and Appendix~\ref{app:proof-theorem1} for details on the protocol's security guarantees.

\subsubsection{Concrete Multiparty HE instantiation:}
\label{sec:mask-he-concrete-instantiations}

In Section~\ref{sec:impact-smudging-noise}, we introduced an oracle $\mathcal{O}_i(\cdot)$ allowing client $C_i$ to obtain a share of the aggregated ciphertext noise. We now formalize this abstraction as a client-local programmable functionality. In our construction, the oracle can be indexed by the relevant ciphertext component (e.g., the $a$ component of $\mathsf{ct}$ in Table~\ref{tab:bfv-ckks-ashe}), instead of the full ciphertext.
\begin{definition}[Client-Local Programmable Oracle $\mathcal{O}_i(\cdot)$]\label{def:oracle}
For each client $i$, we define an abstract stateful functionality $\mathcal{O}_i : R_q \rightarrow R_q \cup \{\bot\}$ with the following interfaces:
\begin{itemize}
    \item $\mathsf{Setup}$: The functionality is initialized as an empty table. Formally, for all $a \in R_q$, we set $\mathcal{O}_i(a) = \bot$. It is local to client $i$ and not accessible to other parties.
    \item $\mathsf{Write}(a,e)$: Upon input $(a,e)$ with $a,e \in R_q$, the client updates the internal state by setting $\mathcal{O}_i(a) := e$. Optionally, this assignment may be restricted to previously undefined entries, i.e., only if $\mathcal{O}_i(a) = \bot$.
    \item $\mathsf{Read}(a)$: Upon input $a \in R_q$, the functionality returns $\mathcal{O}_i(a)$, i.e., it outputs $e$ if $\mathcal{O}_i(a)=e$, and $\bot$ otherwise.
\end{itemize}
Thus, $\mathcal{O}_i$ behaves as a deterministic client-local lookup table whose entries are defined during the protocol execution.
\end{definition}

The oracle $\mathcal{O}_i$ admits a natural realization within our scheme $\mathsf{E_{mshe}}$. By construction, each client $C_i$ already knows its local encryption noise $e_i$, and hence its weighted contribution $\lambda_i \cdot e_i$ to the aggregated ciphertext noise $e$, where $e = f_\Sigma(e_1,\ldots,e_L)$. Consequently, $\mathcal{O}_i$ can be implemented by locally programming the value associated with the relevant ciphertext component (e.g., $a$). We now describe how this mechanism is used in the protocols implementing the ideal functionalities from Definition~\ref{def:multi-sk-he}:
\begin{itemize}
\item $f_{i,\pi_{\mathsf{SecKeyGen}}}(\lambda)$: Each client $C_i$ independently runs $\mathsf{E_{ashe}}.\mathsf{SecKeyGen}(1^\lambda)=\mathsf{sk}_i$ and samples $r_i \leftarrow \mathsf{S.Share}_i$ through an interaction with the other clients implementing $\mathsf{S}$ (see Definition~\ref{def:zero-sharing-lsss}). Finally, $C_i \in \mathcal{C}$ invokes the $\mathsf{Setup}$ interface of its local oracle $\mathcal{O}_i(\cdot)$. See Appendix~\ref{app:aux-zero-sharing} for an example implementation of $\mathsf{S}$.
\item $f_{i,\pi_{\mathsf{Enc}}}((r_i,\mathsf{sk}_i),m_i)$: All clients jointly sample a common random polynomial $a \leftarrow R_q$. Each client $C_i \in \mathcal{C}$ samples $e_i \leftarrow \chi$ and sends $b_i = -a\cdot(\mathsf{sk}_i + r_i) + e_i + \Delta m_i$ to the aggregator. Finally, each client invokes $\mathsf{Write}(a,e_i)$ to update its local oracle $\mathcal{O}_i(\cdot)$.
\item $f_{\pi_{\mathsf{Dec}}}((\mathsf{sk}_1, r_1), (\mathsf{sk}_2, r_2), \ldots, (\mathsf{sk}_L, r_L), \mathsf{ct})$: The collaborative decryption protocol proceeds in two phases.
	\begin{itemize}
	        \item Given a ciphertext $\mathsf{ct} = (b,a)$ encrypted under the ideal secret key $\mathsf{sk} = f_\Sigma(\mathsf{sk}_1 + r_1, \ldots, \mathsf{sk}_L + r_L)$,
each client retrieves $e_i = \mathsf{Read}(a)$, samples $\tilde{e}_i \leftarrow \chi$, and computes its partial decryption share as $h_i = \lambda_i a \cdot \mathsf{sk}_i - \lambda_i e_i + \tilde{e}_i$.
        	\item The decryption shares $h_i$ are sent to a designated output party,\footnote{Any client may act as the designated output party; the aggregator may do so only when it is also a receiver in that execution.} which computes:
{
\setlength{\mathindent}{0pt}
\begin{align*}
             d & \mbox{ } = \mbox{ } {\left[ b + \sum_{C_i \in \mathcal{C}}h_{i}\right]}_q \\ 
               & \mbox{ } = \mbox{ } f_\Sigma(b_1, \ldots, b_L) + \sum_{C_i \in \mathcal{C}} \left( \lambda_i a \cdot \mathsf{sk}_i - \lambda_i e_i + \tilde{e}_i \right) \mbox{ } \\ 
               & \mbox{ } = \mbox{ } -a\cdot \mathsf{sk} + e + \Delta \cdot m + (a\cdot \mathsf{sk}  - e + \tilde{e}) \mbox{ } \\ 
               & \mbox{ } = \mbox{ } \Delta \cdot \underbrace{m}_{f_\Sigma(m_1, \ldots, m_L)} + \underbrace{\tilde{e}}_{\sum_{i=1}^L{\tilde{e}_i}}.
\end{align*}
}
    	\end{itemize}
\end{itemize}

The final step for decryption depends on whether BFV or CKKS is used as the underlying secret-key HE scheme:
\begin{itemize}
    \item For BFV, the obtained decryption is $m = {[ \lfloor \frac{t}{q}d \rceil ]}_t$.
    \item For CKKS, the obtained decryption is $\tilde{m} =  \frac{d}{\Delta} = m + \frac{\tilde{e}}{\Delta} \approx m$ for large enough $\Delta$. Therefore, if $\|\tilde{e}_i\|\le B$ for every client, then $\|\tilde{m} - m\| \le LB/\Delta$ (see Section~\ref{sec:perf-eval}).
\end{itemize}

\textbf{Further practical considerations:} Our protocol presents several advantages over the baseline MHE-based protocol from Section~\ref{sec:mhe}. Since the second ciphertext component is shared across clients and remains unchanged under $\mathsf{ExtAdd}$, only one ciphertext component must be transmitted in each direction: clients send a component of their encrypted inputs and the aggregator returns a component of the aggregated ciphertext. As a result, the computational cost of aggregation is reduced by at least a factor of two. These estimates do not account for the overhead of public-key encryption compared to secret-key encryption, such as extra polynomial operations and larger ciphertext noise. Moreover, the setup phase does not require $\mathsf{cpk}$, but only a lightweight execution of the secret-sharing scheme $\mathsf{S}$ once during setup (see Appendix~\ref{app:aux-zero-sharing}).  Our construction also removes the dependence of ciphertext size on $\lambda$. This can further reduce the size of exchanged encrypted data, with concrete gains depending on the target precision, as quantified in Section~\ref{sec:perf-eval}. This contrasts with other secret-key approaches such as~\cite{PBCSSZ23}, where additional redundancy is required to compensate for the unknown noise in the aggregated ciphertext during decryption. The oracle $\mathcal{O}_i(\cdot)$ from Definition~\ref{def:oracle} avoids this overhead by allowing exact removal of the ciphertext noise, assuming that clients can trace the encryption errors associated with their inputs during a single aggregation round.

Finally, the protocol enables precomputation. The shared polynomial $a$ can be generated during setup, and partial decryptions can be precomputed since, unlike in MHE-based schemes, they do not depend on the aggregated result. Using a $\mathsf{PRF}$ to derive the $a$ values per round (see Section~\ref{sec:ask-he}) further reduces interaction, requiring only agreement on a random seed $\kappa$ and an update rule for $T$. Lower bounds for the modulus $q$ and a quantitative comparison with the MHE-based protocol are provided in Section~\ref{sec:perf-eval}. 

\subsection{Simulation-based security of our protocol}
\label{sec:proofoverview}

We now state the main security result for our aggregation protocol. Recall that Definition~\ref{def:mpcaggfl} formalizes the \emph{secure multiparty aggregation problem in federated learning}, where a set of clients $\mathcal{C}=\{C_1,\ldots,C_L\}$ holds private model updates $\vec{w}_i$ and jointly computes $\vec{w}^* = f_\Sigma(\vec{w}_1,\ldots,\vec{w}_L)$, or a randomized approximation of this aggregate, with the assistance of an aggregator. 
The ideal aggregation function $f_\Sigma$ corresponds to a linear combination $\sum_{i=1}^{L}\lambda_i \vec{w}_i$ with public coefficients
$\lambda_i \in \mathbb{Z}_q$. We assume $\gcd(q,\lambda_i)=1$ for all $i$, ensuring that the
coefficients admit multiplicative inverses in $\mathbb{Z}_q$ and avoiding non-invertible reconstruction coefficients in the homomorphic aggregation procedure.

Our goal is to ensure \emph{input privacy}: an adversary should learn nothing about the inputs of honest clients beyond what can be inferred from corrupted inputs and the final aggregated output. We consider a static \emph{semi-honest} adversary that may corrupt the aggregator and up to $L-1$ clients. If exactly one client remains honest, the adversary inevitably learns that client's contribution from the aggregate; this leakage is inherent to the functionality and falls outside the scope of secure computation. In practical FL deployments, protecting the aggregated model against inference attacks typically requires complementary mechanisms. Differential Privacy can be incorporated into the aggregation by injecting noise locally at the client side before encryption~\cite{WLDMYFJQP20,SCSSG23}. In particular, approximate HE schemes such as CKKS introduce numerical noise in the decrypted result, which may provide output perturbation and potentially be leveraged to obtain formal differential privacy guarantees~\cite{Ogilvie24}.

Our proof follows the approach of the threshold fully homomorphic encryption construction for secure multiparty computation from~\cite{AJLTVW12}, specialized to our aggregation functionality and simplified to the semi-honest setting. The argument relies on the homomorphic properties of RLWE-based encryption and its key-homomorphic structure, together with the smudging technique introduced in that work to prevent information leakage from partial decryptions. A key difference in our construction is the introduction of the client-local oracle $\mathcal{O}_i(\cdot)$ from Definition~\ref{def:oracle}. This functionality allows each client to locally recover the noise component associated with its ciphertext contribution during collaborative decryption. Consequently, the protocol can deterministically cancel the ciphertext noise without relying on large smudging distributions, simplifying the noise analysis while preserving the indistinguishability guarantees required by the simulation-based proof. We remark that, throughout the security argument, $\vec{ w}^* = f_\Sigma(\vec w_1,\ldots,\vec w_L)$ denotes the exact aggregate, which is the value made available to the simulator in the ideal execution. The BFV and CKKS instantiations differ only in output correctness: BFV returns $\vec{w}^*$ under its correctness condition, whereas CKKS returns a bounded randomized approximation of $\vec{w}^*$.

\begin{theorem}[Security of the aggregation protocol]\label{th:main-simulation-security} Assume the RLWE assumption holds for the parameters of the underlying single-key HE schemes. Then the protocol from Definition~\ref{def:multi-sk-he}, instantiated with either the exact or approximate HE schemes described in Section~\ref{sec:mask-he-concrete-instantiations}, provides simulation-based input privacy for the aggregation functionality $f_\Sigma$ from Definition~\ref{def:mpcaggfl} in the presence of a static semi-honest adversary that may corrupt the aggregator and a subset of at most $L-1$ clients. The BFV-based instantiation realizes $f_\Sigma$ exactly under its correctness condition, whereas the CKKS-based instantiation provides a randomized approximation of $f_\Sigma$ with bounded error.
\end{theorem}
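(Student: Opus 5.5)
We will prove the theorem by constructing, for every admissible corruption set (the aggregator $Agg$ together with a subset $\mathcal{C_A}$ with $|\mathcal{C_A}|\le L-1$), a simulator $\mathcal{S}$ that receives only the corrupted inputs $\{\vec w_i\}_{C_i\in\mathcal{C_A}}$ and the exact aggregate $\vec w^*$. We then argue by a hybrid sequence that the simulated view is computationally indistinguishable from the real view. Since the adversary is semi-honest and static, $\mathcal{S}$ can run the corrupted parties honestly on their true inputs and randomness. It only has to produce the messages of the honest set $\mathcal{H}=\mathcal{C}\setminus\mathcal{C_A}$, which is nonempty. These messages are:
\begin{itemize}
\item their zero-sharing messages during setup;
\item their ciphertext components $b_i$ during the input step;
\item their partial decryptions $h_i$ during the output step.
\end{itemize}
Without loss of generality we treat a single ciphertext block. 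Blocks with pairwise distinct $a_j$ are handled by a straightforward repetition of the argument, using the requirement stated in the remark on ciphertext encoding.

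The simulator works as follows.
\begin{itemize}
\item \textbf{Setup.} It invokes the simulator of the secure realization of $\mathsf{S}$ (assumed secure against static semi-honest adversaries). This fixes the corrupted shares $\{r_i\}_{C_i\in\mathcal{C_A}}$, while the honest shares remain uniform subject to $\sum_i\lambda_i r_i=0$.
\item \textbf{Input step.} It samples the common $a\leftarrow R_q$ and outputs uniform $b_i\leftarrow R_q$ for every $C_i\in\mathcal{H}$.
\item \textbf{Output step.} It samples fresh $\tilde e_i\leftarrow\chi$ for every $C_i\in\mathcal{H}$. It then chooses the honest shares $h_i$ uniformly, subject to the single linear constraint
\[
\sum_{C_i\in\mathcal{H}} h_i \;=\; \Delta\vec w^* + \sum_{C_i\in\mathcal{H}}\tilde e_i \;-\; b \;-\; \sum_{C_i\in\mathcal{C_A}} h_i ,
\]
where $b=f_\Sigma(b_1,\ldots,b_L)$ is the aggregate ciphertext component.
\end{itemize}
This constraint guarantees that the reconstructed value $d$ equals $\Delta\vec w^*+\tilde e$, exactly as in the correctness computation of Section~\ref{sec:mask-he-concrete-instantiations}.

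The hybrids are the following.
\begin{itemize}
\item $H_0$ is the real execution.
\item $H_1$ replaces the setup transcript with the output of the simulator for $\mathsf{S}$.
\item $H_2$ rewrites the honest $h_i$ in terms of the other variables. For honest $C_i$ we have $\lambda_i a\,\mathsf{sk}_i-\lambda_i e_i=-\lambda_i\big(b_i+a r_i-\Delta m_i\big)$, since $b_i=-a(\mathsf{sk}_i+r_i)+e_i+\Delta m_i$. Hence every honest $h_i$ is a deterministic function of $b_i$, $r_i$, $m_i$ and the fresh $\tilde e_i$, and no longer refers to $\mathsf{sk}_i$ or $e_i$. This step is an identical rewriting, and it is exactly where the oracle $\mathcal{O}_i$ earns its keep: no statistical smudging is needed, because the noise $e_i$ is cancelled algebraically rather than hidden.
\item $H_3$ replaces each honest $b_i$ by a uniform element of $R_q$. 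This step relies on RLWE under the honest key $\mathsf{sk}_i$ with the fresh public component $a$: the pair $(a,-a\,\mathsf{sk}_i+e_i)$ is pseudorandom, and shifting it by the $\mathsf{sk}_i$-independent value $-a r_i+\Delta m_i$ preserves uniformity. The reduction is possible precisely because, after $H_2$, the only appearance of $\mathsf{sk}_i$ and $e_i$ is inside $b_i$. When several blocks or rounds are involved, we use multi-sample RLWE with distinct $a_j$; if $a$ is derived via $\mathsf{PRF}_\kappa(T)$, we first add a PRF hybrid.
\item $H_4$ is the step in which the honest $h_i$ are shown to depend on the inputs only through their sum. Each honest $h_i$ contains the term $-\lambda_i a r_i$, where the $r_i$ are zero-shares that are uniform from the adversary's point of view apart from the constraint $\sum_{i\in\mathcal{H}}\lambda_i r_i=-\sum_{i\in\mathcal{C_A}}\lambda_i r_i$. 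Because $\gcd(\lambda_i,q)=1$, the honest terms $a r_i$ therefore act as a one-time additive mask on the honest $h_i$ whenever $a$ is a unit.
\end{itemize}
Consequently, the tuple $(h_i)_{i\in\mathcal{H}}$ is uniform subject only to its sum. That sum equals $\Delta\sum_{\mathcal{H}}\lambda_i m_i+\sum\tilde e_i-\sum_{\mathcal{H}}\lambda_i b_i-\ldots$, which is determined by $\vec w^*$ and the corrupted inputs, and this is exactly the simulator's distribution.

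We expect $H_4$ to be the main obstacle, for two reasons.
\begin{itemize}
\item \textbf{Non-unit $a$.} If $a$ is not a unit in $R_q$, then $a r_i$ is not uniform. We would handle this either by noting that $a$ is a unit except with negligible probability for the chosen $q$ (NTT-friendly primes and $n$ large), or by absorbing the non-uniformity into the RLWE hybrid. In the latter approach, the uniform $b_i$ from $H_3$ itself supplies the mask, through the term $-\lambda_i b_i$ in $h_i$.
\item \textbf{Reuse of $r_i$.} The $r_i$ are reused across blocks and rounds. The mask $a_j r_i$ is then correlated across different $a_j$, so we would rely on the uniform $b_{i,j}$ rather than on $r_i$ for masking.
\end{itemize}
We would therefore first try to carry out $H_4$ using only the $b_i$ masks, which makes the argument per-block and independent of how $r_i$ is reused.

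Finally, correctness is immediate from the displayed computation $d=\Delta m+\tilde e$.
\begin{itemize}
\item For BFV, rounding recovers $\vec w^*$ exactly whenever $\|\tilde e\|\le LB<\frac{q}{2t}-\frac t2$.
\item For CKKS, the output is $\vec w^*+\tilde e/\Delta$ with $\|\tilde e/\Delta\|\le LB/\Delta$. Since $\tilde e$ is fresh and independent of the inputs, the simulator, which holds $\vec w^*$, can reproduce this randomized approximation exactly. This yields the bounded-error statement of the theorem.
\end{itemize}
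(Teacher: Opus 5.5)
The gap is in $H_4$, together with your claim that treating one block is without loss of generality. Your $H_2$ rewriting is correct and is exactly the paper's change of coordinates: the adversary effectively holds $z_{i,j}:=\lambda_i b_{i,j}+h_{i,j}=-\lambda_i a_j r_i+\lambda_i\Delta m_{i,j}+\tilde e_{i,j}$. Your $H_3$ (RLWE under $\mathsf{sk}_i$, which after $H_2$ occurs only inside $b_i$) is also sound.

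The problem is that $r_i$ is fixed at setup and reused in all $N_{\mathsf{ctx}}$ blocks of an update, and in later rounds. That is exactly the setting of the theorem, so $a_j r_i$ is a pad used many times. Your fallback of masking with the uniform $b_{i,j}$ does not work. The aggregator receives $b_{i,j}$, so $-\lambda_i b_{i,j}$ is a known shift rather than a mask, and the adversary computes $z_{i,j}$ directly. If the $\tilde e_{i,j}$ were absent, two blocks would give $a_1^{-1}z_{i,1}-a_2^{-1}z_{i,2}=\lambda_i\Delta\,(a_1^{-1}m_{i,1}-a_2^{-1}m_{i,2})$, an outright leak of honest inputs. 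So beyond a single block, no argument that uses $\tilde e$ only for correctness, as yours does, can succeed. Even in the single-block case, your claim that $a$ is a unit except with negligible probability fails for the paper's parameters. For the NTT-friendly limbs of parameter set~1 ($n=2^{11}$, about $42$ bits split over two limbs), a uniform $a$ is a non-unit with probability roughly $n\sum_k q_k^{-1}$, which is at least about $2^{-10}$.

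The missing idea is to treat $\{z_{i,j}\}_j$ as RLWE samples with the uniform secret $r_i$, using the fresh decryption noise $\tilde e_{i,j}$ as the RLWE error. The offset $\lambda_i\Delta m_{i,j}$ is harmless, as in Item~1 of Proposition~\ref{prop:ind-enc-dist}.

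\begin{itemize}
\item Given the corrupted parties' setup view, the shares $r_i$ of the non-distinguished honest clients are independent and uniform.
\item Multi-sample RLWE under each such $r_i$ (with distinct $a_j$) therefore turns their $z_{i,j}$ into independent uniform polynomials.
\item The distinguished client's $z_{h,j}$ is then forced by $\sum_{k} z_{k,j}=d_j$, which is precisely your simulator's target distribution.
\end{itemize}

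This is the second of the three RLWE transitions in the paper's $\mathsf{HYB}_1\to\mathsf{HYB}_2$ claim (under $\mathsf{sk}_i$, under $r_i$, under $\mathsf{sk}_i+r_i$). With your ordering the third is unnecessary, because $H_3$ has already made $b_i$ uniform.

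When you write this reduction, note that $d_j$, together with $\vec w^*$ and the corrupted noises, exposes $\sum_{C_i\in\mathcal{H}}\tilde e_{i,j}$. This is a hint on the RLWE errors that must be absorbed, e.g., by splitting the Gaussian.

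Finally, your simulator's constraint should contain $\sum_{i=1}^{L}\tilde e_i$ rather than $\sum_{C_i\in\mathcal{H}}\tilde e_i$. As written, the simulated $d$ omits the corrupted clients' noise, which the adversary knows, so it does not equal $\Delta\vec w^*+\tilde e$.
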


\textbf{Proof overview:} The proof follows a standard simulation-based argument. We construct a simulator that interacts with the adversary in the ideal world using only the inputs of corrupted parties and the output of the ideal aggregation functionality, and produces messages computationally indistinguishable from those of a real execution.

By semantic security of the underlying RLWE encryption, ciphertexts of honest clients can be replaced by encryptions of zero without being detected by the adversary. During collaborative decryption, the simulator programs the partial decryption share of one honest client so that the reconstructed plaintext matches the output of the ideal functionality for BFV, while for CKKS it yields a bounded randomized approximation of that ideal output. In this step, the client-local oracle $\mathcal{O}_i(\cdot)$ plays a key role: since each client records the encryption noise, the protocol can explicitly cancel the aggregated ciphertext noise during decryption. This avoids the need for large smudging noise distributions while preserving indistinguishability of the decryption shares.

A full proof, including the sequence of hybrid arguments and the parameter constraints, is given in Appendix~\ref{app:proof-theorem1}.

\section{Performance evaluation}
\label{sec:perf-eval}
In the previous sections, we presented two approaches for extending single-key HE to the multiparty aggregation problem in FL of Def.~\ref{def:mpcaggfl}, under strong semi-honest adversaries that may control the aggregator and a majority of clients. The first is the MHE-based baseline of Section~\ref{sec:mhe}, where clients encrypt under a collective public key and jointly decrypt the aggregate. The second is the proposed construction of Section~\ref{sec:prop-opt}, which avoids the collective public key by letting each client encrypt under its own secret key.

This section compares the two approaches and evaluates the practical improvement obtained by the proposed construction. Subsections~\ref{sec:perf-q-bounds} and~\ref{sec:perf-noise-bounds} recall the correctness conditions and instantiate the effective noise bounds~used~to select the ciphertext modulus $q$. These bounds are then~used in Subsection~\ref{sec:perf-cipher-expansion} to quantify the reduction in ciphertext expansion. Finally, Subsections~\ref{sec:perf-bfv-ckks} and~\ref{sec:perf-runtimes} compare BFV/CKKS instantiations and report implementation runtimes.

Throughout Sections~\ref{sec:perf-noise-bounds}--\ref{sec:perf-runtimes}, we consider the unweighted sum, i.e., $\lambda_i=1$ for all $i$, with averaging performed after decryption. This is the most favorable setting for the MHE-based protocol, since evaluating $f_\Sigma$ requires only ciphertext additions. For nontrivial coefficients $\lambda_i$, each ciphertext component would require a scalar multiplication; as MHE ciphertexts contain twice as many polynomial components as ours, this would further increase their relative cost.

\subsection{Correctness bounds and modulus choice}
\label{sec:perf-q-bounds}

We first recall how the decryption correctness conditions from Tables~\ref{tab:bfv-ckks} and~\ref{tab:bfv-ckks-ashe} determine the minimum admissible ciphertext modulus $q$ for the aggregation protocols of Sections~\ref{sec:mhe} and~\ref{sec:prop-opt}. In particular, we use the multiparty derivations given in Proposition~\ref{prop:qmhe} of Appendix~\ref{app:bounds-q-modulus}.

Let $B_{\mathsf{ct}}^{\mathsf{MP}}$ denote an upper bound on the infinity norm of the effective noise tolerated at the end of the aggregation protocol. This notation is generic: depending on the protocol, it is instantiated either with the MHE-based bound or with the smaller bound of the proposed protocol (see Subsection~\ref{sec:perf-noise-bounds}). With this convention, the BFV-based and CKKS-based correctness conditions can be written as follows.

For BFV-based constructions, the correctness condition has the form $B_{\mathsf{BFV}} < \frac{q_{\mathsf{BFV}}}{2t}-\frac{t}{2}$, where $t$ is the plaintext modulus. Equivalently, it is sufficient to choose
\begin{equation}
q_{\mathsf{BFV}} > 2tB_{\mathsf{BFV}}+t^2.
\label{eq:perf-bfv-q}
\end{equation}

For CKKS-based constructions, assume that the aggregated plaintext $m = f_\Sigma(m_1, \ldots, m_L)$ satisfies $\lVert m\rVert < B_m$, where $m_i$ denotes the input plaintext of client $i$. The correctness condition is $\Delta_{\mathsf{CKKS}} B_m+B_{\mathsf{CKKS}} < \frac{q_{\mathsf{CKKS}}}{2}$, where $\Delta_{\mathsf{CKKS}}$ is the CKKS scaling factor. Writing the approximation error as $\epsilon=B_{\mathsf{CKKS}}/\Delta_{\mathsf{CKKS}}$ (see Section~\ref{sec:heblocks}), it suffices to choose
\begin{equation}
q_{\mathsf{CKKS}} > 2B_{\mathsf{CKKS}}\left(\epsilon^{-1}B_m+1\right).
\label{eq:perf-ckks-q}
\end{equation}
When inputs are normalized so that $B_m\leq 1$, the parameter $\epsilon^{-1}$ can be directly interpreted as the target precision of the approximate aggregation. For a fixed precision target, both BFV and CKKS therefore require a modulus essentially linear in the corresponding decryption noise bound. More precisely, in the parameter regimes where the precision term dominates the additive constants, we can write
\begin{equation}
q \simeq C \cdot B_{\mathsf{ct}},
\label{eq:perf-q-cb}
\end{equation}
with $C\simeq 2t$ for BFV and $C\simeq 2\epsilon^{-1}$ for CKKS, after absorbing $B_m$ into the CKKS precision parameter.

We keep the exact expressions~\eqref{eq:perf-bfv-q} and~\eqref{eq:perf-ckks-q} for the BFV/CKKS comparison in Subsection~\ref{sec:perf-bfv-ckks}. The simplified form~\eqref{eq:perf-q-cb} is used in Subsection~\ref{sec:perf-cipher-expansion} and Appendix~\ref{app:ciphertext-expansion-regimes} to compare the asymptotic ciphertext expansion of the proposed protocol and the MHE-based baseline.

\subsection{Effective noise bounds for HE-based protocols}
\label{sec:perf-noise-bounds}

We now instantiate $B_{\mathsf{ct}}^{\mathsf{MP}}$ for the two protocols. In both cases, $\lVert e_{\mathsf{ct}}\rVert < B_{\mathsf{ct}}$ denotes an upper bound on the infinity norm of the ciphertext noise before final decryption.

\textbf{MHE-based protocol:} For a fresh single-key ciphertext, Lemma~\ref{lemma:correct-decryption-single-key} in Appendix~\ref{app:bounds-q-modulus} gives the bound $\lVert e_{\mathsf{ct}}\rVert<(2n+1)B$, where $\|e\|\leq B$ for any $e\leftarrow\chi$ (see Table~\ref{tab:bfv-ckks}). In the MHE setting of Section~\ref{sec:mhe}, three additional effects must be accounted for. First, the collective public key is generated from $L$ secret-key shares, leading to a fresh-ciphertext noise bound of $B(2nL+1)$. Second, the homomorphic evaluation of $f_\Sigma$ adds $L$ ciphertexts, multiplying this term by $L$. Third, each client adds a smudging noise with $\lVert e_{\mathsf{smg},i}\rVert\mkern-2mu\leq\mkern-2mu B_{\mathsf{smg}}$ during collective decryption.

Therefore, the effective decryption noise in the MHE-based protocol is upper-bounded by
\begin{equation}
B_{\mathsf{ct}}^{\mathsf{MP}} = B_{\mathsf{mhe}} = B_{\mathsf{ct}}+L B_{\mathsf{smg}}
= \left(1+L2^{\lambda/2}\right)B_{\mathsf{ct}},
\label{eq:bound-b-mhe-perf}
\end{equation}
where $B_{\mathsf{ct}}=LB(2nL+1)$ and, as described in Section~\ref{sec:mhe}, $B_{\mathsf{smg}}=2^{\lambda/2}B_{\mathsf{ct}}$. Since $q=\mathcal{O}(B_{\mathsf{mhe}})$ for both BFV and CKKS, the high-variance smudging term increases the required modulus and, consequently, the communication and computational cost.

\textbf{Proposed protocol:} For the protocol from Section~\ref{sec:prop-opt}, Lemma~\ref{lemma:correct-decryption-single-secret-key} gives a fresh single-secret-key ciphertext noise bound $B_{\mathsf{ct}}=B$. Homomorphic evaluation of $f_\Sigma$ adds $L$ ciphertexts, so the noise grows by a factor $L$. During collaborative decryption, each client cancels the noise introduced in its own input ciphertext and replaces it with fresh noise from the same distribution. Thus, no smudging noise depending on the statistical security parameter $\lambda$ is required.

The corresponding effective decryption noise is therefore bounded by
\begin{equation}
B_{\mathsf{ct}}^{\mathsf{MP}} = B_{\mathsf{prop}} = L B.
\label{eq:bound-b-prop-perf}
\end{equation}
This is the main quantitative advantage of the proposed protocol: it removes the high-variance smudging term and the extra noise introduced by the collective public key.

\subsection{Ciphertext expansion comparison}
\label{sec:perf-cipher-expansion}

We compare the ciphertext expansion induced by the MHE-based and proposed protocols. For a fixed HE family, either BFV or CKKS, let $q_{\mathsf{mhe}}$ and $q_{\mathsf{prop}}$ denote the ciphertext moduli required by the two constructions, respectively. Ciphertext sizes are measured in bits, and both protocols are assumed to encrypt model updates of the same length.

From~\eqref{eq:bound-b-mhe-perf} and~\eqref{eq:bound-b-prop-perf}, the effective decryption-noise bounds are $B_{\mathsf{mhe}}=\left(1+L2^{\lambda/2}\right)LB(2nL+1)$ and $B_{\mathsf{prop}}=LB$. Using the approximation in~\eqref{eq:perf-q-cb}, $\log_2{q} \simeq \log_2{B_{\mathsf{ct}}} + \log_2{C}$, where $C$ depends on the selected HE family and precision target. For a model update of length $M$, both protocols require $\lceil M/n\rceil$ encrypted blocks when using the same ring degree $n$. The resulting ciphertext-size ratio between the MHE-based and proposed protocols can be approximated as
\begin{equation}
	R(C) = 2\frac{\log_2{B_{\mathsf{mhe}}} + \log_2{C}}{\log_2{B_{\mathsf{prop}}} + \log_2{C}}.
	\label{eq:perf-ratio-general}
\end{equation}
Here, the factor $2$ captures the reduction from two polynomial components to one, while the remaining factor captures the reduction in modulus size. Since $B_{\mathsf{mhe}}>B_{\mathsf{prop}}$, the ratio is largest when decryption noise dominates the moduli and approaches the structural factor $2$ as $\log_2{C}$ dominates.

Table~\ref{tab:perf-ratio-regimes} summarizes the behavior of~\eqref{eq:perf-ratio-general} in representative precision regimes, using parameters satisfying at least 128 bits of security following~\cite{BCCCCDGHKKLLMPPLSYY24}. The reported values are obtained directly from this expression and rounded over $L\in[5,100]$. Across these regimes, the dominant trend is driven by $\log_2{C}$, while the dependence on $L$ is mild; Appendix~\ref{app:ciphertext-expansion-regimes} provides a detailed derivation, simplifications, and regime-wise approximations. At low-to-moderate precision, the gain mainly comes from removing the smudging factor $2^{\lambda/2}$ and the collective-public-key noise term $(2nL+1)$. At intermediate precision, $\log_2{C}$ affects the proposed protocol, but $q_\mathsf{mhe}$ remains dominated by its larger decryption noise. In the very high precision limit, $\log_2{C}$ dominates both constructions and the relative gain tends to the structural factor $2$.

\begin{table}[!t]
    \centering
    \renewcommand{\arraystretch}{1.25}
    \caption{Approximate ciphertext expansion of the MHE-based construction with respect to the proposed construction.}
    \label{tab:perf-ratio-regimes}
    \scriptsize
    \setlength{\tabcolsep}{4pt}
    \begin{tabular}{p{0.2\columnwidth}|p{0.18\columnwidth}|p{0.16\columnwidth}|p{0.31\columnwidth}}
        \hline \hline
        \centering\arraybackslash\textbf{Precision regime} &
        \centering\arraybackslash\textbf{$\log_2 C$} &
        \centering\arraybackslash\textbf{Approx. $R(C)$}  &
        \centering\arraybackslash\textbf{Representative parameters} \\
        \hline \hline
        Low &
        $5$ &
        $\approx 13$--$16$ &
        $n=2^{12}$, $L \in [5,100]$, $B=19.2$, $\lambda=128$ \\
        Low/moderate &
        $10$ &
        $\approx 10$--$12$ &
        $n=2^{12}$, $L \in [5,100]$, $B=19.2$, $\lambda=128$ \\
        Moderate &
        $20$ &
        $\approx 8$ &
        $n=2^{13}$, $L \in [5,100]$, $B=19.2$, $\lambda=128$ \\
        Intermediate &
        $32$--$64$ &
        $\approx 4.3$--$6.3$ &
        $n=2^{13}$, $L \in [5,100]$, $B=19.2$, $\lambda=128$ \\
        Very high &
        $128$ &
        $\approx 3.3$ &
        $n=2^{14}$, $L \in [5,100]$, $B=19.2$, $\lambda=128$ \\
        \hline \hline
    \end{tabular}
    \vspace{-0.4cm}
\end{table}

\subsection{BFV versus CKKS for private aggregation}
\label{sec:perf-bfv-ckks}

The previous subsections compared the MHE-based and proposed protocols independently of the underlying HE family. We now compare BFV- and CKKS-based instantiations of both approaches. In~\cite{CSBBC24}, the authors argue that threshold variants of CKKS become less practical under the $\mathsf{IND}$-$\mathsf{CPA}^{\mathsf{D}}$ model because the high-variance smudging noise required for security may compromise the precision of the decrypted result. However, in~\cite{MPP24} we showed that this conclusion does not hold for all parameter regimes: CKKS-based private aggregation can be comparable to, and sometimes more compact than, BFV-based aggregation for practical precision targets. We extend that comparison here by considering our proposed protocol, where the variance of the noise used at decryption does not depend on $\lambda$.

The comparison relies on the correctness expressions from Subsection~\ref{sec:perf-q-bounds}, which determine the ciphertext modulus required by each instantiation. Consider a fixed aggregation protocol, either MHE-based or proposed, and let $B_{\mathsf{ct}}^{\mathsf{MP}}$ be the corresponding final effective noise bound, instantiated as $B_{\mathsf{ct}}^{\mathsf{MP}}\!=\!B_{\mathsf{mhe}}$ for the MHE-based protocol and $\mkern-1muB_{\mathsf{ct}}^{\mathsf{MP}}\!=\!B_{\mathsf{prop}}$ for the proposed protocol. Combining Eqs.~\eqref{eq:perf-bfv-q} and~\eqref{eq:perf-ckks-q}, CKKS requires a smaller ciphertext modulus than BFV whenever
\begin{equation}
B_{\mathsf{ct}}^{\mathsf{MP}} >
\frac{2\Delta_{\mathsf{CKKS}}B_m-t^2}{2(t-1)} .
\label{eq:perf-ckks-bfv-first}
\end{equation}
Equivalently, if the aggregated plaintext is normalized so that $B_m<1$ and the CKKS scale is expressed through $\epsilon=B_{\mathsf{CKKS}}/\Delta_{\mathsf{CKKS}}$, the condition becomes
\begin{equation}
\frac{t^2}{2B_{\mathsf{ct}}^{\mathsf{MP}}}+t-1 > \epsilon^{-1}.
\label{eq:perf-he-comparison}
\end{equation}
We refer to Proposition~\ref{prop:qmhe} in Appendix~\ref{app:bounds-q-modulus} for the derivation.

Equation~\eqref{eq:perf-he-comparison} provides a direct criterion for identifying the CKKS-favorable region. For small and medium values of $t$, the linear term $t-1$ is the main contribution, and BFV and CKKS behave similarly when compared at the same bit precision. For larger values of $t$, the quadratic term $t^2/(2B_{\mathsf{ct}}^{\mathsf{MP}})$ becomes relevant, and CKKS may require a smaller modulus than BFV for the same target precision. The position of this transition depends on the final effective noise bound. Since $B_{\mathsf{prop}}\ll B_{\mathsf{mhe}}$, the transition occurs earlier in the proposed protocol than in the MHE-based construction.

Figure~\ref{fig:varyingL_asym-sym} illustrates this effect by showing the CKKS-favorable regions of both protocols. We set $n\!=\!8192$ and $B\!=\!19.2$, and vary $L$ and $\lambda$. The axes represent the bit precisions $\log_2 t$ for BFV and $\log_2\epsilon^{-1}$ for CKKS. A point $(\log_2 t,\log_2\epsilon^{-1})$ is shaded when CKKS requires a smaller ciphertext modulus than BFV.

Let $\mathsf{R}_{\mathsf{mhe}}$ denote the CKKS-favorable region for the MHE-based protocol, and $\mathsf{R}_{\mathsf{prop}}$ the corresponding region for the proposed protocol. Since the proposed protocol has a smaller final effective noise bound, $\mathsf{R}_{\mathsf{mhe}}\subset \mathsf{R}_{\mathsf{prop}}$. As $L$ and $\lambda$ grow, the gap between both regions increases. This follows from the asymptotic behavior $B_\mathsf{mhe} = \mathcal{O}(L^32^{\lambda/2})$ and $B_\mathsf{prop} = \mathcal{O}(L)$. Therefore, the MHE-based protocol is affected by both the number of clients and the high-variance smudging term, whereas the proposed protocol only exhibits linear dependence on $L$ and avoids dependence on $\lambda$.

\begin{figure}
\centering
\subfloat[$L = 8$, $\lambda=32$.]{\includegraphics[width=0.46\columnwidth]{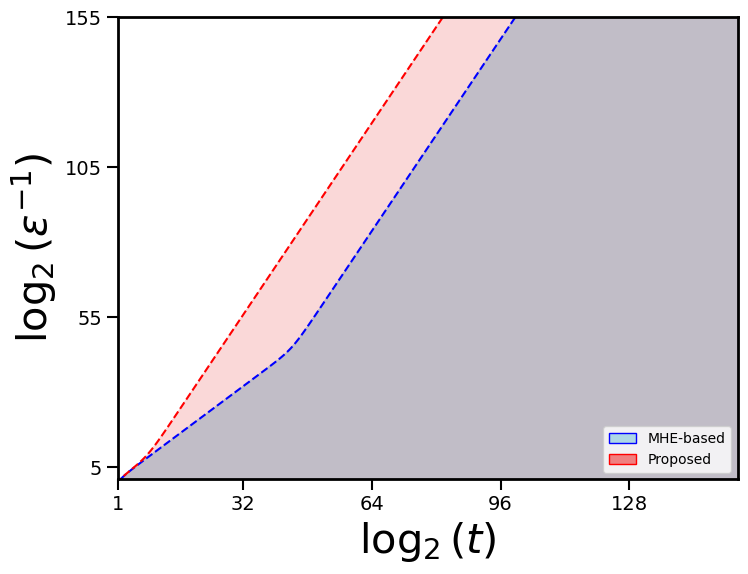}} \hspace{.2in}
\subfloat[$L=\lambda=128$.]{\includegraphics[width=0.46\columnwidth]{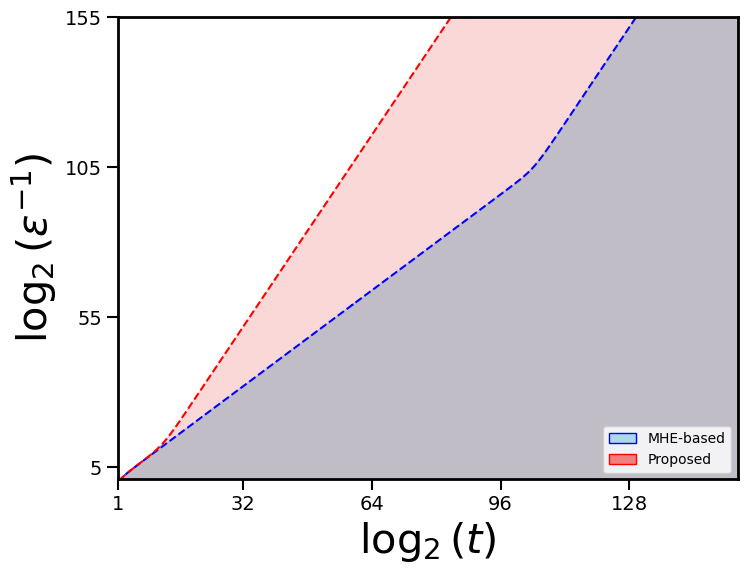}}
\caption{CKKS-favorable regions for the MHE-based protocol and for the proposed protocol.}
\label{fig:varyingL_asym-sym}
    \vspace{-0.4cm}
\end{figure}

As already discussed in Subsection~\ref{sec:perf-cipher-expansion}, the proposed protocol reduces the ciphertext size for both BFV and CKKS. The comparison in this subsection provides an additional conclusion: it also shifts the BFV/CKKS transition point to lower precision values, enlarging the practical parameter region in which CKKS is preferable to BFV. Additional parameter-region plots, heat maps quantifying the modulus gap between BFV and CKKS, and a discussion of approximate aggregation are provided in Appendix~\ref{app:bfv-ckks-details}.

\subsection{Implementation runtimes}
\label{sec:perf-runtimes}

We implemented the two main protocols in Go using Lattigo~\cite{MBTH20}. The MHE-based protocols~\cite{MTBH21} use Lattigo's multiparty package, whereas our proposed protocol builds on the low-level ring package and on the implementation of~\cite{PBCSSZ23}.\footnote{Code~\cite{PBCSSZ23} is available~at~\url{https://github.com/apedrouzoulloa/mkagg}.} We extended the latter to support BFV-based~and CKKS-based variants, and removed the rounding and rescaling steps required in~\cite{MPP24}; this is necessary to achieve the smaller ciphertext expansion predicted by the analysis above.

All experiments were run single-threaded on a system with 2× Intel Xeon Gold 6342 CPUs at 2.80~GHz and 1~TB of RAM. Implementation code is available~at~\url{https://github.com/miguelmorona/MSKHE-priavgagg}. We evaluate four aggregation families: the BFV-based and CKKS-based variants of both the MHE construction and our proposed construction. Precision is measured as $\log_2 t$ for BFV and as $\log_2\epsilon^{-1}$ for CKKS. Each parameter set of the proposed protocol was evaluated over $50$ independent runs, whereas each MHE parameter set was evaluated over $25$ independent runs. The reported runtimes are the corresponding averages.

We report the offline setup separately, since it is executed only once. The total runtime includes the encryption and collaborative-decryption costs of one client, together with the homomorphic aggregation cost. Unless otherwise stated, we set $L=32$ as a representative number of clients for the cross-silo setting considered here. All online runtimes correspond to synthetic model updates with $16{,}384{,}000$ randomly generated parameters per client. More generally, for a model update with $M$ parameters, runtimes scale approximately linearly with the number of ciphertexts, $\lceil M/n\rceil$. Thus, if $\mathsf{Runtime}(n)$ denotes the runtime for one ciphertext with ring degree $n$, the runtime for size $M$ is approximately $\left\lceil \frac{M}{n}\right\rceil \mathsf{Runtime}(n)$. For a fixed ring degree, this scaling factor is architecture-independent and depends only on the number of aggregated parameters. This scaling applies to both the proposed and MHE-based protocols.

The number of clients $L$ affects the protocol phases differently. For fixed cryptographic and model parameters, aggregation and the combination of partial decryption shares scale approximately linearly with $L$, whereas client-local encryption and the computation of each partial decryption share do not directly depend on $L$. Additional dependence arises indirectly through parameter selection and explicitly in the setup phase. Further details are provided in Appendix~\ref{app:setup-comparison}.

\textbf{Runtime of the proposed protocol:} We first report runtimes for the proposed protocol under different precision targets. Table~\ref{tab:exampleparams-parametersets-proposed} lists four parameter sets, with the corresponding runtimes in Table~\ref{tab:runtimes-proposed}. The BFV-based variant is slightly faster than the CKKS-based one for comparable modulus sizes. This difference is mainly due to implementation-level details in decryption. Although CKKS avoids the rounding and reduction modulo $t$ required by BFV, our CKKS implementation must reconstruct each decrypted value modulo $q$, represented over several limbs, before converting it into a floating-point value. When $q$ fits within a machine word, this reconstruction can be performed efficiently using native integer arithmetic. However, the overhead becomes significantly more pronounced when $q$ exceeds the machine-word size, as in parameter set $4$, since the reconstruction then requires multiprecision integer arithmetic. In contrast, the optimized BFV rounding available in Lattigo's ring package first reduces the number of limbs through a resize operation before recovering the plaintext, so that reconstruction only needs to operate over the modulus required to represent the plaintext. The magnitude of this effect should therefore be interpreted as implementation- and parameter-dependent.

\begin{table}[!htbp]
	\centering
	\renewcommand{\arraystretch}{1.3}
	\caption{Example parameter sets for the proposed private aggregation protocol ($\sigma=3.2$ and $128$ bits of security).}
    \label{tab:exampleparams-parametersets-proposed}
	\scriptsize
    \setlength{\tabcolsep}{4pt}
	\begin{tabular}{c|c|c|c|c}
        \hline \hline
		\textbf{Param.} &
        \textbf{Set $1$} $\mathsf{BFV}$ &
        \textbf{Set $2$} $\mathsf{CKKS}$ &
        \textbf{Set $3$} $\mathsf{BFV}$ &
        \textbf{Set $4$} $\mathsf{CKKS}$ \\
		\hline \hline 
    	$\{n,L\}$  & $\{2048,32\}$  & $\{2048,32\}$ & $\{4096,32\}$ & $\{4096,32\}$ \\
		$\{t,\epsilon^{-1},\Delta\}$ [bits]  & $\{21,-,21\}$ & $\{-,30,40\}$ & $\{46,-,46\}$ & $\{-,80,90\}$  \\
        $\{\#\mathrm{Limbs},q\}$ [bits]  & $\{2,42\}$ & $\{2,42\}$ & $\{2,92\}$ & $\{2,92\}$ \\
       
		\hline \hline 
	\end{tabular}
    \vspace{-0.1cm}
\end{table}

\begin{table}[!htbp]
	\centering
	\renewcommand{\arraystretch}{1.3}
	\caption{Implementation runtimes for the proposed private aggregation protocol. The total runtime excludes the offline setup.}
    \label{tab:runtimes-proposed}
	\scriptsize
    \setlength{\tabcolsep}{4pt}
	\begin{tabular}{c|c|c|c|c}
        \hline \hline 
		\textbf{Protocol step} & \textbf{Par. set $1$} & \textbf{Par. set $2$} & \textbf{Par. set $3$} & \textbf{Par. set $4$} \\
		\hline \hline
        Setup ($\mathsf{Client}$) &
        $14.12\,\mathrm{ms}$ &
        $14.84\,\mathrm{ms}$ &
        $36.04\,\mathrm{ms}$ &
        $41.03\,\mathrm{ms}$ \\	
		Encryption ($\mathsf{Client}$) &
        $2.56\,\mathrm{s}$ &
        $2.53\,\mathrm{s}$ &
        $2.53\,\mathrm{s}$ &
        $2.47\,\mathrm{s}$  \\
		Aggregation ($\mathsf{Agg}$) &
        $1.42\,\mathrm{s}$ &
        $1.43\,\mathrm{s}$ &
        $1.37\,\mathrm{s}$ &
        $1.38\,\mathrm{s}$  \\
        Coll. dec. ($\mathsf{Client}$) & 
        $2.13\,\mathrm{s}$ & 
        $2.34\,\mathrm{s}$ &
        $2.06\,\mathrm{s}$ & 
        $13.15\,\mathrm{s}$ \\
        \hline
        Total runtime & 
        $6.12\,\mathrm{s}$ &
        $6.35\,\mathrm{s}$ & 
        $5.97\,\mathrm{s}$ &
        $16.97\,\mathrm{s}$\\
		\hline \hline
	\end{tabular}
    \vspace{-0.3cm}
\end{table}

\textbf{Comparison with the MHE-based protocol:} We next compare the proposed protocol with the MHE-based construction following the blueprint of~\cite{MTBH21}. Table~\ref{tab:parametersets-comparison-mhe-prop} lists two parameter sets for each approach, and Table~\ref{tab:runtimes-comparison-mhe-prop} reports their corresponding runtimes. The higher runtimes of the MHE-based protocols are consistent with the larger modulus and ring degree required to accommodate the high-variance smudging countermeasure. This overhead is already visible during aggregation, whose cost also depends on the number of polynomial components in each ciphertext, and becomes more pronounced during collaborative decryption because of the cost of sampling and adding the large-variance smudging noise. The overhead is particularly significant for the CKKS-based MHE variant: preserving the target precision in the presence of such noise requires encoding the original floating-point plaintexts with a scale $\Delta$ whose bit length exceeds that of a single limb, thereby introducing additional encoding overhead. This additional cost does not arise in the BFV-based MHE variant implemented in Lattigo, where plaintext encoding can be performed efficiently using modular arithmetic. In contrast, our proposed construction benefits from a smaller modulus, single-component ciphertexts, and smudging noise whose variance is independent of $\lambda$. For the proposed protocol, we do not include optimizations based on precomputing encryption and partial-decryption material. Such optimizations would reduce its absolute online runtimes but would not affect the comparison in terms of ciphertext size or required modulus.

Regarding setup, although the MHE-based construction is faster, the additional cost of the proposed protocol remains small in practice. Table~\ref{tab:runtimes-comparison-setup} shows that its client-side setup remains below one second for all evaluated parameter sets even for $L=512$. Thus, even if repeated in every aggregation round, it would not become a dominant component of the per-round runtime for the evaluated model size. Further details on its scaling with $L$ are provided in Appendix~\ref{app:setup-comparison}.

\begin{table}[!htbp]
	\centering
	\renewcommand{\arraystretch}{1.3}
	\caption{Example parameter sets for the MHE-based and for the proposed protocols ($\sigma=3.2$, $\lambda=128$, and $128$ bits of security).}
    \label{tab:parametersets-comparison-mhe-prop}
	\scriptsize
    \setlength{\tabcolsep}{4pt}
	\begin{tabular}{c|c|c|c|c}
        \hline \hline
		\textbf{Param.} &
        \textbf{Set $1$} Prop. &
        \textbf{Set $2$} Prop. &
        \textbf{Set $3$} MHE &
        \textbf{Set $4$} MHE \\
		\hline \hline 
    	$\{n,L\}$  & $\{2048,32\}$  & $\{2048,32\}$ & $\{8192,32\}$ & $\{8192,32\}$ \\
		$\{t,\epsilon^{-1},\Delta\}$ [bits]  & $\{21,-,21\}$ & $\{-,30,40\}$ & $\{20,-,100\}$ & $\{-,20,118\}$  \\
        $\{\#\mathrm{Limbs},q\}$ [bits]  & $\{2,42\}$ & $\{2,42\}$ & $\{2,120\}$ & $\{2,120\}$ \\
		\hline \hline 
	\end{tabular}
    \vspace{-0.1cm}
\end{table}

\begin{table}[!htbp]
	\centering
	\renewcommand{\arraystretch}{1.3}
	\caption{Implementation runtimes for the proposed and MHE-based protocols. The total runtime excludes the offline setup.}
    \label{tab:runtimes-comparison-mhe-prop}
	\scriptsize
    \setlength{\tabcolsep}{4pt}
	\begin{tabular}{c|c|c|c|c}
        \hline \hline 
		\textbf{Protocol step} & \textbf{Par. set $1$} & \textbf{Par. set $2$} & \textbf{Par. set $3$} & \textbf{Par. set $4$} \\
		\hline \hline
        Setup ($\mathsf{Client}$) &
        $14.12\,\mathrm{ms}$ &
        $14.84\,\mathrm{ms}$ & 
        $2.20\,\mathrm{ms}$ &
        $1.75\,\mathrm{ms}$ \\	
		Encryption ($\mathsf{Client}$) &
        $2.56\,\mathrm{s}$ &
        $2.53\,\mathrm{s}$ & 
        $6.45\,\mathrm{s}$ & 
        $12.61\,\mathrm{s}$  \\
		Aggregation ($\mathsf{Agg}$) &
        $1.42\,\mathrm{s}$ & 
        $1.43\,\mathrm{s}$ &
        $2.44\,\mathrm{s}$ &
        $3.13\,\mathrm{s}$  \\
        Coll. dec. ($\mathsf{Client}$) &
        $2.13\,\mathrm{s}$ &
        $2.34\,\mathrm{s}$ & 
        $28.76\,\mathrm{s}$ &
        $37.29\,\mathrm{s}$ \\
        \hline
        Total runtime &
        $6.12\,\mathrm{s}$ &
        $6.35\,\mathrm{s}$ & 
        $37.65\,\mathrm{s}$ &
        $53.03\,\mathrm{s}$\\
		\hline \hline
	\end{tabular}
    \vspace{-0.3cm}
\end{table}

\textbf{Discussion of the performance results:} Overall, the proposed protocol reduces the communication overhead of MHE-based aggregation by removing the smudging term from the decryption noise bound and by transmitting a single polynomial component per encrypted update. The BFV/CKKS comparison remains parameter-dependent: BFV is preferable for exact or low-precision arithmetic, whereas CKKS becomes preferable as precision grows.

\section{Conclusions and future work}
\label{sec:conc-fut}
This work revisited RLWE-based Homomorphic Encryption for private average aggregation in Federated Learning, paying particular attention to the cross-silo setting and the security implications of granting access to aggregated decryptions. We analyzed single-key and multiparty HE approaches, highlighting the limitations imposed by restricted decryption access and the need, in conventional MHE constructions, for large-variance smudging noise.

To overcome this limitation, we proposed a lightweight multi-secret-key HE protocol tailored to average aggregation. Unlike MHE, our construction does not require a collective public key and lets each client encrypt under its own secret key. By compensating the ciphertext noise during decryption, the protocol avoids $\lambda$-dependent smudging noise while preserving security in the semi-honest model. We instantiated the approach with BFV and CKKS, showing that the proposed construction reduces communication overhead and improves runtimes over MHE-based alternatives.

Future work includes extending our construction scope, studying stronger adversarial models, and assessing integration into complete FL training pipelines, including its combination with Differential Privacy and output privacy.

% To print the credit authorship contribution details
%\printcredits

%% Loading bibliography style file
%\bibliographystyle{model1-num-names}
\bibliographystyle{cas-model2-names}

% Loading bibliography database
\bibliography{biblio}

%% The Appendices part is started with the command \appendix;
%% appendix sections are then done as normal sections
%% \appendix

\appendix
\section{Bounds for the ciphertext modulus $q$}
\label{app:bounds-q-modulus}
This appendix collects technical bounds for determining the ciphertext modulus $q$ used in Section~\ref{sec:perf-eval}. We first derive bounds ensuring correct decryption for fresh public-key and secret-key BFV ciphertexts, then use these bounds to compare the ciphertext modulus required by the BFV and CKKS-based multi-client protocol constructions.

\begin{lemma}[Lemma $1$ in~\cite{MPP24}]\label{lemma:correct-decryption-single-key} We follow the notation for $\mathsf{BFV}$ indicated in Table~\ref{tab:bfv-ckks}, and assume that $e \leftarrow \chi$ satisfies $\|e\|\le B$. For a fresh ciphertext $\mathsf{ct} = (c_0, c_1)$, we have $[c_0+c_1s]_q=\Delta m+e_{\mathsf{ct}}$ with $\|e_{\mathsf{ct}}\|\le(2n+1)B$. This~implies that whenever $(2n+1)B<\tfrac{q}{2t}-\tfrac{t}{2}$, decryption works~correctly.
\end{lemma}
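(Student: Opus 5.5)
We need to prove Lemma 1: a fresh public-key BFV ciphertext satisfies $[c_0+c_1 s]_q=\Delta m+e_{\mathsf{ct}}$ with $\|e_{\mathsf{ct}}\|\le(2n+1)B$, and decryption is correct under the stated condition. The plan has two parts: an algebraic expansion with a norm bound, and then a rounding argument.

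\textbf{Step 1: expand the decryption equation.} Substitute $c_0=\Delta m+u p_0+e_0$, $c_1=u p_1+e_1$ and $p_0=-s p_1+e_{\mathsf{pk}}$ from Table~\ref{tab:bfv-ckks}. Then
\[
c_0+c_1 s=\Delta m - u s p_1 + u e_{\mathsf{pk}} + e_0 + u p_1 s + e_1 s .
\]
The $p_1$ terms cancel in the commutative ring $R_q$, giving $e_{\mathsf{ct}}=u e_{\mathsf{pk}}+e_0+s e_1$. If $\|e_{\mathsf{ct}}\|<q/2$, reduction modulo $q$ leaves this representative unchanged, so the centered remainder $[c_0+c_1s]_q$ equals $\Delta m+e_{\mathsf{ct}}$ in the sense that matters for rounding; this hypothesis follows from the final condition.

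\textbf{Step 2: bound the noise.} Use the expansion-factor bound $\|ab\|\le \delta_R\|a\|\|b\|\le n\|a\|\|b\|$ in $\mathbb{Z}[x]/(x^n+1)$. Each coefficient of a product is a signed sum of $n$ coefficient products. Since $u,s\in R_3$ have $\|u\|,\|s\|\le 1$ and $\|e_{\mathsf{pk}}\|,\|e_0\|,\|e_1\|\le B$, we get $\|u e_{\mathsf{pk}}\|\le nB$, $\|s e_1\|\le nB$ and $\|e_0\|\le B$. The triangle inequality then gives $\|e_{\mathsf{ct}}\|\le(2n+1)B$.

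\textbf{Step 3: the rounding argument.} This is the only delicate point, because $\Delta=\lfloor q/t\rfloor$ rather than exactly $q/t$. Write $q=\Delta t+r_t$ with $0\le r_t<t$. Lift $m$ and the centered representative $[c_0+c_1 s]_q$ to integers with $\Delta m+e_{\mathsf{ct}}+kq$ for some integer polynomial $k$. Then
\[
\tfrac{t}{q}\bigl(\Delta m+e_{\mathsf{ct}}+kq\bigr)= m+tk-\tfrac{r_t}{q}m+\tfrac{t}{q}e_{\mathsf{ct}} .
\]
Here $m\in R_t$ has coefficients in $(-t/2,t/2]$, so $\bigl\|\tfrac{r_t}{q}m\bigr\|<\tfrac{t^2}{2q}$. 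The rounding returns $m+tk$, and hence $[\cdot]_t$ returns $m$, whenever
\[
\tfrac{t}{q}\|e_{\mathsf{ct}}\|+\tfrac{t^2}{2q}<\tfrac12 ,
\]
that is, whenever $\|e_{\mathsf{ct}}\|<\tfrac{q}{2t}-\tfrac t2$.

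Combining this with Step 2, the hypothesis $(2n+1)B<\tfrac{q}{2t}-\tfrac t2$ suffices for correct decryption. It also ensures $\|\Delta m+e_{\mathsf{ct}}\|<q/2$, which was needed in Step 1.
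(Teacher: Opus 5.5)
Your proposal is correct and follows the same route as the paper's proof: expand the decryption equation, bound $u e_{\mathsf{pk}}$ and $s e_1$ via the expansion factor $\delta_R \le n$, and absorb the rounding of $\Delta=\lfloor q/t\rfloor$ into an extra $t^2/(2q)$ term (your $r_t/q$ is exactly the paper's $(t/q)r$ with $r_t = t r$); your explicit $kq$ lift and final reduction modulo $t$ make rigorous a step the paper treats implicitly, since it identifies $[c_0+c_1s]_q$ directly with $\Delta m+e_{\mathsf{ct}}$. One small caveat: your closing claim that the hypothesis forces $\|\Delta m+e_{\mathsf{ct}}\|<q/2$ fails for even $t$ (e.g.\ if $t\mid q$ and some coefficient of $m$ equals $t/2$, then $\Delta m$ already has a coefficient equal to $q/2$); however, the $kq$ lift in Step 3 never uses that claim, so you can simply drop it together with the corresponding hypothesis in Step 1.
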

\begin{proof}
We start by computing $[c_0 + c_1 \cdot s]_q = \Delta m + \underbrace{u\cdot e + e_0 + e_1\cdot s}_{e_{\mathsf{ct}}}$, from which we can directly upper-bound the error polynomial $e_{\mathsf{ct}}$ as:
\begin{align}
\lVert e_{\mathsf{ct}} \rVert &= \lVert u\cdot e + e_0 + e_1\cdot s \rVert \\
&\leq \lVert u\cdot e \rVert + \lVert e_0 \rVert + \lVert e_1\cdot s \rVert \\
&\leq \delta_R B + B + \delta_R B \\
&\leq n B + B + n B \\
&= (2n+1)B. \label{eq:bound-error-bfv}
\end{align}
Now, the decryption process requires to multiply by $\tfrac{t}{q}$ and apply a final coefficient-wise rounding:
\begin{equation*}
    \left \lfloor\tfrac{t}{q} ([c_0 + c_1 \cdot s]_q) \right \rceil = \left \lfloor\tfrac{t}{q}(\Delta m + e_{\mathsf{ct}}) \right \rceil,
\end{equation*}
which must output $m$ for correct decryption. If we define $\Delta = \tfrac{q}{t} - r$ with $0\leq r < 1$,  the condition for correct decryption can be expressed equivalently as:
\begin{equation*}
    \left\Vert\tfrac{t}{q}(-rm + e_{\mathsf{ct}})\right \Vert < \tfrac 1 2.
\end{equation*}
By upper-bounding the expression in the left, we obtain:
\begin{equation}
\label{eq:bfv-cond-correctness}
\left\lVert \tfrac{t}{q}(-rm + e_{\mathsf{ct}}) \right\rVert \leq \tfrac{t}{q}(\lVert -rm \rVert + \lVert e_{\mathsf{ct}} \rVert) < \tfrac{t}{q}\left(\tfrac{t}{2}+\lVert e_{\mathsf{ct}} \rVert\right) < \frac{1}{2}.
\end{equation}
Therefore, combining the expressions~\eqref{eq:bound-error-bfv} and~\eqref{eq:bfv-cond-correctness}, we see that decryption correctness holds if $(2n + 1) B < \tfrac q {2t} -\tfrac{t}{2}.$
\end{proof}

\begin{lemma}[Secret-key analogue of the BFV correctness bound]\label{lemma:correct-decryption-single-secret-key} We follow the notation for secret-key $\mathsf{BFV}$ indicated in Table~\ref{tab:bfv-ckks-ashe}, and assume that $e \leftarrow \chi$ satisfies $\|e\| \leq B$. For a fresh ciphertext $\mathsf{ct} = (c_0,c_1) = \mathsf{E}_{\mathsf{ashe}}.\mathsf{Enc}(\mathsf{sk},m)$~under $\mathsf{sk} = s$, with $m\in R_t$ represented coefficient-wise in $(-t/2,t/2]$, we have $[c_0+c_1s]_q = \Delta m + e_{\mathsf{ct}}$ with $\|e_{\mathsf{ct}}\| \leq B$. This implies that whenever $B < \tfrac{q}{2t}-\tfrac{t}{2}$, decryption works correctly.
\end{lemma}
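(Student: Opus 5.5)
The plan is to mirror the proof of Lemma~\ref{lemma:correct-decryption-single-key}, only now without the public-key terms. In both steps I use the representative of $\Delta m + e$ in $(-q/2,q/2]$.

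\textbf{Noise term.} First I compute the decryption expression directly from the encryption procedure of Table~\ref{tab:bfv-ckks-ashe}. Since $c_0 = -a\cdot s + e + \Delta m$ and $c_1 = a$, we get
\begin{equation*}
c_0 + c_1 s = \Delta m + e,
\end{equation*}
so $e_{\mathsf{ct}} = e$ with $e\leftarrow\chi$. The assumption $\|e\|\le B$ then gives $\|e_{\mathsf{ct}}\|\le B$ immediately. No factor $\delta_R \le n$ appears, because no ring product involving an error polynomial survives the cancellation of $a\cdot s$.

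Before using this, I have to check that reducing modulo $q$ does not change the representative. Coefficient-wise, $|\Delta m + e| \le \Delta t/2 + B \le q/2 - rt/2 + B$, where I write $\Delta = q/t - r$ with $0\le r<1$. Under the hypothesis $B < q/(2t) - t/2$, this quantity is below $q/2$ for all relevant $t\ge 1$. So $[c_0+c_1s]_q = \Delta m + e$ holds as an identity over the integers.

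\textbf{Rounding step.} Next I repeat the argument of Lemma~\ref{lemma:correct-decryption-single-key}. With $\Delta = q/t - r$,
\begin{equation*}
\tfrac{t}{q}(\Delta m + e) = m + \tfrac{t}{q}(-rm + e),
\end{equation*}
so rounding returns $m$ whenever $\|\tfrac{t}{q}(-rm+e)\| < 1/2$. Bounding $\|rm\| \le t/2$ (using $r<1$ and the chosen representation of $m$) and $\|e\|\le B$ gives the sufficient condition $\tfrac{t}{q}(t/2 + B) < 1/2$, which is exactly $B < \tfrac{q}{2t} - \tfrac{t}{2}$. The final reduction $[\cdot]_t$ leaves $m$ unchanged, since $m$ already lies in $R_t$.

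The only point needing care is the modular-representative issue in the first step, i.e.\ confirming that no wraparound modulo $q$ occurs before rounding. That is handled by the inequality above; everything else is a direct specialization of the proof of Lemma~\ref{lemma:correct-decryption-single-key}.
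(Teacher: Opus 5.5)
Your route is the paper's: the $a\cdot s$ terms cancel, so $e_{\mathsf{ct}}=e$ and $\|e_{\mathsf{ct}}\|\le B$. The rounding argument of Lemma~\ref{lemma:correct-decryption-single-key}, with $\Delta=\tfrac{q}{t}-r$ and $\|rm\|\le \tfrac{t}{2}$, then gives exactly $B<\tfrac{q}{2t}-\tfrac{t}{2}$.

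However, the extra step you add to rule out wraparound modulo $q$ is false. From $|\Delta m+e|\le \tfrac{q}{2}-\tfrac{rt}{2}+B$, staying below $\tfrac{q}{2}$ requires $B<\tfrac{rt}{2}$. That does not follow from $B<\tfrac{q}{2t}-\tfrac{t}{2}$, whose right-hand side is typically much larger than $\tfrac{t}{2}$. Concretely, take $t$ even with $t\mid q$, so $r=0$. The coefficient $m=\tfrac{t}{2}$ is allowed by the representation $(-\tfrac{t}{2},\tfrac{t}{2}]$ and gives $\Delta m=\tfrac{q}{2}$. Any positive error coefficient then makes the centered representative equal to $\Delta m+e-q$. (For odd $t$, the sharper bound $|m|\le\tfrac{t-1}{2}$ would rescue your inequality, but not in general.) So $[c_0+c_1s]_q=\Delta m+e$ is not an integer identity in general. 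Your closing remark that $[\cdot]_t$ ``leaves $m$ unchanged'' covers only the no-wraparound case.

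The repair is to drop the no-wraparound claim; the correctness conclusion does not need it, and the paper simply writes the identity without discussing representatives. Read the identity as a congruence: $[c_0+c_1s]_q=\Delta m+e-qk$ for some integer coefficient vector $k$. Then $\tfrac{t}{q}[c_0+c_1s]_q=m-tk+\tfrac{t}{q}(-rm+e)$. Since $m-tk$ has integer coefficients and $\|\tfrac{t}{q}(-rm+e)\|<\tfrac12$ under your condition, rounding returns $m-tk$. The final reduction $[\cdot]_t$ removes $tk$ and outputs $m$. With this change your proof is complete and coincides with the paper's.
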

\begin{proof}
We start by computing the decryption phase for a fresh secret-key ciphertext. By the definition of $\mathsf{E}_{\mathsf{ashe}}.\mathsf{Enc}$ in Table~\ref{tab:bfv-ckks-ashe}, we have $(c_0,c_1)=(-a\cdot s+e+\Delta m, a)$ for some $a\leftarrow R_q$ and $e\leftarrow\chi$. Therefore $[c_0+c_1\cdot s]_q = [-a\cdot s+e+\Delta m+a\cdot s]_q = \Delta m+e = \Delta m + e_{\mathsf{ct}}$, and the ciphertext error satisfies
\begin{equation}
	\|e_{\mathsf{ct}}\| = \|e\| \leq B.
	\label{eq:bound-error-bfv-sk}
\end{equation}

The remaining correctness argument is identical to the one in Lemma~\ref{lemma:correct-decryption-single-key}. Namely, after multiplying by $\tfrac{t}{q}$ and applying coefficient-wise rounding, decryption is correct whenever $\left\| \tfrac{t}{q}(-rm+e_{\mathsf{ct}}) \right\| < \tfrac{1}{2}$, where $\Delta=\tfrac{q}{t}-r$ with $0\leq r<1$. Using $\|m\|\leq \tfrac{t}{2}$ and the bound in Eq.~\eqref{eq:bound-error-bfv-sk}, this condition is satisfied if $\tfrac{t}{q}\left(\tfrac{t}{2}+B\right)<\tfrac{1}{2}$. Equivalently, decryption correctness holds whenever $B < \tfrac{q}{2t}-\tfrac{t}{2}$.
\end{proof}

\begin{proposition}[Generalized version of Proposition 1 in~\cite{MPP24}]
\label{prop:qmhe}
Consider the descriptions of the primitives for either the MHE-based protocol (from Tables~\ref{tab:bfv-ckks} and~\ref{tab:mhempc}) or the proposed protocol (from Tables~\ref{tab:bfv-ckks-ashe} and~\ref{tab:multi-sk-he}), instantiated with their $\mathsf{BFV}$ and $\mathsf{CKKS}$ variants. Let $B_{\mathsf{ct}}^{\mathsf{MP}}$ denote the final effective noise bound entering the decryption correctness condition of the considered multi-client protocol. Then, the ciphertext modulus $q$ for $\mathsf{BFV}$ constructions is larger than the one for $\mathsf{CKKS}$ constructions if the following condition is satisfied:
\begin{equation*}
	B^{\mathsf{MP}}_{\mathsf{ct}}>\frac{2\Delta_\mathsf{CKKS} B_m - t^2}{2(t-1)}.
\end{equation*}
This condition applies to both protocols with the substitution $B_{\mathsf{ct}}^{\mathsf{MP}}=B_{\mathsf{mhe}}$ for the MHE-based protocol and $B_{\mathsf{ct}}^{\mathsf{MP}}=B_{\mathsf{prop}}$ for the proposed protocol. Section~\ref{sec:perf-bfv-ckks} uses this expression as Eq.~\eqref{eq:perf-ckks-bfv-first} to compare the cipher expansion of both schemes.
\end{proposition}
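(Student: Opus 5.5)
The plan is to reduce the comparison to the two minimal admissible moduli given by the decryption correctness conditions, with the same effective noise bound $B_{\mathsf{ct}}^{\mathsf{MP}}$ in both, and then compare them by elementary algebra. The first step is to check that, for either protocol, the value reconstructed at the end of collaborative decryption has the form $d=[\Delta m+e]_q$ with $\|e\|\le B_{\mathsf{ct}}^{\mathsf{MP}}$.

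For the MHE-based protocol, the computation in Section~\ref{sec:protocol-mhe} gives $d=\Delta m+e_{\mathsf{ct}}+e_{\mathsf{smg}}$. Summing the fresh bound of Lemma~\ref{lemma:correct-decryption-single-key} (with the collective key, i.e.\ $(2nL+1)B$) over the $L$ aggregated ciphertexts and adding $L B_{\mathsf{smg}}$ gives $B_{\mathsf{mhe}}$. For the proposed protocol, the derivation in Section~\ref{sec:mask-he-concrete-instantiations} gives $d=\Delta m+\tilde e$ with $\|\tilde e\|\le LB=B_{\mathsf{prop}}$. This uses Lemma~\ref{lemma:correct-decryption-single-secret-key} and the noise cancellation via $\mathcal{O}_i$. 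Crucially, the noise analysis uses the same distributions for BFV and CKKS, since Section~\ref{sec:heblocks} aligned the randomness. Hence a single $B_{\mathsf{ct}}^{\mathsf{MP}}$ can be used for both schemes.

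Next I derive the two conditions.
\begin{itemize}
\item For BFV, I repeat verbatim the rounding argument of Lemma~\ref{lemma:correct-decryption-single-key}, writing $\Delta=q/t-r$ with $0\le r<1$ and replacing $\|e_{\mathsf{ct}}\|$ by $B_{\mathsf{ct}}^{\mathsf{MP}}$. This yields $B_{\mathsf{ct}}^{\mathsf{MP}}<\frac{q}{2t}-\frac t2$, i.e.\ the minimal modulus $q_{\mathsf{BFV}}=2tB_{\mathsf{ct}}^{\mathsf{MP}}+t^2$, as in Eq.~\eqref{eq:perf-bfv-q}.
\item For CKKS, correctness requires that the centered reduction modulo $q$ does not wrap around. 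This holds if $\|\Delta_{\mathsf{CKKS}}m+e\|<q/2$, which by the triangle inequality and $\|m\|<B_m$ follows from $\Delta_{\mathsf{CKKS}}B_m+B_{\mathsf{ct}}^{\mathsf{MP}}<q/2$. The resulting minimal modulus is $q_{\mathsf{CKKS}}=2\Delta_{\mathsf{CKKS}}B_m+2B_{\mathsf{ct}}^{\mathsf{MP}}$.
\end{itemize}

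Finally I compare the two moduli:
\[
2tB_{\mathsf{ct}}^{\mathsf{MP}}+t^2>2\Delta_{\mathsf{CKKS}}B_m+2B_{\mathsf{ct}}^{\mathsf{MP}}
\iff 2(t-1)B_{\mathsf{ct}}^{\mathsf{MP}}>2\Delta_{\mathsf{CKKS}}B_m-t^2 .
\]
Dividing by $2(t-1)>0$, which needs $t\ge2$ (any meaningful plaintext modulus), gives exactly the stated condition. Substituting $B_{\mathsf{mhe}}$ or $B_{\mathsf{prop}}$ then covers both protocols. As a side remark, Eq.~\eqref{eq:perf-he-comparison} follows by setting $\epsilon=B_{\mathsf{ct}}^{\mathsf{MP}}/\Delta_{\mathsf{CKKS}}$ and $B_m\le1$.

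The main point needing care is not the algebra but the interpretation of ``larger modulus''. I take it to mean comparing the smallest moduli satisfying each sufficient condition. I must make explicit that these are the governing correctness conditions with the same noise bound, and in particular that the BFV rounding term $t/2$ and the CKKS message term $\Delta_{\mathsf{CKKS}}B_m$ are the only scheme-dependent contributions.
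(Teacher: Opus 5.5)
Your proposal is correct and follows the paper's proof. Both arguments start from the BFV condition $B^{\mathsf{MP}}_{\mathsf{ct}}<\frac{q}{2t}-\frac{t}{2}$ and the CKKS condition $\Delta_{\mathsf{CKKS}}B_m+B^{\mathsf{MP}}_{\mathsf{ct}}<\frac{q}{2}$ with a common noise bound, compare the thresholds $2tB^{\mathsf{MP}}_{\mathsf{ct}}+t^2$ and $2\Delta_{\mathsf{CKKS}}B_m+2B^{\mathsf{MP}}_{\mathsf{ct}}$, and rearrange. Your explicit requirement $t\ge 2$ for dividing by $2(t-1)$ is a detail the paper leaves implicit.

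Your preliminary re-derivation of $B_{\mathsf{mhe}}$ and $B_{\mathsf{prop}}$ is not needed, since the paper (and the statement's wording) simply takes $B^{\mathsf{MP}}_{\mathsf{ct}}$ as given. If you keep it, note that reusing the rounding argument of Lemma~\ref{lemma:correct-decryption-single-key} verbatim for BFV relies on $\|m\|\le t/2$. The unreduced aggregate $\sum_i m_i$ can violate this when $t\nmid q$, and the paper's own bounds share this caveat.
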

\begin{proof}
The derivation relies only on the BFV and CKKS correctness conditions, hence is protocol-independent once the final effective noise bound $B_{\mathsf{ct}}^{\mathsf{MP}}$ is fixed. From the correctness bounds in Subsection~\ref{sec:perf-q-bounds}, both schemes must satisfy:
\begin{itemize}
	\item $q_{\mathsf{BFV}}:  B^{\mathsf{MP}}_{\mathsf{ct}} < \frac{q}{2t} - \frac{t}{2}$.
	\item $q_{\mathsf{CKKS}}: \Delta_{\mathsf{CKKS}}B_m + B^{\mathsf{MP}}_{\mathsf{ct}} < \frac{q}{2}$.
\end{itemize}
Comparing both expressions, we have that $q_{\mathsf{BFV}} > q_{\mathsf{CKKS}}$ if:
\begin{equation*}
	2tB_{\mathsf{ct}}^{\mathsf{MP}} + t^2 > 2\Delta_{\mathsf{CKKS}}B_m + 2B_{\mathsf{ct}}^{\mathsf{MP}}.
\end{equation*}
Rearranging terms gives
\begin{equation*}
	2(t-1)B_{\mathsf{ct}}^{\mathsf{MP}} > 2\Delta_{\mathsf{CKKS}}B_m - t^2,
\end{equation*}
or equivalently,
\begin{equation*}
	B_{\mathsf{ct}}^{\mathsf{MP}} > \frac{2\Delta_{\mathsf{CKKS}}B_m - t^2}{2(t-1)}.
\end{equation*}
\end{proof}

\section{Ciphertext expansion regimes}
\label{app:ciphertext-expansion-regimes}
This appendix derives the approximations summarized in Table~\ref{tab:perf-ratio-regimes}. Starting from~\eqref{eq:perf-ratio-general},
\begin{equation*}
    R(C) =
    2\frac{\log_2 B_{\mathsf{mhe}}+\log_2 C}
    {\log_2 B_{\mathsf{prop}}+\log_2 C},
\end{equation*}
we distinguish three regimes based on the precision term $C$. 

Since Eq.~\eqref{eq:perf-ratio-general} depends only logarithmically on $B_{\mathsf{mhe}}$, $B_{\mathsf{prop}}$, and $C$, the regime approximations are determined by the relative sizes of $\log_2 B_{\mathsf{mhe}}$, $\log_2 B_{\mathsf{prop}}$, and $\log_2 C$.

\textbf{Low-to-moderate precision regime:} Assume first that $\log_2 C \ll \log_2 B_{\mathsf{prop}} < \log_2 B_{\mathsf{mhe}}$. Then
\begin{equation}
    R(C)\simeq
    2\frac{\log_2 B_{\mathsf{mhe}}}{\log_2 B_{\mathsf{prop}}}
    =
    2+
    2\frac{\log_2\left((1+L2^{\lambda/2})(2nL+1)\right)}
    {\log_2(LB)}.
    \label{eq:perf-ratio-low-c}
\end{equation}
In the range where $L2^{\lambda/2}\gg 1$ and $2nL\gg 1$, this becomes
\begin{equation}
    R(C)\simeq
    2+
    2\frac{\lambda/2+\log_2(2nL^2)}
    {\log_2(LB)}.
    \label{eq:perf-ratio-low-c-expanded}
\end{equation}
The gain is therefore driven by removing the smudging factor $2^{\lambda/2}$ and the collective-public-key term $(2nL+1)$.

Up to different constants, the same reasoning and conclusions also apply in the moderate-precision regime, where $C \approx B_\mathsf{prop} < B_\mathsf{mhe}$. In this case, the term $\log_2 C$ in Eq.~\eqref{eq:perf-ratio-general} can no longer be neglected. In the numerator, it contributes approximately $\log_2 B_\mathsf{prop}$, while in the denominator it gives
$\log_2 B_\mathsf{prop}+\log_2 C \approx 2\log_2 B_\mathsf{prop}$. This yields
\begin{equation}
    R(C)\simeq 2 + \frac{\log_2\left((1+L2^{\lambda/2})(2nL+1)\right)}{\log_2(LB)}.
    \label{eq:perf-ratio-moderate-c}
\end{equation}
Using again $L2^{\lambda/2}\gg 1$ and $2nL\gg 1$, we obtain
\begin{equation}
    R(C)\simeq 2 + \frac{\lambda/2+\log_2(2nL^2)}{\log_2(LB)}.
    \label{eq:perf-ratio-moderate-c-expanded}
\end{equation}

\textbf{Intermediate precision regime:} Assume now that we have $\log_2 B_{\mathsf{prop}} < \log_2 C < \log_2 B_{\mathsf{mhe}}$. Here, the denominator of~\eqref{eq:perf-ratio-general} is affected by the precision term, whereas the numerator still contains the large MHE noise bound. If $\log_2 C$ dominates $\log_2 B_{\mathsf{prop}}$, i.e., $\log_2 B_{\mathsf{prop}}\ll\log_2 C$, then
\begin{equation}
    R(C)\simeq 2 + 2\frac{\log_2 B_{\mathsf{mhe}}}{\log_2 C} \approx 2 + \frac{\lambda + 2\log_2 (2nL^3B)}{\log_2 C}.
    \label{eq:perf-ratio-intermediate}
\end{equation}

For precision targets in the range of roughly $40$--$64$ bits and $\lambda = 128$, this second term remains significant for common cross-silo values of $L$ and standard HE ring dimensions.

\textbf{Very high precision regime:} If $\log_2{C}\gg \log_2{B_{\mathsf{mhe}}}$, the precision term dominates both numerator and denominator in~\eqref{eq:perf-ratio-general}, so
\begin{equation}
    R(C)\to 2.
\end{equation}
This is the worst asymptotic regime for the proposed construction in terms of relative ciphertext expansion. Even in this case, the protocol preserves the structural advantage of transmitting one polynomial component instead of two.

\section{Additional BFV/CKKS comparison details}
\label{app:bfv-ckks-details}
This appendix complements the BFV/CKKS comparison of Subsection~\ref{sec:perf-bfv-ckks}. We include separate parameter-region plots for the MHE-based and proposed protocols, heat maps quantifying the modulus $\mathsf{gap}$ in bits, and a brief clarification on the role of approximate HE in private aggregation.

\subsection{Parameter regions for each protocol}

Figures~\ref{fig:perf-bfv-ckks-mhe} and~\ref{fig:perf-bfv-ckks-prop} plot Eq.~\eqref{eq:perf-he-comparison} for the MHE-based construction and for the proposed construction, respectively. In Figure~\ref{fig:perf-bfv-ckks-mhe}, we set $n=8192$ and $B=19.2$. The first row fixes $L=10$, with $\lambda\in\{32,128\}$, and the second fixes $\lambda=128$, with $L\in\{8,128\}$. In Figure~\ref{fig:perf-bfv-ckks-prop}, we set $n=8192$, $B=19.2$, and $\lambda=128$, with $L\in\{8,128\}$. These parameters are representative of cross-silo settings and compatible with at least $128$ bits of security for the represented precision ranges.

\begin{figure}
    \centering
	\subfloat[$L = 10, \lambda = 32$.]{\includegraphics[width=0.46\columnwidth]{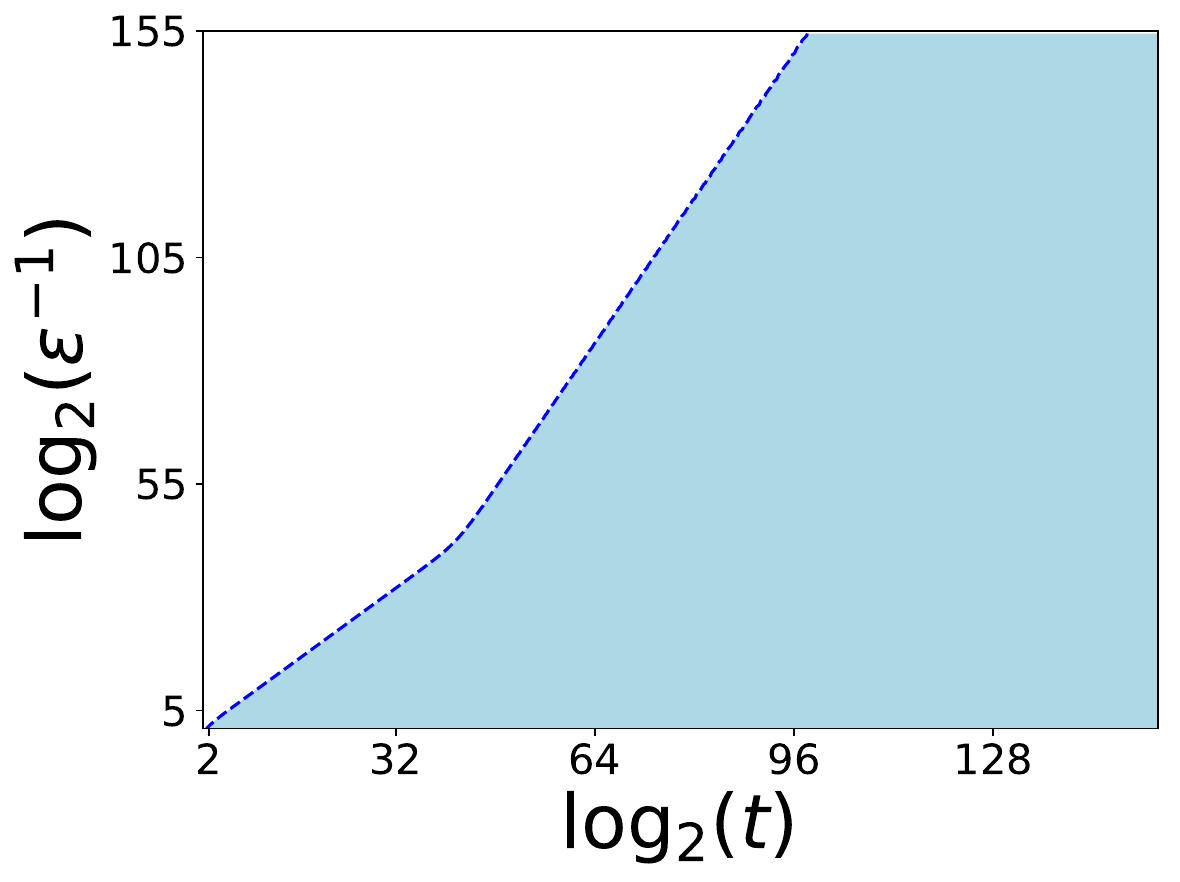}} \hspace{.2in}
	\subfloat[$L = 10, \lambda = 128$.]{\includegraphics[width=0.46\columnwidth]{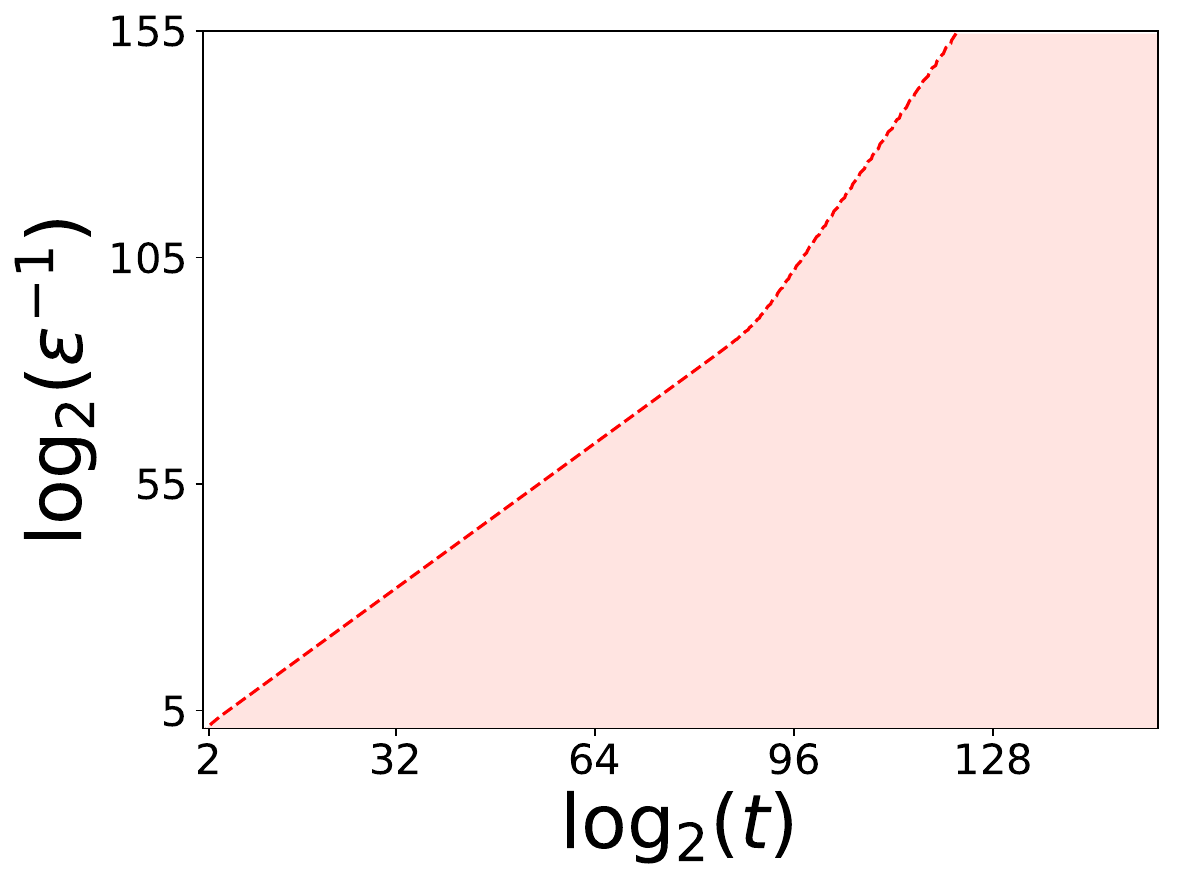}} \\
    \subfloat[$L = 8, \lambda = 128$.]{\includegraphics[width=0.46\columnwidth]{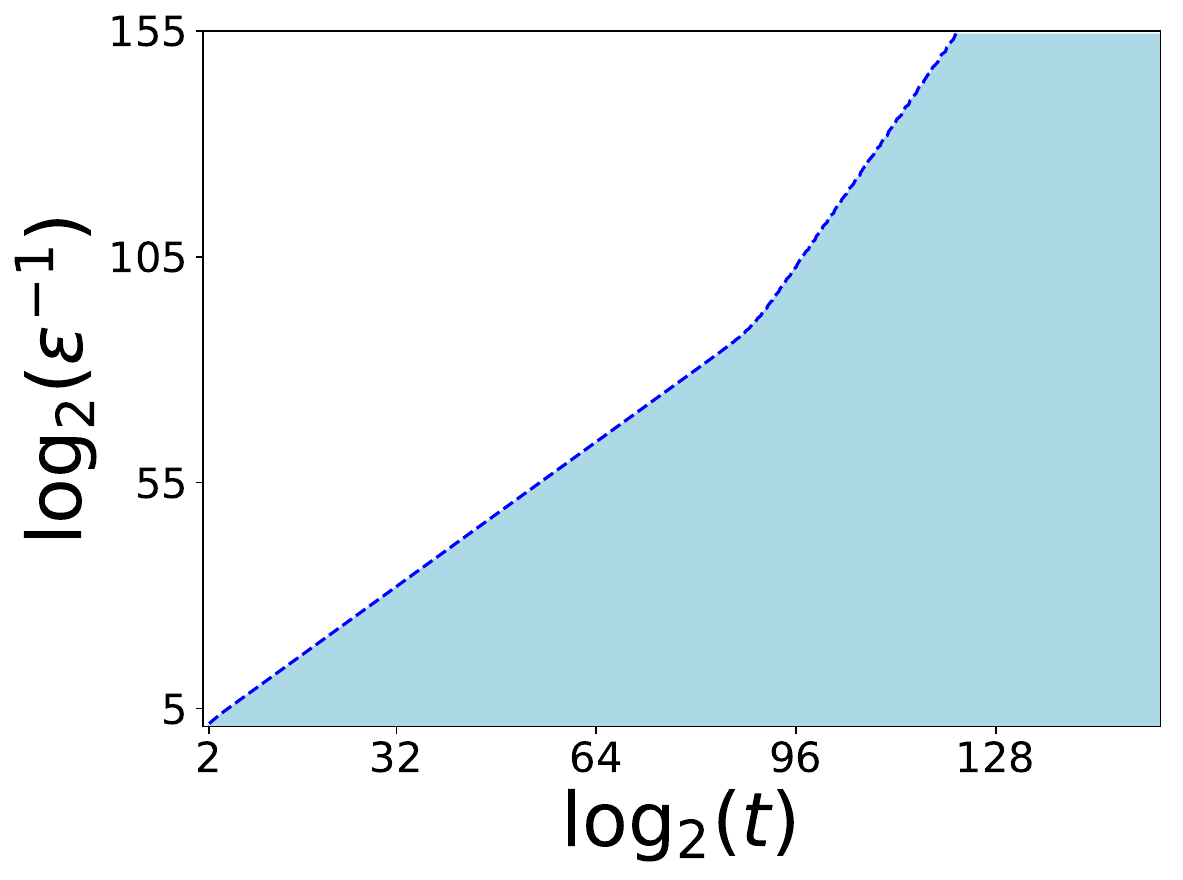}} \hspace{.2in}
	\subfloat[$L = 128, \lambda = 128$.]{\includegraphics[width=0.46\columnwidth]{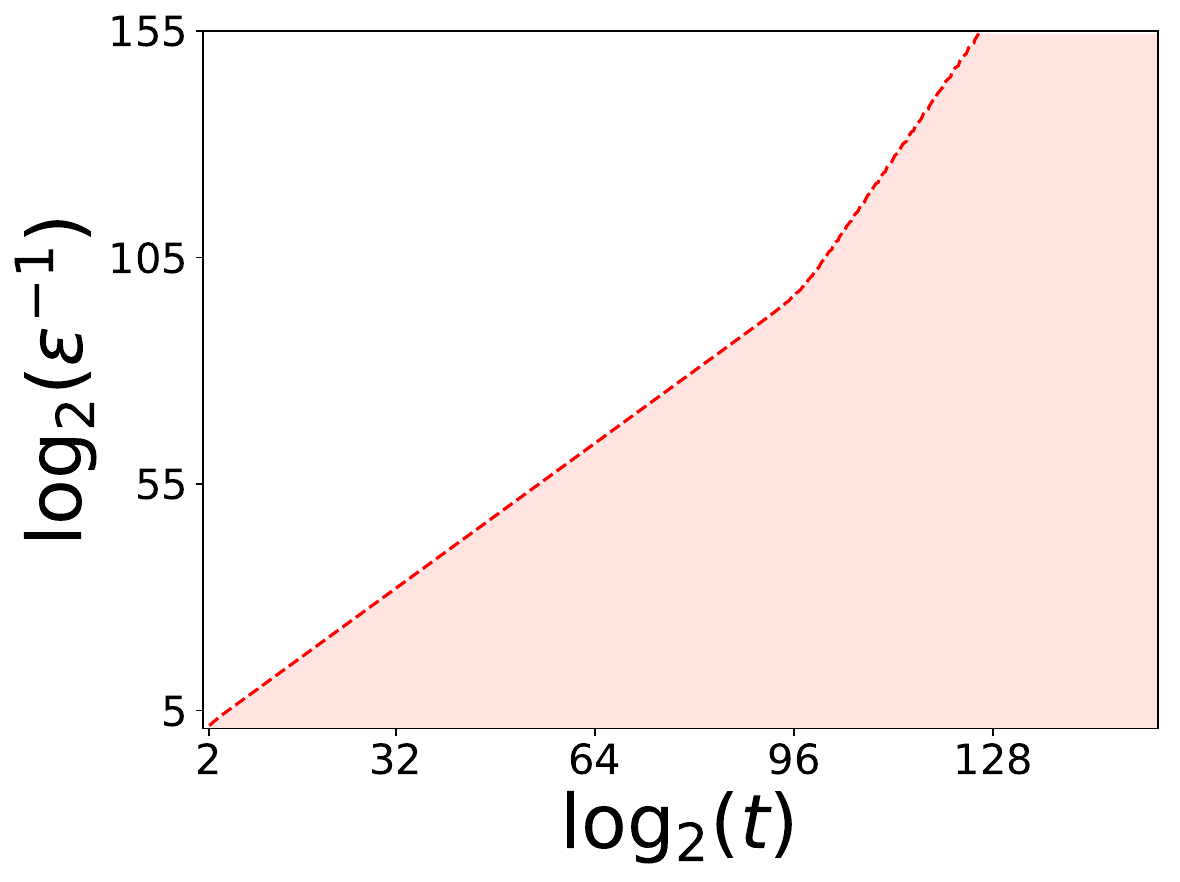}}
    \caption{CKKS-favorable regions for the MHE-based protocol.}
    \label{fig:perf-bfv-ckks-mhe}
\end{figure}

\begin{figure}
	\centering
	\subfloat[$L = 8$.]{\includegraphics[width=0.46\columnwidth]{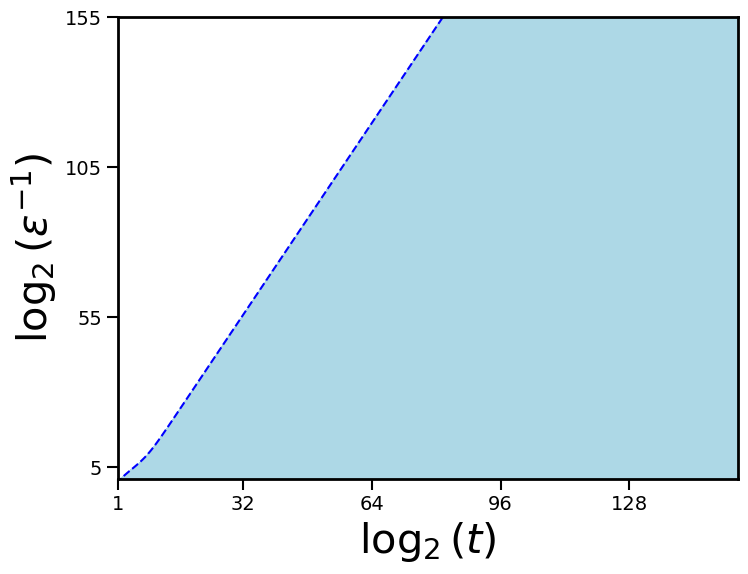}} \hspace{.2in}
	\subfloat[$L = 128$.]{\includegraphics[width=0.46\columnwidth]{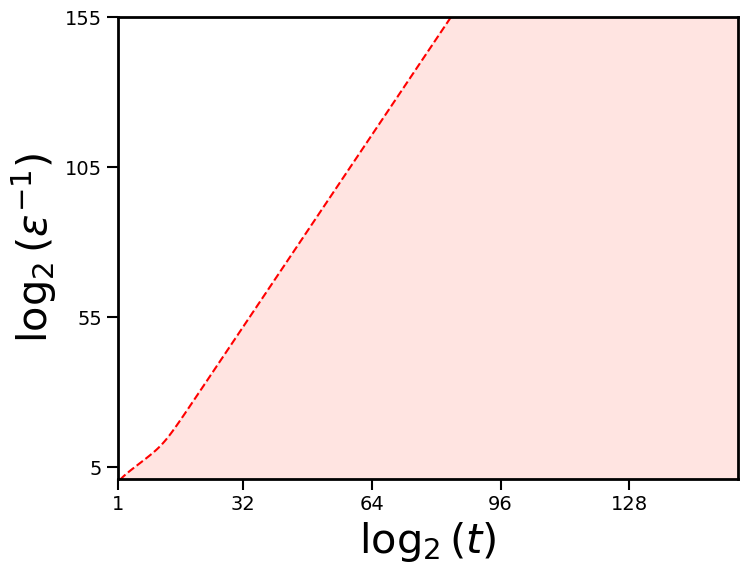}}
     \caption{CKKS-favorable regions for the proposed protocol.}
    \label{fig:perf-bfv-ckks-prop}
\end{figure}

The behavior in the figures follows directly from Eq.~\eqref{eq:perf-he-comparison}:
\begin{itemize}
\item When the term $t-1$ dominates, the boundary is approximately $
\log_2(t-1) > \log_2\epsilon^{-1}$.
\item When the quadratic term dominates, the boundary is approximately $2\log_2 t-\log_2 B_{\mathsf{ct}}^{\mathsf{MP}}-1 > \log_2\epsilon^{-1}$.
\end{itemize}
Thus, once the quadratic term becomes relevant, CKKS becomes increasingly favorable as the target precision grows. Increasing $\lambda$ affects only the MHE-based protocol through the smudging term. Increasing $L$ affects both protocols, but its impact is stronger for MHE because $B_{\mathsf{mhe}}$ contains the additional factor $(1+L2^{\lambda/2})\cdot(2nL+1)$.

\subsection{Modulus $\mathsf{gap}$ between BFV and CKKS}

We next quantify the modulus gap between the BFV-based and CKKS-based instantiations, for which we define $\mathsf{gap} = \log_2 q_\mathsf{BFV} - \log_2 q_\mathsf{CKKS}$. For a fixed public parametrization and for each point $(\log_2 t,\log_2\epsilon^{-1})$, we compute the minimum modulus required by both schemes using Eqs.~\eqref{eq:perf-bfv-q} and~\eqref{eq:perf-ckks-q}. Repeating this computation over the parameter grid gives the heat maps of $\mathsf{gap}$ shown in Figure~\ref{fig:varyingL_asym-sym-heat}.

\begin{figure}
\centering
\subfloat[Proposed protocol with $L = 128$.]{\includegraphics[width=0.46\columnwidth]{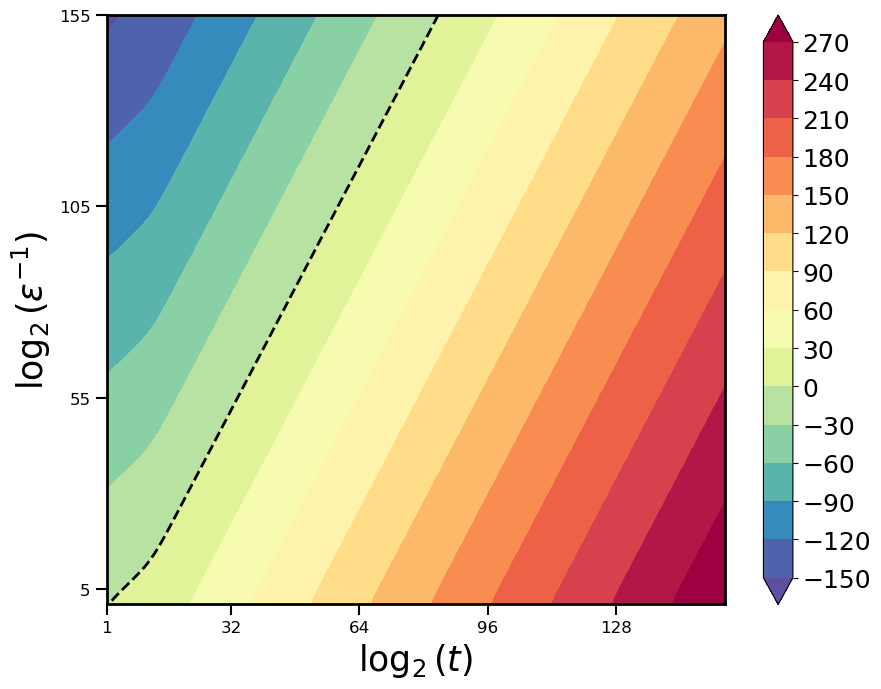}} \hspace{.2in}
\subfloat[MHE-based protocol with $L=\lambda=128$.]{\includegraphics[width=0.46\columnwidth]{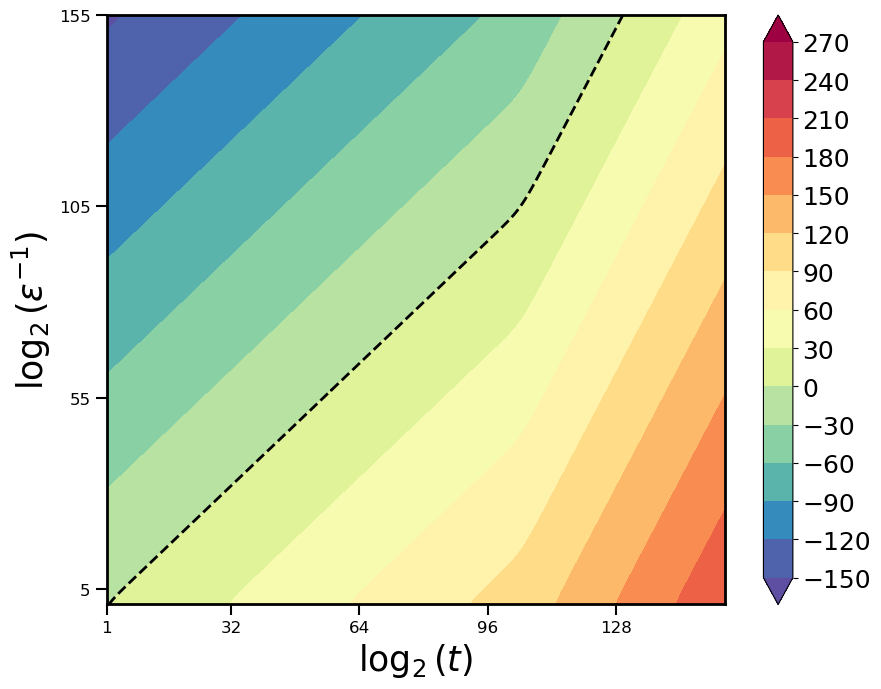}}
\caption{Heat map comparison between the BFV-based and CKKS-based instantiations of the proposed and MHE-based protocols. The color bands represent the value of $\mathsf{gap}$ in bits.}
\label{fig:varyingL_asym-sym-heat}
\end{figure}

The black dashed curve marks the separation between the BFV-favorable and CKKS-favorable regions. To interpret the behavior of the heat maps, we incorporate an explicit modulus $\mathsf{gap}$ into Eqs.~\eqref{eq:perf-ckks-bfv-first} and~\eqref{eq:perf-he-comparison}. This gives $\frac{t^2}{2B_{\mathsf{ct}}^{\mathsf{MP}}}+t-2^{\mathsf{gap}} \approx \epsilon^{-1}2^{\mathsf{gap}}$, and therefore
\begin{equation}
\epsilon^{-1}\left(t,\mathsf{gap}\right) \approx \frac{t^2}{2B_{\mathsf{ct}}^\mathsf{MP}2^\mathsf{gap}} + \frac{t}{2^\mathsf{gap}} -1.
\label{eq:epsilon-gap}
\end{equation}

The transition between the two main regimes occurs when the linear and quadratic terms in $t$ have comparable size, that is, $\frac{t}{2^{\mathsf{gap}}}\!\approx\!\frac{t^2}{2B_{\mathsf{ct}}^{\mathsf{MP}}2^{\mathsf{gap}}}$. At this transition point, the factor $2^{\mathsf{gap}}$ cancels out, yielding $t \approx 2B_{\mathsf{ct}}^{\mathsf{MP}}$. Hence, in logarithmic scale, the transition is located around $\log_2 t \approx 1+\log_2 B_{\mathsf{ct}}^{\mathsf{MP}}$, independently of $\mathsf{gap}$. Thus, the modulus gap does not shift the transition horizontally, but mainly produces a vertical displacement of the level curves. In the regime where the constant term in Eq.~\eqref{eq:epsilon-gap} is negligible, we have
\begin{equation*}
\log_2 \epsilon^{-1} \approx \log_2{\left(t+\frac{t^2}{2B_{\mathsf{ct}}^{\mathsf{MP}}}\right)} -\mathsf{gap}.
\end{equation*}
Equivalently, increasing $\mathsf{gap}$ shifts the curves downward by approximately $\mathsf{gap}$ bits in the $\log_2\epsilon^{-1}$ axis.

\subsection{Some remarks on approximate aggregation}

We close with a clarification on the BFV/CKKS comparison. Our analysis considers BFV decryption with multiplication by $t/q$, following~\cite{FV12}. When $t$ does not divide $q$, the rounding error causes the extra quadratic term in Eq.~\eqref{eq:perf-he-comparison}. If BFV decryption were instead modified to divide by a scaling factor $\Delta$, then Eq.~\eqref{eq:perf-he-comparison} would resemble the first approximately linear interval of Figures~\ref{fig:perf-bfv-ckks-mhe},~\ref{fig:perf-bfv-ckks-prop},~\ref{fig:varyingL_asym-sym}, and~\ref{fig:varyingL_asym-sym-heat}.

This does not remove the practical appeal of CKKS for private aggregation. Its approximate nature accounts for controlled error in the decrypted result, which can be useful when combined with output-privacy mechanisms applied to the released aggregate~\cite{MOJC23,MPP24}. BFV could emulate this by adding encoded noise homomorphically, but this would consume part of the plaintext-modulus budget. In CKKS, the target precision $\epsilon^{-1}$ already captures this trade-off.

\section{Indistinguishability Security Definitions}
\label{app:security-definitions}
The security of encryption schemes is commonly formalized through indistinguishability-based definitions. A fundamental notion for public-key encryption under passive attacks is indistinguishability under chosen-plaintext attack ($\mathsf{IND}$-$\mathsf{CPA}$), which is equivalent to semantic security \cite{S16}. Informally, it requires that an adversary cannot distinguish the encryptions of two chosen plaintexts with probability significantly better than random guessing.

\procedureblock{IND-CPA security game}{
\textbf{Challenger} \> \>  \textbf{Adversary} \\
(\mathsf{pk,sk}) \sample \mathsf{\Pi.KeyGen}(1^{\lambda}) \> \> \\
\> \xrightarrow{\hspace{5mm} \pk \hspace{5mm}}\\
\> \> m_0,m_1 \in \mathcal{M}\\
\>\xleftarrow{\hspace{2mm} m_0,m_1 \hspace{2mm}}\> \\
b \sample \bin \> \> \\
c_b = \mathsf{\Pi.Enc}_{\mathsf{pk}}(m_b) \> \> \\
\> \xrightarrow{\hspace{5mm} c_b \hspace{5mm} } \> \\
\> \> b' \in \bin \\
\> \> \text{Wins if } b = b'\\
}

\begin{definition}[IND-CPA Security]
\label{def:INDCPA} Let $\Pi$ be an encryption scheme. We say that $\Pi$ is $\mathsf{IND}$-$\mathsf{CPA}$ secure if for every probabilistic polynomial-time adversary $\mathcal{A}$,
\begin{equation*}
    \mathsf{Adv}^{\mathsf{IND\text{-}CPA}}_{\Pi}(\mathcal{A}) = \left|\Pr[\mathcal{A}\ \text{wins}] - \tfrac{1}{2}\right|
\end{equation*}
is negligible in the security parameter $\lambda$.
\end{definition}

\textbf{$\mathsf{IND}$-$\mathsf{CPA}^{\mathsf{D}}$ security:} For HE schemes, the $\mathsf{IND}$-$\mathsf{CPA}$ notion can be extended to capture the evaluation functionality. This leads to the $\mathsf{IND}$-$\mathsf{CPA}^{\mathsf{D}}$ security model, in which the adversary may additionally access an evaluation oracle $\mathcal{O}_{\mathsf{Eval}}$ and a restricted decryption oracle $\mathcal{O}_{\mathsf{Dec}}$. For general HE schemes, the evaluation oracle allows the adversary to evaluate arbitrary circuits over ciphertexts. In our setting, however, the supported homomorphic operation is addition, and therefore $\mathcal{O}_{\mathsf{Eval}}$ corresponds to sequences of homomorphic additions (i.e., repeated applications of $\mathsf{Add}$).

\procedureblock{$\mathsf{IND}$-$\mathsf{CPA}^{\mathsf{D}}$ security game}{
\textbf{Challenger} \< \hspace{20mm}  \textbf{Adversary} \\
(\mathsf{pk,sk,ek}) \sample \mathsf{\Pi.KeyGen}(1^{\lambda}) \< \\
b \sample \bin \< \\
\< \xrightarrow{\hspace{5mm} \pk \hspace{5mm}} \\
 \< \hspace{20mm} S = [ \hspace{1mm} ] \\
\< \hspace{20mm} \dbox {\begin{subprocedure}{
\text{Encryption Oracle}}
\end{subprocedure}} \\
\< \hspace{20mm} \dbox {\begin{subprocedure}{
\text{Evaluation Oracle}}
\end{subprocedure}} \\
\< \hspace{20mm} \dbox {\begin{subprocedure}{
\text{Decryption Oracle}}
\end{subprocedure}} \\
\< \hspace{20mm} b' \in \bin \\
\< \hspace{20mm} \text{Wins if } b' = b \< \< }

\procedureblock{Encryption Oracle $\mathcal{O}_{\mathsf{Enc}}(m_0,m_1,\pk,b,i)$ }{
\textbf{Adversary} \< \< \textbf{Oracle} \\
m_0, m_1 \in \mathcal{M} \< \< \\
 \< \sendmessageright*[1.5cm]{m_0,m_1} \< \\
 \< \< \mathsf{ct} \leftarrow \mathsf{Enc}(\pk,m_b)  \\
\< \< S[i] = (m_0,m_1,\mathsf{ct})  \\
\< \< i=i+1 \\
 \< \sendmessageleft*[1.5cm] {\mathsf{ct},i} \< 
}

\procedureblock{Evaluation Oracle $\mathcal{O}_{\mathsf{Eval}}(C,i_1,\ldots,i_k,\mathsf{ek},S)$ }{
\textbf{Adversary} \< \< \textbf{Oracle} \\
\{i_1,\dots,i_k\} \subseteq \mathcal{I} \< \< \\
C \in \mathcal{C} \< \< \\
 \< \sendmessageright*[1.5cm]{C,i_1,\dots,i_k} \< \\
 \< \< \mathsf{ct} \leftarrow \mathsf{Eval}(C,\mathsf{ek}, \\
 \< \< \hspace{18mm} S[i_1].\mathsf{ct},\ldots,S[i_k].\mathsf{ct}) \\
\< \< m_0' = C(S[i_1].m_0,\ldots, S[i_k].m_0)  \\
\< \< m_1' = C(S[i_1].m_1,\ldots, S[i_k].m_1)  \\
\< \< S[i] = (m_0',m_1',\mathsf{ct})  \\
\< \< i=i+1 \\
 \< \sendmessageleft*[1.5cm] {m_0',m_1',\mathsf{ct},i} \< 
}

\procedureblock{Decryption Oracle $\mathcal{O}_{\mathsf{Dec}}(i,\sk,S)$ }{
\textbf{Adversary} \< \< \textbf{Oracle} \\
i \in \mathcal{I} \< \< \\
 \< \sendmessageright*[1.5cm]{i} \< \\
 \< \< \textbf{if} \hspace{1mm} (S[i].m_0 = S[i].m_1) \hspace{1mm} \textbf{then} \\
\< \< \hspace{4mm} m \leftarrow \mathsf{Dec}(\sk,S[i].\mathsf{ct})  \\
\< \< \hspace{4mm} \mathsf{out} = m  \\
\< \< \textbf{else}  \\
\< \< \hspace{4mm} \textsf{out} = \bot   \\
 \< \sendmessageleft*[1.5cm] {\mathsf{out}} \< 
}

In this security game, the adversary attempts to guess the secret bit $b \in \bin$ chosen uniformly at random by the challenger. The adversary may query the encryption oracle $\mathcal{O}_{\mathsf{Enc}}$ arbitrarily many times. Each query stores the pair $(m_0,m_1)$ and the ciphertext $\mathsf{ct}=\mathsf{Enc}(\pk,m_b)$ in the list $S$. The adversary may also evaluate circuits using the evaluation oracle $\mathcal{O}_{\mathsf{Eval}}$. The resulting ciphertext and the corresponding evaluated messages are added to the list $S$. Finally, the adversary may query the restricted decryption oracle $\mathcal{O}_{\mathsf{Dec}}$. This oracle returns the decryption of $S[i].\mathsf{ct}$ only if the associated messages satisfy $S[i].m_0 = S[i].m_1$; otherwise it returns $\bot$.

\begin{definition}[$\mathsf{IND}$-$\mathsf{CPA}^{\mathsf{D}}$ Security] Let $\Pi=(\mathsf{Setup},$ $\mathsf{KeyGen},\mathsf{Enc},\mathsf{Dec},\mathsf{Eval})$ be a homomorphic encryption scheme. We say that $\Pi$ is $\mathsf{IND}$-$\mathsf{CPA}^{\mathsf{D}}$ secure if for every probabilistic polynomial-time adversary $\mathcal{A}$ with oracle access to $\mathcal{O}_{\mathsf{Enc}}$, $\mathcal{O}_{\mathsf{Eval}}$, and $\mathcal{O}_{\mathsf{Dec}}$,
\begin{equation*}
    \mathsf{Adv}^{\mathsf{IND\text{-}CPA^{D}}}_{\Pi}(\mathcal{A}) = \left| \Pr[\mathcal{A}\ \text{wins}] - \tfrac{1}{2}\right|
\end{equation*}
is negligible in $\lambda$.
\end{definition}

\section{Auxiliary Zero-Sharing Protocols}
\label{app:aux-zero-sharing}
We describe two simple protocols used to generate additive shares of zero among $L$ clients $C_i \in \mathcal{C}$. The first protocol~\cite{PBCSSZ23} produces uniformly random additive shares that sum to zero. The second protocol derives shares compatible with the zero-sharing LSSS defined in Definition~\ref{def:zero-sharing-lsss}.

The concrete protocols below assume pairwise private authenticated channels between clients; alternative secure realizations of $\mathsf{S}$ may also be used.
\newline

\textbf{Protocol 1: Additive Zero Sharing.} Given $L$ clients $C_i \in \mathcal{C}$, the parties generate additive shares $r_1,\dots,r_L \in R_q$ such that $\sum_{i=1}^{L} r_i = 0$. This is done as follows:
\begin{enumerate}
    \item Each client $C_i$ samples $(L-1)$ elements $r^{(i,j)} \xleftarrow{\$} R_q$ for all $j\in\{1,\ldots,L\}$ with $j\neq i$.
    \item Each client $C_i$ sets $r^{(i,i)} = - \sum_{j \neq i} r^{(i,j)}$, so that $\sum_{j=1}^{L} r^{(i,j)} = 0$.    
    \item Client $C_i$ sends $r^{(i,j)}$ to client $C_j$ over their private authenticated channel for all $j\neq i$.
    \item Each client $C_i$ computes its share $r_i = \sum_{j=1}^{L} r^{(j,i)}$. 
\end{enumerate}

It follows that the resulting shares satisfy $\sum_{i=1}^{L} r_i = 0$.
\newline

\textbf{Protocol 2: Zero Sharing with Reconstruction Coefficients.} Let $\lambda_1,\dots,\lambda_L \in \mathbb{Z}_q$ be the public reconstruction coefficients from Definition~\ref{def:zero-sharing-lsss}. The protocol is as follows:
\begin{enumerate}
    \item Run Protocol~1 to obtain additive shares $r_1,\dots, r_L$ such that $\sum_{i=1}^{L} r_i = 0$.
    \item Each client $C_i$ locally computes $\tilde r_i = \lambda_i^{-1} \cdot r_i$.
\end{enumerate}
The resulting shares $\tilde r_1,\dots,\tilde r_L$ satisfy $\sum_{i=1}^{L} \lambda_i \tilde r_i = \sum_{i=1}^{L} r_i = 0$, thus forming a valid zero-sharing by Definition~\ref{def:zero-sharing-lsss}.

\section{Scalability with the number of clients}
\label{app:setup-comparison}
For fixed cryptographic parameters and model size, the direct dependence on the number of clients $L$ only appears in operations that combine client contributions. Each client independently encrypts its input and computes its partial decryption share, so the local cost of both operations is independent of $L$ per ciphertext block. In contrast, aggregating the encrypted contributions and combining the partial decryption shares require processing $L$ client contributions, leading to a cost that grows linearly with $L$.

There is also an indirect dependence through parameter selection. For the proposed protocol, the effective decryption-noise bound is $B_{\mathsf{prop}}=LB$, as given in Eq.~\eqref{eq:bound-b-prop-perf}. Together with the BFV and CKKS correctness conditions in Eqs.~\eqref{eq:perf-bfv-q} and~\eqref{eq:perf-ckks-q}, this implies that, for a fixed precision target, the required ciphertext modulus $q$ grows linearly with $L$, while its bit length grows only logarithmically. In practice, this affects computation mainly when a larger $q$ requires an additional modulus limb or a larger ring degree. This dependence is substantially milder than in the MHE-based construction, whose effective noise bound in Eq.~\eqref{eq:bound-b-mhe-perf} grows as $\mathcal{O}(L^3 2^{\lambda/2})$.

The setup phase has the strongest explicit dependence on $L$. In the zero-sharing protocol of Appendix~\ref{app:aux-zero-sharing}, each client samples $L-1$ elements of $R_q$, exchanges one contribution with every other client, and combines the received shares. The resulting local computation and communication scale linearly with $L$, while the total pairwise communication scales quadratically. Table~\ref{tab:runtimes-comparison-setup} shows the practical impact of this scaling. Even for $L=512$, the setup of the proposed protocol remains below one second for all evaluated parameter sets. Although the MHE-based setup is faster in these experiments, setup is performed only once in the offline phase. Moreover, its cost remains small relative to the online computation for the evaluated model size, as shown in Table~\ref{tab:runtimes-comparison-mhe-prop}. Thus, even if setup were repeated in every aggregation round instead of being performed only once offline, it would not become the dominant runtime cost.

\begin{table}[!htbp]
	\centering
	\renewcommand{\arraystretch}{1.3}
    \begin{threeparttable}[b]
	\caption{Client-side setup runtime as a function of the number $L$ of participating clients for the proposed and MHE-based protocols. The first four columns of the proposed protocol use the four parameter sets of Table~\ref{tab:exampleparams-parametersets-proposed}, whereas the last two MHE-based columns use the two MHE parameter sets of Table~\ref{tab:parametersets-comparison-mhe-prop}. Reported values are average single-threaded runtimes under the experimental setting of Section~\ref{sec:perf-runtimes}.}
    \label{tab:runtimes-comparison-setup}
	\scriptsize
    \setlength{\tabcolsep}{1.5pt}
	\begin{tabular}{c|c|c|c|c|c|c|}
        \hline \hline 
		$L$ & \textbf{Set $1$}\tnote{*} & \textbf{Set $2$} & \textbf{Set $3$ Prop.} & \textbf{Set $4$ Prop.} & \textbf{Set $3$ MHE} & \textbf{Set $4$ MHE}\\
		\hline \hline
        $32$ &
        $14.12\,\mathrm{ms}$ &
        $14.84\,\mathrm{ms}$ & 
        $36.04\,\mathrm{ms}$ &
        $41.03\,\mathrm{ms}$ &
        $2.20\,\mathrm{ms}$ &
        $1.75\,\mathrm{ms}$ \\	
		$64$ &
        $39.52\,\mathrm{ms}$ &
        $48.60\,\mathrm{ms}$ & 
        $100.51\,\mathrm{ms}$ &
        $139.60\,\mathrm{ms}$ &
        $5.10\,\mathrm{ms}$ &
        $3.98\,\mathrm{ms}$ \\
        $128$ &
        $79.90\,\mathrm{ms}$ &
        $77.73\,\mathrm{ms}$ & 
        $184.21\,\mathrm{ms}$ &
        $193.44\,\mathrm{ms}$ &
        $8.10\,\mathrm{ms}$ &
        $5.13\,\mathrm{ms}$ \\	
		$256$ &
        $156.69\,\mathrm{ms}$ &
        $145.25\,\mathrm{ms}$ & 
        $375.37\,\mathrm{ms}$ &
        $354.61\,\mathrm{ms}$ &
        $12.29\,\mathrm{ms}$ &
        $9.27\,\mathrm{ms}$ \\
		$512$ &
        $277.68\,\mathrm{ms}$ &
        $293.02\,\mathrm{ms}$ & 
        $733.94\,\mathrm{ms}$ &
        $742.92\,\mathrm{ms}$ &
        $20.43\,\mathrm{ms}$ &
        $15.40\,\mathrm{ms}$ \\
        \hline \hline
	\end{tabular}
\begin{tablenotes}
\item[*] The parameter sets corresponding to the columns are, from left to right, Set 1 BFV, Set 2 CKKS, Set 3 BFV (Prop.), and Set 4 CKKS (Prop.) from Table~\ref{tab:exampleparams-parametersets-proposed}, and Set 3 BFV (MHE) and Set 4 CKKS (MHE) from Table~\ref{tab:parametersets-comparison-mhe-prop}.
\end{tablenotes}
\end{threeparttable}
\end{table}

\section{Proof of Theorem~\ref{th:main-simulation-security}}
\label{app:proof-theorem1}
\textbf{Proof strategy:} We prove Theorem~\ref{th:main-simulation-security} by showing that \emph{our protocol from Table~\ref{tab:multi-sk-he} provides simulation-based input privacy for the aggregation functionality $f_\Sigma$} (see Definition~\ref{def:secure-realize-functionality} below). This is done by running a simulation-based argument in the standard real/ideal paradigm, following~\cite{AJLTVW12} but tailored to our simpler semi-honest setting. Our proof is substantially simpler than the general framework of~\cite{AJLTVW12} for three reasons: (i) we only consider static semi-honest adversaries, (ii) the functionality is the fixed linear aggregation function $f_\Sigma$, and (iii) the use of symmetric-key encryption together with the client-local oracle $\mathcal O_i(\cdot)$ allows the real protocol to inject a fresh decryption noise locally per client, so that no $\lambda$-dependent large-variance smudging noise is needed in the simulation.

\begin{definition}[Simulation-based input privacy for $f_\Sigma$]
\label{def:secure-realize-functionality}
Let $\pi_{f_\Sigma}$ be a protocol for private aggregation between the clients $\mathcal{C}$ and the \textit{Aggregator} (see Table~\ref{tab:multi-sk-he} and Definition~\ref{def:multi-sk-he}). We say that $\pi_{f_\Sigma}$ \emph{securely realizes} the aggregation functionality $f_\Sigma$ in the semi-honest model if, for every static semi-honest adversary $\mathcal A$ that may corrupt the \textit{Aggregator} and an arbitrary subset of up to $L - 1$ clients, there exists a probabilistic polynomial-time simulator $\mathcal S$ such that, for every input tuple $\{\vec{w}_i\}_{i=1}^L$, the joint view of the adversary in the real execution of $\pi_{f_\Sigma}$, i.e., the collection of all inputs, randomness, internal states, and messages held by corrupted parties, is computationally indistinguishable from its view in an ideal execution where:
\begin{itemize}
    \item honest clients provide their inputs only to the ideal functionality $f_\Sigma$;
    \item corrupted parties provide their inputs through the adversary;
    \item the functionality computes the exact aggregate $\vec{w}^*=f_\Sigma(\vec{w}_1, \dots, \vec{w}_L)$, then output parties receive this value under the exact-output convention, or a randomized approximation of it under the CKKS output convention described below; and
    \item the simulator $\mathcal S$ interacts with $\mathcal A$ using only the inputs of the corrupted parties and the output $\vec{w}^*$, and produces a view for a distinguisher $\mathcal{Z}$ computationally indistinguishable from the real one.
\end{itemize}
\end{definition}
This notion captures \emph{input privacy} in the sense of Definition~\ref{def:mpcaggfl}: the adversary learns no additional information about the inputs of honest clients beyond what follows from its own inputs and the aggregation result.

\textbf{Output convention:} For the exact BFV instantiation, the protocol output coincides with $\vec{w}^*$ whenever the corresponding correctness condition holds. For the approximate CKKS instantiation, the protocol output is instead $\vec{w}^* + \Delta^{-1}\sum_{i=1}^L\bm{\tilde{e}}_i$ (see Eq.~\eqref{eq:programmed-final-output}). This distinction affects only correctness and the form of the protocol output. In both cases, input privacy is defined with respect to the exact aggregate $\vec{w}^*$, which remains the value made available to the simulator in the ideal execution.

\textbf{Blockwise encoding of client updates:} Following the blockwise convention introduced in Section~\ref{sec:mask-he}, we make explicit how the proof handles vector updates spanning several ciphertexts. Let $N_{\mathsf{ctx}} := \left\lceil \frac{\#\mathsf{ModelParams}}{n(\lambda)} \right\rceil$ denote the number of secret-key ciphertexts required to encode one client update, where $\#\mathsf{ModelParams}$ is the total number of model parameters and $n(\lambda)$ is the number of plaintext slots supported by the underlying RLWE scheme. Accordingly, whenever a client update $\vec w_i$ does not fit into a single ciphertext, it is represented blockwise by $N_{\mathsf{ctx}}$ ciphertexts. In that case, all encryptions, aggregations, and partial decryptions appearing below are understood componentwise over these $N_{\mathsf{ctx}}$ ciphertext blocks. For the $j$-th ciphertext block, the public component is sampled independently as $a_j \leftarrow R_q$, and, whenever the same secret key is reused, these public components are required to be pairwise distinct; more generally, \emph{ciphertexts generated under a common secret key always use fresh, non-repeated public components.}

\subsection{Indistinguishability of encryptions}
\label{app:ind-enc-dist}

We begin by formalizing the distinguishing advantage~\cite{KL14,AJLTVW12}, the central quantity in the hybrid arguments. For experiment distributions $H,H'$ over a common sample space,~i.e., producing outputs of the same form, and any PPT distinguisher $\mathcal{Z}$ taking samples from that space as input, define:
\begin{equation*}
    \Adv_{\mathcal Z}(H,H') := \left|\Pr[\mathcal Z(H)=1]-\Pr[\mathcal Z(H')=1]\right|.
\end{equation*}
Here, $\mathcal{Z}(H) = 1$ denotes the event that $\mathcal{Z}$ outputs $1$~when~given a sample drawn from $H$; distinguishability is measured by the gap between the corresponding output probabilities.

This notion satisfies the triangle inequality: for any three distributions $H_0,H_1,H_2$, it holds that
\begin{equation*}
    \Adv_{\mathcal Z}(H_0,H_2) \le \Adv_{\mathcal Z}(H_0,H_1) + \Adv_{\mathcal Z}(H_1,H_2).
\end{equation*}

This property, together with Proposition~\ref{prop:ind-enc-dist}, will be used repeatedly in our proof of Theorem~\ref{th:main-simulation-security} to justify the replacements of honest ciphertexts and partial decryptions across the sequence of hybrids of the simulator construction:
\begin{proposition}[Indistinguishability of hybrid encryption distributions]
\label{prop:ind-enc-dist}
Following Table~\ref{tab:bfv-ckks-ashe}, let $R_q$ be the ciphertext ring, let $\chi$ be the error distribution, and let $\mathsf{Enc}(a,\mathsf{sk},m)=(-a\cdot \mathsf{sk}+e+\Delta m,\;a)$ denote a fresh encryption of $m\in R_q$, where $\mathsf{sk} \in R_q$ is a valid secret key, $a \leftarrow R_q$ is uniform, and $e \leftarrow \chi$. Let $\epsilon_{\mathsf{RLWE}}(\lambda)$ denote the single-sample RLWE distinguishing advantage. Then the following results on $\Adv_{\mathcal{D}}(\cdot, \cdot)$ hold:
\begin{enumerate}
    \item For every $m\in R_q$ and every PPT distinguisher $\mathcal D$,
    \begin{equation*}
    \big|
    \Pr\!\left[\mathcal D(\mathsf{Enc}(a,\mathsf{sk},m))=1\right]
    -
    \Pr\!\left[\mathcal D(u,a)=1\right]
    \big|
    \le \epsilon_{\mathsf{RLWE}}(\lambda),
    \end{equation*}
    where $a,u \leftarrow R_q$ are uniformly random.

    \item For every fixed $m,m'\in R_q$, every pair of valid secret keys $\mathsf{sk},\mathsf{sk}' \in R_q$, and every PPT distinguisher $\mathcal D$,
    \begin{equation*}
        \begin{aligned}
        \big| \Pr\!\left[\mathcal D(\mathsf{Enc}(a,\mathsf{sk},m))=1\right]
        &{}-\Pr\!\left[\mathcal D(\mathsf{Enc}(a,\mathsf{sk}',m'))=1\right] \big| \\ 
        &\le 2\,\epsilon_{\mathsf{RLWE}}(\lambda).
        \end{aligned}
    \end{equation*}

    \item Let $\{\mathsf{ct}_i\}_{i=1}^M$ be a collection of $M$ valid encryptions of $m_i \in R_q$ under the same or different valid secret keys $\mathsf{sk}_{i} \in R_q$, where $\mathsf{ct}_i = (b_i,a_i)$ with $b_i = -a_i\cdot \mathsf{sk}_{i} + e_i + \Delta m_i$, and, for all $i \neq j$, $a_i \neq a_j$ if $\mathsf{sk}_{i} = \mathsf{sk}_{j}$. Then, for every PPT distinguisher $\mathcal D$,
    \begin{equation*}
        \begin{aligned}
        \bigg| \Pr\!\left[\mathcal D\big(\{(b_i, a_i)\}_{i=1}^M\big)=1\right]
    &{}-\Pr\!\left[\mathcal D\big(\{(u_i, a_i)\}_{i=1}^M\big)=1\right]
    \bigg|\\
    &\le M\,\epsilon_{\mathsf{RLWE}}(\lambda),
    \end{aligned}
    \end{equation*}
    where each $u_i \leftarrow R_q$ is uniformly random. Note also that, by applying result $\#2$, any two such collections with the same public components $\{a_i\}_{i=1}^M$ but possibly different plaintexts, noises, and/or secret keys are computationally indistinguishable with distinguishing gap at most $2M\epsilon_{\mathsf{RLWE}}(\lambda)$.
\end{enumerate}
\end{proposition}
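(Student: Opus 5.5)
The plan is to reduce all three items to the decisional RLWE assumption, taken as the single-sample statement that $(-a\cdot\mathsf{sk}+e,\,a)$ with $a\leftarrow R_q$ uniform and $e\leftarrow\chi$ is indistinguishable from $(u,a)$ with $u\leftarrow R_q$ uniform, up to advantage $\epsilon_{\mathsf{RLWE}}(\lambda)$. The three items are then derived in order: item 1 directly from the assumption, item 2 from item 1 by the triangle inequality for $\Adv$, and item 3 by a hybrid over the $M$ samples.

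For item 1, fix $m$. Given a PPT distinguisher $\mathcal D$, build an RLWE distinguisher $\mathcal B$ as follows. On input $(x,a)$, it outputs $\mathcal D(x+\Delta m,\,a)$. If $(x,a)$ is an RLWE sample, then $(x+\Delta m,a)$ is distributed exactly as $\mathsf{Enc}(a,\mathsf{sk},m)$. If $x$ is uniform, then $x+\Delta m$ is uniform in $R_q$, because translation by the fixed element $\Delta m$ is a bijection of $R_q$. The two advantages therefore coincide, which gives the bound $\epsilon_{\mathsf{RLWE}}(\lambda)$.

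For item 2, insert the intermediate distribution $(u,a)$. Applying item 1 once for $(\mathsf{sk},m)$ and once for $(\mathsf{sk}',m')$ and combining with the triangle inequality yields the bound $2\epsilon_{\mathsf{RLWE}}(\lambda)$.

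For item 3, define hybrids $H_k$ for $k=0,\ldots,M$. In $H_k$, the first $k$ components $b_i$ are replaced by fresh uniform $u_i$ and the rest are left as real encryptions. Bounding the gap between $H_{k-1}$ and $H_k$ by $\epsilon_{\mathsf{RLWE}}(\lambda)$ and telescoping gives $M\epsilon_{\mathsf{RLWE}}(\lambda)$.

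The main obstacle is exactly this per-step reduction, because the samples are correlated. The remaining samples $b_j$ may share the same key $\mathsf{sk}_k$, and the reduction $\mathcal B$ receives only one RLWE sample, so it cannot produce those other samples from the challenge alone. I would handle this in one of two ways.

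The first option reads $\epsilon_{\mathsf{RLWE}}$ as the advantage against RLWE for $M$ samples under a fixed key. With that reading, I would group the ciphertexts by secret key and replace each key-group at once. The hypothesis that $a_i\neq a_j$ whenever $\mathsf{sk}_i=\mathsf{sk}_j$, together with each $a_i$ being independently uniform, ensures that every group is a genuine multi-sample RLWE instance with fresh public components. Samples under other keys, which are independent of $\mathsf{sk}_k$, are generated by $\mathcal B$ itself.

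The second option keeps the single-sample notion. Here I would have the reduction know all keys except the one being attacked. For the other samples under $\mathsf{sk}_k$ it would use the standard fact that RLWE with polynomially many samples reduces to the single-sample form with the same total advantage bound.

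I would state explicitly that $\epsilon_{\mathsf{RLWE}}$ is understood in this multi-sample sense, since the additive count $M$ is preserved under either treatment. I would also note that the distinctness condition on the $a_i$ holds with overwhelming probability for uniform samples.

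The closing remark of item 3 then follows by applying the collection bound to both collections and using the triangle inequality, which gives $2M\epsilon_{\mathsf{RLWE}}(\lambda)$.
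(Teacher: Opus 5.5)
Items 1 and 2 coincide with the paper's proof. Translation by $\Delta m$ is a bijection of $R_q$, and item 2 follows from the triangle inequality through $(u,a)$. Your closing $2M\epsilon_{\mathsf{RLWE}}(\lambda)$ remark also matches the paper.

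For item 3, the paper runs the one-by-one hybrid and charges $\epsilon_{\mathsf{RLWE}}(\lambda)$ per replaced sample, without the discussion you give, so your treatment is more careful than the paper's. With $\epsilon_{\mathsf{RLWE}}$ read literally as the single-sample advantage, as the statement defines it, the step from $H_{k-1}$ to $H_k$ fails whenever $\mathsf{sk}_k$ is used by another remaining sample. The reduction cannot produce that sample without the key.

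Your first option is a correct repair. You group by key, read $\epsilon_{\mathsf{RLWE}}$ as the advantage against RLWE with up to $M$ samples under one key, and simulate the other groups with keys the reduction chooses itself. This gives a bound of $G\,\epsilon_{\mathsf{RLWE}}(\lambda)\le M\,\epsilon_{\mathsf{RLWE}}(\lambda)$, where $G$ is the number of distinct keys. This is a stronger assumption than the single-sample one named in the statement. Say explicitly that item 3 is proved under the multi-sample reading, which is also what the paper's one-line hybrid tacitly needs.

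Drop your second option. There is no standard reduction from polynomially-many-sample RLWE to single-sample RLWE that keeps the same error distribution and the same total advantage. Hardness with more samples implies hardness with fewer, not the converse. The known ways of generating extra samples from a few (small random combinations of the given ones) enlarge the error and need an additional RLWE-type assumption.

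Also keep as an explicit hypothesis the independence of distinct keys, which your simulation of the other groups relies on. The statement is false without it. Take $\mathsf{sk}_2=-\mathsf{sk}_1$ (both ternary) with a shared $a_1=a_2$, which the distinctness condition allows. Then for $m_1=m_2=0$ you get $b_1+b_2=e_1+e_2$, so $\lVert b_1+b_2\rVert\le 2B$, and the collection is trivially distinguished from uniform. The paper's proof uses the same independence silently. This matters later: in the hybrids, honest ciphertexts share $a_j$ while their effective keys $\mathsf{sk}_i+r_i$ are linked by the zero-sharing relation.
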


\begin{proof} For Item 1, for any fixed $m$, the map $x\mapsto x+\Delta m$ is a bijection over $R_q$. Hence, the distribution $(-a\cdot \mathsf{sk}+e+\Delta m,\;a)$ is obtained from the RLWE distribution $(-a\cdot \mathsf{sk}+e,\;a)$ by adding the public offset $\Delta m$ to the first component. Likewise, if $u\leftarrow R_q$ is uniform, then $u+\Delta m$ is also uniform in $R_q$. Therefore, distinguishing $\mathsf{Enc}(a,\mathsf{sk},m)$ from $(u,a)$ is equivalent to distinguishing one RLWE sample from one uniform sample~\cite{LPR13}, which is bounded by $\epsilon_{\mathsf{RLWE}}(\lambda)$.

Item 2 follows from Item 1 and the triangle inequality, using the uniform distribution as an intermediate hybrid.

For Item 3, use a standard hybrid argument replacing~the $M$ samples one by one with uniform samples. Each replacement incurs loss at most $\epsilon_{\mathsf{RLWE}}(\lambda)$, and summing over positions gives $M\epsilon_{\mathsf{RLWE}}(\lambda)$. Applying the argument twice yields the bound $2M\epsilon_{\mathsf{RLWE}}(\lambda)$ between two such collections.
\end{proof}

\subsection{An overview of the simulator construction}
\label{app:sim-construction}

Let $\mathcal{A}$ be a static semi-honest adversary corrupting a subset of \textit{clients} $\mathcal{C}_A \subseteq \mathcal C$ and possibly also the \textit{Aggregator}. Let $\mathcal{C}_H=\mathcal{C}\setminus \mathcal{C}_A$ denote the subset of honest clients, with $|\mathcal{C}_H|\ge 1$. During several steps of the simulation, we fix one distinguished honest client $C_h\in\mathcal{C}_H$, whose simulated messages will be handled separately whenever required. Recall also that $f_\Sigma(\vec{w}_1,\ldots,\vec{w}_L)=\sum_{i}\lambda_i\vec{w}_i=\vec{w}^*$. We define $\vec{w}_A^*:=\sum_{i: C_i\in\mathcal{C}_A}\lambda_i\vec{w}_i$ and $\vec{w}_H^*:=\vec{w}^*-\vec{w}_A^*$. Observe that $\vec{w}_H^*$ can be computed by the simulator solely from the prescribed output $\vec{w}^*$ and the corrupted parties' inputs. We next describe the main building blocks used to construct a simulator $\mathcal{S}$ for the ideal execution.

\textbf{Access of the simulator:} Since corrupted parties are semi-honest, their internal state is part of the adversarial view. Thus, the simulator has access to: (i) the inputs of corrupted parties; (ii) their secret keys; and (iii) their randomness, including encryption and decryption noises.

\textbf{Client-local oracle:} Each honest client $C_i\in \mathcal C_H$ maintains a local oracle $\mathcal{O}_i(\cdot)$ storing the encryption noise used in the ciphertexts generated by $C_i$ (see Definition~\ref{def:oracle}). During decryption, this oracle is used to cancel the ciphertext-dependent noise and replace it with fresh noise sampled from the error distribution $\chi$. This feature is what allows our protocol, and therefore the simulator, to work with a fresh decryption noise independent of the ciphertext, without requiring a large $\lambda$-dependent smudging noise.
\newline

\textbf{\uline{Main simulation steps:}}
\begin{itemize}
\item \uline{Setup:} (i) For corrupted parties, the simulator uses their prescribed randomness and generates the corresponding keys honestly, preserving their real internal state. (ii) For honest parties, whose internal state is not visible to the adversary, the simulator generates fresh keys and randomness consistent with the protocol.
\item \uline{Input phase:} (i) For each honest client $C_i\in\mathcal{C}_H\setminus\{C_h\}$, the simulator first replaces the encryptions of $\vec{w}_i$ by encryptions of the same-length zero-vector $\vec{\mathbf{0}}$ under the corresponding secret key, while maintaining the associated local state needed by the oracle $\mathcal{O}_i(\cdot)$. For the distinguished honest client $C_h$, the simulator instead first replaces its encryptions by encryptions of $\lambda_h^{-1}\vec{w}_H^*$, which are subsequently replaced by encryptions of $\vec{\mathbf{0}}$. Later, all these encryptions are further replaced by uniformly random samples in $R_q^2$. (ii) Corrupted parties behave honestly on their actual inputs.
\item \uline{Aggregation and decryption:} Let $\vec{w}^*\!=\!f_\Sigma(\vec{w}_1,\ldots,\vec{w}_L)$ be the exact output returned by the ideal functionality. Then, during decryption:
\begin{itemize}
    \item the simulator computes the partial decryptions of corrupted parties honestly using their known keys and randomness;
    \item for all but one honest client, the simulator replaces their partial decryptions by several fresh uniform polynomials from $R_q$;
    \item for one distinguished honest client $C_h\in \mathcal C_H$,~the simulator programs its partial decryptions so~that the final reconstructed vector equals the addition of (i) the prescribed functionality output $\vec{w}^*$, scaled by $\Delta$, and (ii) an adequate aggregated decryption noise $\vec{\tilde{e}}$ which is independent of the honest inputs.
\end{itemize}
\end{itemize}

\subsection{Hybrid simulator experiments}
\label{app:hybrid-sim-exp}

We now define a series of hybrid games~\cite{KL14} used to prove the indistinguishability against $\mathcal{Z}$ of the real and ideal executions; i.e., $\mathsf{IDEAL}_{f_\Sigma,\mathcal{S},\mathcal{Z}} \overset{c}{\equiv} \mathsf{REAL}_{\pi,\mathcal{A},\mathcal{Z}}$, where $\overset{c}{\equiv}$ denotes computational indistinguishability. The output of each game is just that of the environment/distinguisher $\mathcal{Z}$. In particular, we define the following four hybrid experiments.
\newline

\noindent$\triangleright$\ \textbf{$\mathsf{HYB}_0$:} This is the experiment $\mathsf{REAL}_{\pi,\mathcal{A},\mathcal{Z}}$; i.e., the real execution of the protocol of Table~\ref{tab:multi-sk-he} with real honest inputs.
\newline

\noindent$\triangleright$\ \textbf{$\mathsf{HYB}_1$:} Same as $\mathsf{HYB}_0$, except that for every honest client $C_i\in \mathcal C_H$, its ciphertexts\footnote{Updates $\vec{w}_i$ are encoded blockwise as $\{\mathsf{block}_{\vec{w}_i, j}\}_{j=1}^{N_{\mathsf{ctx}}}$ and~encrypted across $\mkern-1mu N_{\mathsf{ctx}}\mkern-1mu$ ciphertexts per client; see the blockwise encoding remark above.}
\begin{equation*}
\mathbf{ct}_{\vec{w}_i} := \{c_{i,j}\}_{j=1}^{N_{\mathsf{ctx}}}
= \{\mathsf{E_{mshe}.Enc}(a_j,\mathsf{sk}_i+r_i, \mathsf{block}_{\vec{w}_i, j})\}_{j=1}^{N_{\mathsf{ctx}}}
\end{equation*}
are modified as follows:
\begin{itemize}
    \item For each $C_i\in\mathcal{C}_H\setminus\{C_h\}$, the ciphertexts $\mathbf{ct}_{\vec{w}_i}$ are replaced by $\{\mathsf{E_{mshe}.Enc}(a_j,\mathsf{sk}_i+r_i,\vec{\mathbf{0}})\}_{j=1}^{N_{\mathsf{ctx}}}$.
    \item For the distinguished honest client $C_h$, the ciphertexts $\mathbf{ct}_{\vec{w}_h}$ are instead replaced by encryptions of $\vec{v}_h:=\lambda_h^{-1}\vec{w}_H^*$, namely
        \begin{equation*}
            \mathbf{ct}_{\vec{v}_h}:=\{c_{h,j}\}_{j=1}^{N_{\mathsf{ctx}}}=\{\mathsf{E_{mshe}.Enc}(a_j,\mathsf{sk}_h+r_h,\mathsf{block}_{\vec{v}_h,j})\}_{j=1}^{N_{\mathsf{ctx}}}.
        \end{equation*}
\end{itemize}

\noindent$\triangleright$\ \textbf{$\mathsf{HYB}_2$:} Same as $\mathsf{HYB}_1$, except that the ciphertexts $\mathbf{ct}_{\vec{v}_h}$ of the distinguished honest client $C_h$ are first replaced by encryptions $\{\mathsf{E_{mshe}.Enc}(a_j,\mathsf{sk}_h+r_h,\vec{\mathbf{0}})\}_{j=1}^{N_{\mathsf{ctx}}}$. The partial decryptions of honest parties are then modified as follows:
\begin{itemize}
    \item for every honest client except one distinguished party $C_h\in \mathcal C_H$, its partial decryptions are replaced by an independent and uniformly random vector of polynomials $\bm{u} \leftarrow R_q^{N_{\mathsf{ctx}}}$;
    \item the remaining honest client's partial decryptions are programmed so the final reconstructed value equals
    \begin{equation} \label{eq:programmed-final-output}
    \Delta \cdot \vec{w}^*+\sum_{C_i\in \mathcal C_A}\bm{\tilde{e}}_i+\sum_{C_i\in \mathcal C_H}\bm{\tilde{e}}_i,
    \end{equation}
    where each polynomial component of $\bm{\tilde{e}}_i$ for all $C_i\in \mathcal C_H$ is sampled independently from $\chi$. For corrupted clients, the terms $\bm{\tilde{e}}_i$ are those determined by their prescribed randomness, whereas the honest-client terms are unknown to the adversary. Hence, BFV decoding yields $\vec{w}^*$ under its correctness condition, while CKKS rescaling yields $\vec{w}^* + \Delta^{-1}\sum_{i=1}^L\bm{\tilde{e}}_i$. Moreover, since $|\mathcal{C}_H| \ge 1$, the aggregated approximation error contains at least one contribution that is fresh and independent from the adversary's point of view.
\end{itemize}

\noindent$\triangleright$\ \textbf{$\mathsf{HYB}_3$:} Same as $\mathsf{HYB}_2$, except that the honest encryptions of $\vec{\mathbf{0}}$ are replaced by uniformly random polynomial samples in $R_q^2$. By construction, $\mathsf{HYB}_3$ coincides with $\mathsf{IDEAL}_{f_\Sigma,\mathcal{S},\mathcal{Z}}$; i.e., the ideal execution generated by the simulator $\mathcal{S}$ under the corresponding exact or approximate output convention defined above.

\subsection{Indistinguishability arguments}

Let $N_{\mathsf{enc0}}$ denote the number of honest ciphertexts replaced in the transition $\mathsf{HYB}_0\to\mathsf{HYB}_1$, let $N_{\mathsf{enc1}}$ denote the number of distinguished honest ciphertexts replaced in $\mathsf{HYB}_1\to\mathsf{HYB}_2$, let $N_{\mathsf{pd}}$ denote the number of honest partial decryptions replaced in $\mathsf{HYB}_1\to\mathsf{HYB}_2$, and let $N_{\mathsf{rand}}$ denote the number of honest ciphertexts replaced in $\mathsf{HYB}_2\to\mathsf{HYB}_3$. By repeated application of Proposition~\ref{prop:ind-enc-dist}, each of the above terms admits a bound linear in the number of replaced samples and in $\epsilon_{\mathsf{RLWE}}(\lambda)$, which is negligible in $\lambda$ under the RLWE assumption~\cite{LPR13}. We now bound each term.

\begin{claim}
$\mathsf{HYB}_0 \overset{c}{\equiv} \mathsf{HYB}_1$.
\end{claim}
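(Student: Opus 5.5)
The plan is to show that the adversary's joint view in $\mathsf{HYB}_0$ and in $\mathsf{HYB}_1$ differs only by a negligible amount. The adversary holds the corrupted inputs, keys, shares $r_i$ and noises; it sees the honest ciphertext components $\{b_{i,j}\}$ over the public $\{a_j\}$ and the honest partial decryptions $\{h_{i,j}\}$. Two facts drive the argument.

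First, the hybrid preserves the honest aggregate. In both games
\begin{equation*}
\sum_{C_i\in\mathcal C_H}\lambda_i m_i=\vec w_H^*,
\end{equation*}
because in $\mathsf{HYB}_1$ every honest client except $C_h$ encrypts $\vec{\mathbf 0}$ and $C_h$ encrypts $\lambda_h^{-1}\vec w_H^*$. Consequently the aggregated ciphertext, and everything that depends only on it, has the same distribution in both games.

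Second, the adversary's information about the honest ciphertexts reduces to RLWE-type samples. The naive route, a direct appeal to Proposition~\ref{prop:ind-enc-dist} item~3 on the $b_{i,j}$ alone, is not enough: the partial decryption $h_{i,j}=\lambda_i a_j\mathsf{sk}_i-\lambda_i e_{i,j}+\tilde e_{i,j}$ is correlated with both $\mathsf{sk}_i$ and the encryption noise $e_{i,j}$. I therefore plan to reparametrize the honest messages. A direct computation gives
\begin{equation*}
b_{i,j}+\lambda_i^{-1}h_{i,j}=-a_j r_i+\Delta m_{i,j}+\lambda_i^{-1}\tilde e_{i,j},
\end{equation*}
so the pair $(b_{i,j},h_{i,j})$ is an invertible function of $(h_{i,j},\,-a_j r_i+\Delta m_{i,j}+\lambda_i^{-1}\tilde e_{i,j})$. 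The first coordinate $h_{i,j}$ is, up to the public scaling $\lambda_i$, a fresh RLWE sample under $\mathsf{sk}_i$ with public component $a_j$. Its noise $-\lambda_i e_{i,j}+\tilde e_{i,j}$ contains the independent fresh term $\tilde e_{i,j}$, and $\mathsf{sk}_i$ appears nowhere else in the view. Because the $a_j$ are pairwise distinct, Proposition~\ref{prop:ind-enc-dist} item~3 lets us replace all honest $h_{i,j}$ by uniform polynomials at cost $N_{\mathsf{pd}}\epsilon_{\mathsf{RLWE}}(\lambda)$; we do this in both games. This removes every dependence on $\mathsf{sk}_i$ and on $e_{i,j}$.

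What remains are the terms $-a_j r_i+\Delta m_{i,j}+\lambda_i^{-1}\tilde e_{i,j}$ for $C_i\in\mathcal C_H$. Here the masks $r_i$ come from the zero-sharing of Definition~\ref{def:zero-sharing-lsss}. Conditioned on the corrupted shares, the honest $r_i$ are uniform subject to one linear constraint:
\begin{equation*}
\sum_{C_i\in\mathcal C_H}\lambda_i r_i=-\sum_{C_i\in\mathcal C_A}\lambda_i r_i .
\end{equation*}
The argument then splits into two cases.
\begin{itemize}
\item If $|\mathcal C_H|=1$, there is no freedom in the masks. But then $m_h$ is itself determined by $\vec w_H^*$, so the two games coincide exactly.
\item If $|\mathcal C_H|\ge2$, the tuple $(-a_j r_i)_{i\in\mathcal C_H}$ is uniform over the ideal $a_jR_q$ subject to the same single linear constraint. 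Adding $\Delta m_{i,j}$ therefore yields a distribution that depends on the honest messages only through $\sum_{C_i\in\mathcal C_H}\lambda_i m_i$, which coincides in both games.
\end{itemize}

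The main obstacle is the masking step. It requires $a_j\cdot r_i$ to cover all of $R_q$, i.e., $a_j$ invertible; for a uniform $a_j$ this fails with non-negligible probability in typical NTT-friendly rings. I would first try to avoid this by using the RLWE structure once more: view $-a_j(\mathsf{sk}_i+r_i)+e_{i,j}$ jointly as an RLWE sample under the effective key $\mathsf{sk}_i+r_i$, and apply Proposition~\ref{prop:ind-enc-dist} items~2--3 only after the $h_{i,j}$ have been made uniform. In that order the ciphertexts are genuine fresh RLWE samples with distinct public components. They can then be switched from $m_{i,j}$ to the $\mathsf{HYB}_1$ plaintexts at cost $2N_{\mathsf{enc0}}\epsilon_{\mathsf{RLWE}}(\lambda)$, and the uniform $h_{i,j}$ can be restored at the same cost as before.

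Summing the losses with the triangle inequality bounds the advantage of distinguishing $\mathsf{HYB}_0$ from $\mathsf{HYB}_1$ by
\begin{equation*}
2(N_{\mathsf{pd}}+N_{\mathsf{enc0}})\,\epsilon_{\mathsf{RLWE}}(\lambda),
\end{equation*}
which is negligible under the RLWE assumption.
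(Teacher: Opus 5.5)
Your objection to the direct route is well-founded, and that route is exactly the paper's proof. The paper applies Item~2 of Proposition~\ref{prop:ind-enc-dist} to each of the $N_{\mathsf{enc0}}$ honest ciphertexts and sums the losses. Yet in both hybrids the honest $h_{i,j}$ are still the real ones, and a reduction receiving $b_{i,j}$ as its challenge (secret $\mathsf{sk}_i+r_i$, noise $e_{i,j}$) cannot simulate them, because they depend on $e_{i,j}$. Your decorrelation step is also sound, read correctly. The RLWE sample is $y_{i,j}:=a_j\mathsf{sk}_i-e_{i,j}$, with $h_{i,j}=\lambda_i y_{i,j}+\tilde e_{i,j}$. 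In the coordinates $(h_{i,j},z_{i,j})$, $z_{i,j}:=b_{i,j}+\lambda_i^{-1}h_{i,j}$, neither $\mathsf{sk}_i$ nor $e_{i,j}$ occurs outside $y_{i,j}$, and the reduction samples $\tilde e_{i,j}$ itself. (Note that $\tilde e_{i,j}$ is not independent of the rest of the view, since it also sits in $z_{i,j}$.) The gap comes afterwards. All message dependence now lives in $z_{i,j}=-a_jr_i+\Delta m_{i,j}+\lambda_i^{-1}\tilde e_{i,j}$, and for $|\mathcal C_H|\ge 2$ you argue perfect masking block by block. But $r_i$ is fixed at setup and reused in all $N_{\mathsf{ctx}}$ blocks, and in later rounds. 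If $a_1$ is a unit, then $z_{i,2}-a_2a_1^{-1}z_{i,1}=\Delta(m_{i,2}-a_2a_1^{-1}m_{i,1})+\lambda_i^{-1}(\tilde e_{i,2}-a_2a_1^{-1}\tilde e_{i,1})$, and the last term takes at most $(2B+1)^{2n}\ll q^n$ values. So the joint distribution over blocks is in general not a function of the aggregates $\sum_{C_i\in\mathcal C_H}\lambda_i m_{i,j}$ alone: the masks hide individual blocks only computationally. Perfect masking holds only for a single block in a single round with a unit $a_j$; non-invertibility is the lesser problem.

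The fallback does not repair this. In your own coordinates, after the first step $b_{i,j}=z_{i,j}-\lambda_i^{-1}u_{i,j}$ contains neither $\mathsf{sk}_i$ nor $e_{i,j}$. It is therefore no longer an encryption under $\mathsf{sk}_i+r_i$, and Item~2 has nothing to act on. If you instead keep $b_{i,j}$ genuine and make $h_{i,j}$ independent and uniform, RLWE under $\mathsf{sk}_i$ does not justify it. You would have to show that $\lambda_ib_{i,j}+h_{i,j}=-\lambda_ia_jr_i+\lambda_i\Delta m_{i,j}+\tilde e_{i,j}$ is pseudorandom. For all honest clients simultaneously, that hybrid is even distinguishable. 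Since $\sum_{C_i\in\mathcal C_H}\lambda_ir_i$ is fixed by the corrupted shares, the reconstructed $d_j$ would become uniform instead of $\Delta$ times the output block plus small noise.

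The missing ingredient is a computational step on the zero-sharing masks. Treat $\{-\lambda_ia_jr_i+\tilde e_{i,j}\}_j$ as multi-sample RLWE with uniform secret $r_i$ for the $|\mathcal C_H|-1$ honest masks that are free given the corrupted view. Then fix $C_h$'s coordinate through the reconstruction equation. This is the middle transition of the paper's $\mathsf{HYB}_1\to\mathsf{HYB}_2$ argument, and you would have to run it in both games. Even this is not a same-parameter RLWE reduction. When the output party is corrupted, $d_j$ reveals $\sum_{C_i\in\mathcal C_H}\tilde e_{i,j}$, which is correlated with the errors of those samples and is not smudged by $\tilde e_{h,j}$ (same width). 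The reduction must therefore absorb this hint, e.g.\ at the cost of a smaller effective noise.
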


\begin{proof}
The only difference between the two hybrids is that the honest encryptions of $\vec{w}_i$ are replaced by encryptions of $\vec{\mathbf{0}}$ for $C_i\in\mathcal{C}_H\setminus\{C_h\}$, and by encryptions of $\vec{v}_h$ for $C_h$, under the same secret keys $\mathsf{sk}_i+r_i$ and with the same public polynomials $a_j$. By Item 2 of Proposition~\ref{prop:ind-enc-dist}, each such replacement changes the adversarial view by at most $2\,\epsilon_{\mathsf{RLWE}}(\lambda)$. Therefore, by a standard hybrid argument~\cite{KL14} over the $N_{\mathsf{enc0}} = N_{\mathsf{ctx}}|\mathcal{C}_H|$ replaced honest ciphertexts,
\begin{equation*}
\Adv_{\mathcal Z}(\mathsf{HYB}_0,\mathsf{HYB}_1)
\le
2N_{\mathsf{enc0}} \cdot \epsilon_{\mathsf{RLWE}}(\lambda) = \ngl(\lambda).
\end{equation*}
\end{proof}

\begin{claim}
$\mathsf{HYB}_1 \overset{c}{\equiv} \mathsf{HYB}_2$.
\end{claim}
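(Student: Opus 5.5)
The plan is to decompose the transition $\mathsf{HYB}_1\to\mathsf{HYB}_2$ into a few intermediate hybrids and to bound each step with Proposition~\ref{prop:ind-enc-dist} or with a purely statistical argument. The first step is to rewrite what the adversary sees from each honest client $C_i$ in a single block $j$. The adversary sees the ciphertext component $b_{i,j}=-a_j(\mathsf{sk}_i+r_i)+e_{i,j}+\Delta m_{i,j}$ and the partial decryption $h_{i,j}=\lambda_i a_j\mathsf{sk}_i-\lambda_i e_{i,j}+\tilde e_{i,j}$. A one-line computation gives
\begin{equation*}
h_{i,j}=-\lambda_i b_{i,j}+c_{i,j},\qquad c_{i,j}:=-\lambda_i a_j r_i+\lambda_i\Delta m_{i,j}+\tilde e_{i,j}.
\end{equation*}
So the honest view is an invertible public transformation of $(b_{i,j},c_{i,j})$. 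Crucially, $\mathsf{sk}_i$ and $e_{i,j}$ now appear only in $b_{i,j}$. This is exactly what the oracle $\mathcal O_i(\cdot)$ buys us: the encryption noise is cancelled, so the partial decryption no longer carries a correlated copy of it.

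Second, I use the fact that $\mathsf{sk}_i$ and the $e_{i,j}$ appear only in the $b_{i,j}$. I replace every honest $b_{i,j}$ by a uniform $u_{i,j}\leftarrow R_q$, treating $-a_j r_i+\Delta m_{i,j}$ as a public shift as in Item~1 of Proposition~\ref{prop:ind-enc-dist}. This is legitimate because the $a_j$ are pairwise distinct under a fixed key, so Item~3 applies. The loss is at most $N_{\mathsf{ctx}}|\mathcal C_H|\,\epsilon_{\mathsf{RLWE}}(\lambda)$. After this step the plaintext inside $b_{h,j}$ no longer matters, so switching $C_h$'s encryptions of $\vec v_h$ to encryptions of $\vec{\mathbf 0}$ costs nothing beyond the Item~2 bound of $2N_{\mathsf{enc1}}\epsilon_{\mathsf{RLWE}}(\lambda)$. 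Alternatively, I can perform this switch first via Item~2 and then proceed.

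Third, I handle the $c_{i,j}$ terms. The zero-sharing of Definition~\ref{def:zero-sharing-lsss}, realized as in Appendix~\ref{app:aux-zero-sharing}, makes the honest shares $\{r_i\}_{C_i\in\mathcal C_H}$ uniform subject only to the single linear constraint that $\sum_{C_i\in\mathcal C_H}\lambda_i r_i$ equals a value fixed by the corrupted shares. Hence the $r_i$ with $i\neq h$ are independent and uniform from the adversary's viewpoint. For these $i$, the value $-\lambda_i a_j r_i+\tilde e_{i,j}$ is either an RLWE sample with uniform secret $r_i$, shifted publicly by $\lambda_i\Delta m_{i,j}$, or exactly uniform whenever $a_j$ is a unit. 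I would first try the statistical route, namely conditioning on all $a_j$ being units, which fails only with negligible probability for the NTT-friendly $q$ used. As a fallback, RLWE in normal form handles the same step for the $N_{\mathsf{pd}}$ replaced shares. In either case each $c_{i,j}$ is uniform and independent, so $h_{i,j}=-\lambda_i u_{i,j}+c_{i,j}$ is a fresh uniform polynomial, as $\mathsf{HYB}_2$ prescribes.

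Finally, the distinguished client's share $h_{h,j}$ is forced by the reconstruction identity $b_j+\sum_i h_{i,j}=\Delta w^*_j+\sum_i\tilde e_{i,j}$, where $b_j=\sum_i\lambda_i b_{i,j}$ is the aggregated ciphertext component. In the real-like hybrid this identity holds exactly. Therefore $h_{h,j}$ equals $\Delta w^*_j+\sum_i\tilde e_{i,j}-b_j-\sum_{i\neq h}h_{i,j}$, with fresh $\tilde e_{i,j}\leftarrow\chi$ for the honest clients. This is precisely the programmed value of Eq.~\eqref{eq:programmed-final-output}, and it depends on the honest inputs only through $\vec w^*$. One subtlety deserves care: the uniform $u_{i,j}$ introduced in the second step must then be reinterpreted as $\mathsf{HYB}_2$'s encryptions of zero. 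I handle this by applying Item~1 of Proposition~\ref{prop:ind-enc-dist} in reverse, which contributes another $N_{\mathsf{ctx}}|\mathcal C_H|\epsilon_{\mathsf{RLWE}}$ term. Summing all the steps with the triangle inequality gives a bound that is linear in $N_{\mathsf{enc1}}+N_{\mathsf{pd}}$ times $\epsilon_{\mathsf{RLWE}}(\lambda)$, which is negligible.

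I expect the third step to be the main obstacle: arguing rigorously that the $c_{i,j}$ of the non-distinguished honest clients are jointly uniform. The difficulty is twofold. The same $r_i$ is reused across all $N_{\mathsf{ctx}}$ blocks and possibly across rounds, and the $r_i$ are correlated through the zero-sharing constraint. The multi-block reuse is harmless, since it amounts to many RLWE samples with the same secret $r_i$ and distinct $a_j$. The correlation is absorbed by pushing all of it into $C_h$'s share, which is why I single out $C_h$ and program its share rather than randomizing it. In the edge case $|\mathcal C_H|=1$, there is no non-distinguished honest client, so only the programming step applies.
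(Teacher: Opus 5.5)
Your proposal follows the paper's proof essentially step for step. Your $c_{i,j}=\lambda_i b_{i,j}+h_{i,j}$ is the paper's $z_{i,j}$ under the bijection $T_i$, and your three replacements (RLWE under $\mathsf{sk}_i$ to randomize $b_{i,j}$, RLWE under $r_i$ to randomize $c_{i,j}$, then restoring the encryptions of zero) are the paper's three transitions; $C_h$ is handled through the same reconstruction identity. Your ordering, which also randomizes $b_{h,j}$ before the $r_i$-step, is if anything slightly cleaner than the paper's, since $b_{h,j}$ depends on $r_i$ through $r_h$.

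There is one delicate point, and the paper's second transition passes over it in exactly the same way. Once a single $r_i$ masks several blocks or rounds, your statistical route, which is rigorous when $r_i$ masks only one block, is no longer available. In the RLWE route, the error $\tilde e_{i,j}$ of the sample $c_{i,j}$ also enters the revealed output $d_j=\Delta w^*_j+\sum_{l}\tilde e_{l,j}$. A plain RLWE reduction therefore cannot simulate $d_j$ (equivalently $c_{h,j}$), so that step needs a hint-RLWE-type assumption, namely RLWE given the noisy hint $\sum_{C_l\in\mathcal C_H}\tilde e_{l,j}$ on the error, rather than RLWE alone.
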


\begin{proof}
In $\mathsf{HYB}_1$, the ciphertexts of all honest clients except $C_h$ already encrypt $\vec{\mathbf{0}}$, while those of the distinguished honest client encrypt $\vec{v}_h$. In $\mathsf{HYB}_2$, the latter are also replaced by encryptions of $\vec{\mathbf{0}}$ under the same secret key $\mathsf{sk}_h+r_h$ and with the same public polynomials $a_j$.

By construction of the client-local oracle, the decryption shares contributed by honest parties contain fresh noise independent of the aggregated ciphertext. However, since each ciphertext and its corresponding partial decryption share secret-key and error terms, replacing the corresponding partial decryption by a uniform polynomial must be justified with respect to their joint distribution.

For this purpose, for every $C_i\in\mathcal{C}_H\setminus\{C_h\}$ and $j\in\{1,\ldots,N_{\mathsf{ctx}}\}$, write the $j$-th zero-encryption as $\mathsf{ct}_{i,j}=(b_{i,j},a_j)$, where $b_{i,j}=-a_j(\mathsf{sk}_i+r_i)+e_{i,j}$, and let $h_{i,j}=\lambda_i a_j\mathsf{sk}_i-\lambda_i e_{i,j}+\tilde e_{i,j}$ denote its corresponding real partial decryption. Define $y_{i,j}:=-a_j\mathsf{sk}_i+e_{i,j}$ and consider the public bijection $T_i(b_{i,j},h_{i,j})=(z_{i,j},h_{i,j})$, with $z_{i,j}:=\lambda_i b_{i,j}+h_{i,j}=-\lambda_i a_jr_i+\tilde e_{i,j}$ and inverse $b_{i,j}=\lambda_i^{-1}(z_{i,j}-h_{i,j})$. Let $u_{i,j},v_{i,j}\leftarrow R_q$ be independent and uniform. Then, three consecutive RLWE replacements, respectively for the prescribed distributions of $\mathsf{sk}_i$, $r_i$, and the effective key $\mathsf{sk}_i+r_i$, give

\begin{equation*}
\begin{aligned}
(z_{i,j},-\lambda_i y_{i,j}+\tilde e_{i,j})
& \overset{c}{\approx} (z_{i,j},u_{i,j}) \\
& \overset{c}{\approx} (v_{i,j},u_{i,j}) \\
& \overset{c}{\approx} (\lambda_i b_{i,j}+u_{i,j},u_{i,j}).
\end{aligned}
\end{equation*}

In the first transition, when $y_{i,j}$ is replaced by a uniform polynomial independent of $a_j$, the polynomial component $-\lambda_i y_{i,j}+\tilde e_{i,j}$ remains uniform even conditioned on $(a_j,r_i,\tilde e_{i,j})$, and is therefore independent of $z_{i,j}$; the second transition replaces the secret-dependent component $z_{i,j}$ of the RLWE sample associated with $r_i$ by a uniform polynomial; and the third restores exactly the original ciphertext distribution using RLWE under $\mathsf{sk}_i+r_i$. Applying $T_i^{-1}$ therefore yields $(b_{i,j},h_{i,j})\overset{c}{\approx}(b_{i,j},u_{i,j})$.

Importantly, the above transformations are applied to the joint honest view rather than marginally. By the zero-sharing reconstruction equation, $r_h$ is determined by the remaining zero-shares $\{r_i\}_{i\neq h}$; conditioned on the corrupted parties' setup view, their zero-shares are fixed, and any remaining dependence on the non-distinguished honest zero-shares is preserved in the transformed coordinates. For each block $j$, let $d_j$ denote the prescribed reconstructed value in that block. Since $d_j=\sum_k z_{k,j}$, this is enforced by setting $z_{h,j}=d_j-\sum_{k\neq h}z_{k,j}$ and then programming $h_{h,j}=z_{h,j}-\lambda_h b_{h,j}$. Since the coordinate transformations are bijective, this programming requires no additional hybrid and preserves the prescribed reconstructed value in Eq.~\eqref{eq:programmed-final-output}.

The programmed partial decryptions of the distinguished honest client are therefore distributed consistently with the target simulated execution. As in $\mathsf{HYB}_2$, the simulator samples the fresh polynomial vectors $\bm{\tilde e}_i$ independently from $\chi$ for all $C_i\in\mathcal{C}_H$; consequently, the programmed final share reveals no additional information beyond the allowed output.

By Item 2 of Proposition~\ref{prop:ind-enc-dist}, replacing the $N_{\mathsf{enc1}}=N_{\mathsf{ctx}}$ ciphertexts of $C_h$ changes the adversarial view by at most $2N_{\mathsf{enc1}}\cdot\epsilon_{\mathsf{RLWE}}(\lambda)$. Moreover, by repeated application of Item 1 of Proposition~\ref{prop:ind-enc-dist}, and Item 3 for the corresponding collection-wise hybrid, each replacement of a non-distinguished honest partial decryption requires the three RLWE transitions above and therefore changes the joint adversarial view by at most $3\epsilon_{\mathsf{RLWE}}(\lambda)$. Therefore, accounting for $N_{\mathsf{pd}}=N_{\mathsf{ctx}}(|\mathcal{C}_H|-1)$ replaced honest partial decryptions yields
\begin{equation*}
\Adv_{\mathcal Z}(\mathsf{HYB}_1,\mathsf{HYB}_2) \le \left(2N_{\mathsf{enc1}}+3N_{\mathsf{pd}}\right) \cdot\epsilon_{\mathsf{RLWE}}(\lambda) = \ngl(\lambda).
\end{equation*}
\end{proof}

\begin{claim}
$\mathsf{HYB}_2 \overset{c}{\equiv} \mathsf{HYB}_3$.
\end{claim}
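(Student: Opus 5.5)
The only change between $\mathsf{HYB}_2$ and $\mathsf{HYB}_3$ is the honest ciphertexts. In $\mathsf{HYB}_2$ every honest client $C_i\in\mathcal{C}_H$ sends blocks $\mathsf{ct}_{i,j}=(b_{i,j},a_j)$ with $b_{i,j}=-a_j(\mathsf{sk}_i+r_i)+e_{i,j}$, encrypting $\vec{\mathbf{0}}$. In $\mathsf{HYB}_3$ each first component is replaced by an independent uniform $u_{i,j}\leftarrow R_q$, while the public $a_j$ are kept. I will run a hybrid over the $N_{\mathsf{rand}}=N_{\mathsf{ctx}}|\mathcal{C}_H|$ replaced ciphertexts. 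Each step swaps a single $b_{i,j}$ for a uniform polynomial and is bounded by Item 1 of Proposition~\ref{prop:ind-enc-dist}. The non-repetition requirement on the $a_j$ under a common key is exactly the hypothesis of Item 3, so the collection-wise form of the bound also holds. The target is $\Adv_{\mathcal Z}(\mathsf{HYB}_2,\mathsf{HYB}_3)\le N_{\mathsf{rand}}\,\epsilon_{\mathsf{RLWE}}(\lambda)=\ngl(\lambda)$.

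The point to settle is that, after the $\mathsf{HYB}_1\to\mathsf{HYB}_2$ transition, the rest of the adversarial view does not depend on the secret $\mathsf{sk}_i+r_i$ or the noise $e_{i,j}$ of the ciphertext being replaced. The RLWE reduction then only needs to simulate that view on its own. The relevant pieces of the view are as follows.
\begin{itemize}
\item The non-distinguished honest partial decryptions are already independent uniform $\bm{u}$ in $\mathsf{HYB}_2$.
\item The corrupted parties' messages use only their own keys and randomness, and their zero-shares $r_k$, which the reduction can sample.
\item The distinguished partial decryption is programmed as $h_{h,j}=z_{h,j}-\lambda_h b_{h,j}$, where $z_{h,j}=d_j-\sum_{k\ne h}z_{k,j}$ and $d_j$ is given by Eq.~\eqref{eq:programmed-final-output}.
\end{itemize}

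The reduction therefore receives an RLWE-or-uniform challenge $(\beta,a)$, sets $a_j:=a$ and $b_{i,j}:=\beta$, and computes every other message from public data, corrupted state, the fresh $\bm{\tilde e}_i$ and the output $\vec{w}^*$. The subtle case is $C_h$ itself, because its own $b_{h,j}$ appears inside $h_{h,j}$. Even so, $h_{h,j}$ is computed from $b_{h,j}$ only as a public affine function of it, so the reduction still needs no secret and simply plugs in the challenge value.

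The main obstacle is that secret keys are correlated through the zero-sharing: $r_h$ is determined by the other $r_k$, so the effective keys of honest clients are not independent. I will handle this by replacing the ciphertexts in an order that ends with $C_h$. Each non-distinguished honest client still has its own $\mathsf{sk}_i$, sampled independently from $R_3$. With $r_i$ fixed and known to the reduction, the sample $b_{i,j}+a_jr_i$ is a genuine RLWE sample under $\mathsf{sk}_i$ alone, so one RLWE query per block suffices.

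For $C_h$ I will argue the same way, using its independent $\mathsf{sk}_h$ together with $r_h$ computed from the already-fixed shares. The reduction thus uses RLWE under $\mathsf{sk}_h$, shifted by the public offset $-a_jr_h$. If a block index $j$ is shared across clients, each client's component is tied to its own independent key. The multi-sample RLWE hypothesis under a single key, with distinct $a_j$ across blocks, then covers the $N_{\mathsf{ctx}}$ blocks of one client.

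Once all replacements are done, the resulting experiment is, by construction, the simulator's output $\mathsf{IDEAL}_{f_\Sigma,\mathcal S,\mathcal Z}$. Summing the per-step losses gives the claimed negligible bound. Chaining the three claims with the triangle inequality then yields Theorem~\ref{th:main-simulation-security}, with BFV versus CKKS affecting only the output convention in Eq.~\eqref{eq:programmed-final-output}.
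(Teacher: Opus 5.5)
Your proposal is correct and follows the paper's argument: a hybrid over the $N_{\mathsf{rand}}=N_{\mathsf{ctx}}|\mathcal{C}_H|$ honest zero-encryptions, each replaced by $(u,a_j)$ via Item~1 of Proposition~\ref{prop:ind-enc-dist}, which yields $\Adv_{\mathcal Z}(\mathsf{HYB}_2,\mathsf{HYB}_3)\le N_{\mathsf{rand}}\,\epsilon_{\mathsf{RLWE}}(\lambda)$. The paper states this in three lines, while you also make the reduction explicit: the programmed share $h_{h,j}$ is a public affine function of the $b_{k,j}$, and you embed the challenge as RLWE under the independent $\mathsf{sk}_i$ with $r_i$ known, which sidesteps the zero-share correlation (so ending the order with $C_h$ is harmless but not actually needed).
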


\begin{proof}
The only difference is that the honest encryptions of $\vec{\mathbf{0}}$ are replaced by uniformly random ciphertexts in $R_q^2$. By Item 1 of Proposition~\ref{prop:ind-enc-dist}, each replacement of an honest encryption of $\vec{\mathbf{0}}$ by a uniform ciphertext changes the adversarial view by at most $\epsilon_{\mathsf{RLWE}}(\lambda)$. Therefore, by a standard hybrid argument over the $N_{\mathsf{rand}} = N_\mathsf{ctx}|\mathcal{C}_H|$ replaced ciphertexts,
\begin{equation*}
\Adv_{\mathcal Z}(\mathsf{HYB}_2,\mathsf{HYB}_3)
\le
N_{\mathsf{rand}}\epsilon_{\mathsf{RLWE}}(\lambda) = \ngl(\lambda).
\end{equation*}
\end{proof}

\subsection{Conclusion of the proof of Theorem~\ref{th:main-simulation-security}}
\label{app:conc-proof}

Combining the above bounds via the triangle inequality~\cite{KL14}, we obtain:
\begin{equation*}
\begin{aligned}
\Adv_{\mathcal Z}(\mathsf{HYB}_0, \mathsf{HYB}_3)
&\le \Adv_{\mathcal Z}(\mathsf{HYB}_0, \mathsf{HYB}_1) \\
&\quad + \Adv_{\mathcal Z}(\mathsf{HYB}_1, \mathsf{HYB}_2) \\
&\quad + \Adv_{\mathcal Z}(\mathsf{HYB}_2, \mathsf{HYB}_3) \\
&\le \bigl(2N_{\mathsf{enc0}} + 2N_{\mathsf{enc1}} \\
&\quad + 3N_{\mathsf{pd}} + N_{\mathsf{rand}}\bigr) \cdot \epsilon_{\mathsf{RLWE}}(\lambda) \\
&\le \ngl(\lambda).
\end{aligned}
\end{equation*}

Since $\mathsf{HYB}_0$ corresponds to the real execution and $\mathsf{HYB}_3$ to the ideal execution produced by the simulator $\mathcal S$, \emph{the protocol $\pi_{f_\Sigma}$ provides simulation-based input privacy for the aggregation functionality $f_\Sigma$ in the semi-honest model, with correctness interpreted according to the corresponding exact or approximate output convention.}

\textbf{Explicit bounds:} We now make the dependence on the number of honest clients and the RLWE distinguishing advantage explicit. If we write $L_H := |\mathcal C_H|$ for the number of honest clients, we obtain:
\begin{itemize}
\item \textbf{$\mathsf{HYB}_0\!\to\!\mathsf{HYB}_1$:} $N_{\mathsf{enc0}}\!=\!L_H N_{\mathsf{ctx}}$ ciphertexts replaced;
\item \textbf{$\mathsf{HYB}_1\!\to\!\mathsf{HYB}_2$:} $N_{\mathsf{enc1}}\!=\!N_{\mathsf{ctx}}$ ciphertexts and $N_{\mathsf{pd}}\!=\!(L_H - 1)N_{\mathsf{ctx}}$ partial decryptions replaced;
\item \textbf{$\mathsf{HYB}_2\!\to\!\mathsf{HYB}_3$:} $N_{\mathsf{rand}}\!=\!L_H N_{\mathsf{ctx}}$ ciphertexts replaced.
\end{itemize}

Substituting into the bound above yields:
\begin{equation*}
\Adv_{\mathcal Z}(\mathsf{HYB}_0, \mathsf{HYB}_3)
\le (6L_H - 1)\,N_{\mathsf{ctx}}\,
\epsilon_{\mathsf{RLWE}}(\lambda).
\end{equation*}

Using the trivial bound $L_H \le L$ and recalling that 
$N_{\mathsf{ctx}} = \left\lceil \frac{\#\mathsf{ModelParams}}{n(\lambda)} \right\rceil$, we obtain:
\begin{equation*}
\Adv_{\mathcal Z}(\mathsf{HYB}_0, \mathsf{HYB}_3)
\le
6L\left\lceil \frac{\#\mathsf{ModelParams}}{n(\lambda)} \right\rceil
\epsilon_{\mathsf{RLWE}}(\lambda).
\end{equation*}

\textbf{Security discussion:} Importantly, the distinguishing advantage of the overall simulation reduces directly to the RLWE distinguishing advantage $\epsilon_{\mathsf{RLWE}}(\lambda)$, up to a multiplicative factor independent of any additional noise parameters. In contrast to the protocol of~\cite{AJLTVW12}, our construction does not rely on $\lambda$-dependent large smudging noise to decorrelate the simulated decryption shares. Instead, the client-local oracle ensures that fresh decryption noise is injected independently of the ciphertexts, allowing all hybrid transitions to be justified solely under the RLWE assumption~\cite{LPR13}. As a result, the overall security follows directly from the hardness of RLWE, without introducing additional approximation losses stemming from smudging parameters. In particular, the CKKS approximation affects only correctness and the form of the output, while the simulation-based privacy bound remains negligible under RLWE.

% Biography
%\bio{}
% Here goes the biography details.
%\endbio

%\bio{pic1}
% Here goes the biography details.
%\endbio

\end{document}